\documentclass[11pt]{article}
\usepackage[utf8]{inputenc}
\usepackage{style}
\usepackage{framed}
\usepackage[parfill]{parskip}
\usepackage{setspace}
\usepackage{hyperref}

\allowdisplaybreaks

\usepackage[utf8]{inputenc}
\usepackage{dsfont}
\usepackage{framed}
\usepackage{xspace}

\newcommand{\NP}{{{\mathrm{NP}}}}
\newcommand{\FPT}{{{\mathrm{FPT}}}}
\newcommand{\W}{{{\mathrm{W}}}}
\newcommand{\N}{{{\mathbb{N}}}}
\newcommand{\R}{\mathbb{R}}
\newcommand{\ind}{\mathds{1}}
\newcommand{\leqs}{\leqslant}
\newcommand{\geqs}{\geqslant}

\DeclareMathOperator{\cost}{cost}
\DeclareMathOperator*{\opt}{OPT}

\DeclareMathOperator*{\Ex}{\mathbb{E}}
\newcommand{\shf}{\text{shift}}
\newcommand{\fr}{\text{frac}}
\newcommand{\bad}{\text{BAD}}
\newcommand{\vccubic}{\textsc{Cubic Vertex Cover}\xspace}
\newcommand{\poly}{\text{poly}}
\renewcommand{\leq}{\leqs}
\renewcommand{\geq}{\geqs}

\newcommand{\cS}{\mathcal{S}}
\newcommand{\cR}{\mathcal{R}}
\newcommand{\bs}{\mathbf{s}}
\newcommand{\bw}{\mathbf{w}}
\newcommand{\Vi}{V_{\text{int}}}

\newcommand{\scr}{\mathrm{sc}}
\newcommand{\scrdiff}{\ensuremath{\mathit{sum\hbox{-}diff}}}
\newcommand{\diffmax}{\ensuremath{\mathit{max\hbox{-}diff}}}

\newcommand{\hso}{\text{half-seq1}}
\newcommand{\hst}{\text{half-seq2}}
\newcommand{\copeland}{\text{Copeland}^{\alpha}}

\newtheorem{theorem}{Theorem}
\newtheorem{lemma}[theorem]{Lemma}
\newtheorem{definition}[theorem]{Definition}

\newtheorem{corollary}[theorem]{Corollary}

\newtheorem{proposition}[theorem]{Proposition}

\newtheorem{claim}[theorem]{Claim}

\title{Approximation and Hardness of Shift-Bribery
  \thanks{An extended abstract of this work appears in AAAI'19.}}

\author{
Piotr Faliszewski\thanks{
AGH University, Krakow, Poland.
Email: faliszew@agh.edu.pl}
\and
Pasin Manurangsi\thanks{
UC Berkeley, USA.
Email: pasin@berkeley.edu}
\and
Krzysztof Sornat\thanks{
University of Wrocław, Poland.
Email: krzysztof.sornat@cs.uni.wroc.pl}
}

\usepackage{nicefrac}

\begin{document}

\maketitle

\begin{abstract}
  In the \textsc{Shift-Bribery} problem we are given an election, a
  preferred candidate, and the costs of shifting this preferred
  candidate up the voters' preference orders. The goal is to find such
  a set of shifts that ensures that the preferred candidate wins the
  election.  We give the first polynomial-time approximation scheme
  for the \textsc{Shift-Bribery} problem for the case of positional
  scoring rules, and for the Copeland rule we show strong
  inapproximability results.
\end{abstract}

\section{Introduction}

We provide approximation algorithms and
inapproximability results for the \textsc{Shift-Bribery} problem,
introduced by Elkind et al.~\cite{EFS09} to capture
the idea of campaigning in elections.  Briefly put, we are given an
election where each voter ranks the candidates from the most to the
least appealing one, and our goal is to ensure that a given preferred
candidate becomes the winner. To this end, we can shift this candidate
up within the voters' preference orders, but each such shift comes
with a price (which, for example, measures the difficulty of
convincing the voter that our candidate is more appealing than the
voter originally thought).  Naturally, we are interested in finding as
cheap a solution as
possible.

While the \textsc{Shift-Bribery} problem was introduced in the context
of buying votes and in the context of campaigning, bribery problems
have a number of other applications (see, e.g., the works on the
margin of victory problem~\cite{MRS11,Car11,Xia12} and on measuring
candidate success~\cite{FST17}; see also the original paper of
Faliszewski et al.~\cite{FaliszewskiHH09} and the survey of
Faliszewski and Rothe~\cite{FR16}). For example, a Formula~1 season
consists of about 20 races, where each race can be seen as a voter
ranking the candidates (the drivers) in the order in which they
finished the race. For each finishing position, there is an associated
number of points and the driver who collects most points becomes the
world champion (i.e., this ``election'' uses a \emph{positional
  scoring rule} as a voting rule). We can use the
\textsc{Shift-Bribery} problem to measure how close each driver was to
winning the world championship. For example, we can set the price for
shifting a driver up by some $t$ positions in a given race to be the
difference between the finishing times of the driver and whoever
ranked $t$ positions higher. Then, the cheapest shift bribery
corresponds to the smallest speed-up that the driver needed to become
the world champion. As argued by Faliszewski et al.~\cite{FST17}, such
values can be far more informative than the score differences between
the drivers.  Bribery problems also appear in the contexts of
lobbying~\cite{CFRS07,BEFGMR14}, rating systems~\cite{GST18}, or in
combinatorial domains~\cite{BEER15}.

With the exception of a few simple voting rules, such as the
$k$-Approval family of rules and the Bucklin rule,
\textsc{Shift-Bribery} tends to be $\NP$-hard (see the works of Elkind
et al.~\cite{EFS09} and Schlotter et al.~\cite{SFE17}). Indeed, this
is the case, e.g., for Borda, Copeland, Maximin~\cite{EFS09} and
various elimination-based rules~\cite{MNRS18}.  Yet, in many cases it
can be solved quite effectively.  For example, for the case of Borda
there is a polynomial-time 2-approximation algorithm of Elkind et
al.~\cite{EFS09,EF10} and several $\FPT$ algorithms of Bredereck et
al.~\cite{BCFNN16}. On the other hand, for the case of Copeland,
\textsc{Shift-Bribery} is $\W[1]$-hard for many natural
parameters~\cite{BCFNN16}\footnote{One notable exception is the
  parameterization by the number of candidates~\cite{KMM17}.} and the
best known polynomial-time approximation algorithm has linear
approximation ratio~\cite{EF10}.

In fact, the difference between the Borda rule (and, in general, the
positional scoring rules) and the Copeland rule is even more
striking. We show that the former can be solved nearly perfectly in
polynomial time, whereas for the latter we give strong
inapproximability results:

\begin{enumerate}
\item Our main contribution is the first polynomial-time approximation
  scheme (PTAS) for \textsc{Shift-Bribery} for positional scoring
  rules (Theorem~\ref{thm:ptas-borda-main}).
  In fact, our algorithm works even for the case where the
  scoring vectors are different for different voters. Our algorithm
  uses linear programming and, in particular, basic solutions of
  linear programs. For the case of unit prices (i.e., for the case
  where each unit shift has the same cost) we even obtain an EPTAS,
  i.e., a PTAS for which the non-polynomial factors in the running
  time depend on the approximation ratio
  only (Theorem~\ref{thm:eptas-borda-unit}).
  We also show a simple combinatorial PTAS for this case
  (Theorem~\ref{thm:simple-ptas-borda}).

\item For the case of the Copeland rule, we give a reduction that
  preserves approximation ratios up to some polynomial from the
  \textsc{Densest $k$-Subgraph} (\textsc{D$k$S}) problem to
  \textsc{Shift-Bribery} (Theorem~\ref{thm:copeland}).
  Since it is generally believed that Densest
  $k$-Subgraph is hard to approximate up to a polynomial
  factor~\cite{BCVGZ12,Man17}, the same beliefs transfer to the case
  of Copeland-\textsc{Shift-Bribery}. In particular, this gives an 
  almost-polynomial ratio hardness of approximating 
  Copeland-\textsc{Shift-Bribery} under the ETH and Gap-ETH assumptions
  (Corollary~\ref{cor:copeland}).
  We also show that under Gap-ETH, Copeland-\textsc{Shift-Bribery}
  does not admit an $\FPT$ approximation scheme for the
  parameterization by the number of unit shifts
  (Theorem~\ref{thm:copeland-unit-fpt}) or by the number of affected
  voters (Theorem~\ref{thm:copeland-unit-fpt-voter}), even for the
  case of unit prices. This is in contrast to parameterization by the
  number of voters or by the number of candidates, for which $\FPT$
  approximation schemes are known to exist~\cite{BCFNN16}.

\end{enumerate}

Together with the results of Elkind et al.~\cite{EFS09,EF10}
and Bredereck et
al.~\cite{BCFNN16}, our work gives a nearly complete view of the
complexity and approximability of \textsc{Shift-Bribery} for
positional scoring rules and the Copeland rule.

\section{Preliminaries}

For each positive integer $r \in \N$, we write $[r]$ to denote the
set $\{1, \dots, r\}$, and by $[0]$ we mean  the empty set. For an
event $X$, we write $\ind[X]$ to denote the indicator 
function 
such that $\ind[X] = 1$ if $X$ occurs and
$\ind[X] = 0$ otherwise.

\paragraph{Elections.}
An \emph{election} $E = (C, V, \{\succ_v\}_{v \in V})$ consists of a
set $C$ of $m$ candidates, a set $V$ of $n$ voters, and the collection
$\{\succ_v\}_{v \in V}$ of the voters' preference orders.  For each
voter $v$, preference order $\succ_v$ gives $v$'s ranking of the
candidates from the most to the least desirable one. 
For a preference order $\succ_v$, we write 
$\pi_v \colon [m] \to C$ to denote a function such that
$\pi_v(1) \succ_v \pi_v(2) \succ_v \cdots \succ_v \pi_v(m)$ (in other
words, $\pi_v(i)$ is the candidate that $v$ ranks on the $i$-th
position). Depending on the context,
we either specify voters' preference orders directly or
via the $\pi$ functions.

Given two candidates $c, c' \in C$, we write $V_{c \succ c'}$ to
denote the set of all voters $v \in V$ that prefer $c$ over~$c'$.

\paragraph{Voting Rules.}
A \emph{voting rule} $\cR$ is a function that for each election
$E = (C, V, \{\succ_v\}_{v \in V})$ outputs the set
$\cR(E) \subseteq C$ of this election's tied winners.  We focus on the
class of positional scoring rules and on the Copeland rule.

Consider a setting with $m$ candidates.  Under a \emph{positional
scoring rule} $\cR_\bw$, we have a vector
$\bw = (w_1, \ldots, w_m) \in \R^m$ of point values associated with
the candidate positions in the preference orders.
Each voter gives each candidate the number of points associated with
this candidate's position, and the candidates with the highest total
score are the winners.  For example, the Plurality rule uses vectors
of the form $(1,0, \ldots, 0)$, the $k$-Approval rule uses vectors
with $k$ ones followed by $m-k$ zeros, and the Borda rule uses vectors
of the form $(m-1, \ldots, 2,1,0)$.

Given an election $E = (C,V,\{\succ_v\}_{v \in V})$, we sometimes
speak of a positional scoring rule $\cR_{(\bw^v)_{v \in V}}$, where
each voter has a separate scoring vector
$\bw^v = (w^v_1, \ldots, w^v_m)$.  This is particularly useful, for
example, to model weighted elections, where each voter $v$ has a
positive integer weight $\omega_v$ and is treated as $\omega_v$ copies
of a unit-weight voter; then, instead of using some rule $\cR_\bw$ and
incorporating weights directly into our algorithms, we can use rule
$\cR_{(\omega_v\cdot\bw)_{v \in V}}$. As an added benefit, our
algorithms become more general.  We will sometimes use
$\Delta w^v_\ell$ as a shorthand for $w^v_\ell - w^v_{\ell + 1}$.

The Copeland rule is based on the idea of pairwise elections among the
candidates. Let $E$ be an election and let $c, c'$ be two
candidates. By $N_E(c,c')$ we mean the number of voters who prefer $c$
over $c'$, i.e., $|V_{c \succ c'}|$.
We say that a candidate $c$ \emph{wins pairwise election}
against $c'$ if $N_E(c, c') > N_E(c', c)$. Similarly, we say that $c$
ties (resp. loses) pairwise election against $c'$ if
$N_E(c, c') = N_E(c', c)$ (resp. $N_E(c, c') < N_E(c', c)$).

For $\alpha \in [0, 1]$, the \emph{$\copeland$} rule assigns to each
candidate $c$ one point for each candidate with whom $c$ wins a
pairwise election, and $\alpha$ points for each candidate with whom
$c$ ties.  Formally, each candidate $c$ receives
$|\{c' \in C \setminus \{c\} : N_E(c, c') > N_E(c', c)\}| + \alpha
|\{c' \in C \setminus \{c\} : N_E(c, c') = N_E(c', c)\}|$
points. The winners are all the candidates with the maximum score.

For a voting rule $\cR$, we write $\scr_{E, \cR}(c)$ to denote the
score that candidate $c$ receives in election $E$.  We sometimes drop
the subscript $\cR$ when it is clear from the context.

\paragraph{Shift-Bribery.}
A \textsc{Shift\--Bribery} instance $I = (E, p,\psi)$
consists of an election $E = (C, V,$ ${\{\succ_v\}_{v \in V}})$, a
preferred candidate $p \in C$, and a collection $\psi = \{\psi_v\}_{v \in V}$
of the voters' price functions.
Each voter $v$ has the price function
$\psi_v \colon \{0\} \cup [\pi_v^{-1}(p) - 1] \to \R^+_0 \cup
\{\infty\}$ and $\psi_v(t)$
specifies the cost of shifting the preferred candidate forward by $t$
positions in $v$'s preference order.
We require that $\psi_v(0) = 0$ and that the function is
non-decreasing
($\psi_v(0) \leqs \psi_v(1) \leqs \cdots \leqs \psi_v(\pi^{-1}_v(p) -
1)$).
If $\psi_v(t) = \infty$ for some voter $v$ and value $t$, then it is
impossible to shift the preferred candidate by $t$ or more positions
in the preference order of $v$.
For an instance $I$, by $\psi^{\max}(I)$ we denote the highest
non-infinity price that occurs within $I$.
We write $\Delta\psi_v(\ell)$ and $|I|$ as
shorthands for $\psi_v(\ell) - \psi_v(\ell - 1)$ and $mn$,
respectively.

A \emph{shift action} $\bs = (s_v)_{v \in V}$ for an instance
$I = (E, p,\psi)$ of \textsc{Shift-Bribery} is a
vector of non-negative integers such that for each voter $v$ we have
$s_v < \pi_v^{-1}(p)$. Intuitively, this vector specifies for each
voter by how many positions we should shift the preferred candidate.
We say that shift action $\bs$ consists of $\sum_{v \in V} s_v$
\emph{unit shifts} and we define its cost to be
$\cost_I(\bs) = \sum_{v \in V} \psi_v(s_v)$.
We denote the election that results from applying $\bs$ to $E$ by
$\shf(E, \bs)$; formally, $\shf(E, \bs)$ has the same voters and
candidates as $E$ but for each voter $v$, if $v$'s preference order
induced function $\pi_v$ in $E$ then in $\shf(E, \bs)$ it induces
function $\pi'_v$ such that:
\begin{align*}
\pi'_v(j) =
\begin{cases}
\pi_v(j) &\text{ if } j < \pi_v^{-1}(p) - s_v,  \\
p & \text{ if } j = \pi_v^{-1}(p) - s_v, \\
\pi_v(j - 1) & \text{ if } \pi_v^{-1}(p) - s_v < j \leqs \pi_v^{-1}(p), \\
\pi_v(j) &\text{ if } j > \pi_v^{-1}(p). 
\end{cases}
\end{align*}

Let $\cR$ be a voting rule and let $I$ be a \textsc{Shift-Bribery}
instance with election $E$ and preferred candidate $p$.  
Shift action $\bs$ is \emph{successful} for $I$ under $\cR$ if $p$ is
an $\cR$-winner in $\shf(E, \bs)$, i.e., if
$p \in \cR(\shf(E, \bs))$.  \textsc{$\cR$-\textsc{Shift-Bribery}} is
an optimization problem where, given a \textsc{Shift-Bribery}
instance $I$, we ask for a successful shift action with the lowest
cost. We write $\opt(I)$ to denote this lowest cost.

\paragraph{Special Price Functions.}
There are two particularly interesting families of price functions.  A
\emph{unit price} function defines the cost of each unit shift to be
one, i.e., if $\psi_v$ is a unit price function then $\psi_v(\ell) =
\ell$ for each legal shift value $\ell$.  An \emph{all-or-nothing
  price} function is such that the cost of shifting the preferred
candidate is the same, irrespective by how many positions we shift him
or her.  Formally, if $\psi_v$ is an all-or-nothing price function
then there is a value $c_v$ such that $\psi_v(\ell) = c_v$ for all
positive integers $\ell$ that represent legal shifts (and, of course,
$\psi_v(0) = 0$).  An instance $I = (E, p, \psi)$ has \emph{$(1,
  \infty)$-all-or-nothing prices} if it has all-or-nothing price
functions and for each voter $v$ the value $c_v$ is in $\{1,
\infty\}$. Given such an instance $I$, we define its \emph{width} to
be the maximum of $\pi^{-1}_v(p) - 1$ over all $v \in V$ such that
$c_v = 1$. In other words, it is the maximum number of unit shifts
possible to perform within a single vote by paying a unit of
price. Another family of all-or-nothing prices that we will discuss is
the family of \emph{uniform-all-or-nothing prices}, for which $c_v =
1$ for all $v \in V$.

\paragraph{Linear Programming.}
In the \textsc{Linear Programming} problem we are given
an $m \times n$ matrix $A$, an $m$-dimensional
column vector $b$, an $n$-dimensional column vector $c$, and we ask
for an $n$ dimensional column vector $x$ that minimizes the value
$c^T x$ subject to the condition that $Ax \geqs b$. 
A \emph{basic solution}
to such a problem is a solution $x \in \R^n$ such that there are $n$
linearly independent rows $a_i$ of $A$ with $a_i x = b_i$. It is known
that when $\{x \in \R^n : Ax \geqs b\}$ is feasible and bounded, there always is
a basic solution that achieves the optimum, and it
can be
computed up to an arbitrary error in polynomial time
(see, e.g.,~\cite{LRM-book} for the use of basic solutions in approximation
algorithms).

\section{Borda Rule}\label{sec:borda}

We now move on to our results. We first show approximation schemes for
the case of the Borda rule, mostly focusing on the case of unit
prices.
We start with Borda because it is one of the simplest rules,
for which we can present our ideas most clearly, and because it is a very
practical rule (in particular, relevant to various competitions).

\subsection{Initial Observations}
We first define two values that will guide our algorithms, and we explain
their usefulness.

\begin{definition} For an instance $I = (E, p, \psi)$  of
  Borda-\textsc{Shift-Bribery} and a non-negative integer  $k$, we define:
  \begin{align*}
    \diffmax(I) &= \max_{c \in C}( \scr_E(c) - \scr_E(p)), \text{ and} \\
    \scrdiff(I,k) &= \sum_{c \in C} \max\{0, \scr_E(c) - \scr_E(p) -  k\}.
  \end{align*}
\end{definition}

The former value gives the score difference between the preferred
candidate and his or her strongest opponent, whereas the latter
measures the total number of points that the non-preferred candidates
need to lose, provided that the preferred one gains $k$ points.

Elkind et al.~\cite{EFS09} note that given an instance $I$ of
Borda-\textsc{Shift-Bribery}, if $K$ is the smallest number of unit
shifts in an optimal solution, then
$\diffmax(I)/2 \leqs K \leqs \diffmax(I)$. Indeed, if the preferred
candidate gains $\diffmax(I)$ points then he or she certainly matches
his or her strongest opponent.  On the other hand, the preferred
candidate needs at least $\diffmax(I)/2$ unit shifts because each of
them decreases the score difference between him or her and the
strongest opponent by at most two.  However, it turns out that
$\scrdiff(I,k)$ provides an even more useful bound.

\begin{lemma} Let $I = (E, p, \psi)$ be an instance of
  Borda-\textsc{Shift-Bribery}, let $\bs$ be a successful shift action
  for $I$, and let $k_{\bs}$ be the number of unit shifts within
  $\bs$.  Then, $\scrdiff(I,k_{\bs}) \leqs k_{\bs}$.
\end{lemma}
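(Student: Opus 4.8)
The plan is to track exactly how every candidate's Borda score changes when the shift action $\bs$ is applied to $E$, and then to combine this with the single inequality ``$p$ wins $\shf(E,\bs)$''. First I would record two bookkeeping facts about the election $\shf(E,\bs)$. Fix a voter $v$ and let $\ell = \pi_v^{-1}(p)$ be $p$'s position in $\succ_v$. Applying $\bs$ moves $p$ from position $\ell$ to position $\ell - s_v$, so under Borda $p$ gains exactly $s_v$ points from voter $v$; summing over $v$ gives $\scr_{\shf(E,\bs)}(p) = \scr_E(p) + k_{\bs}$. At the same time, the $s_v$ candidates originally ranked in positions $\ell - s_v, \dots, \ell - 1$ are each pushed back by exactly one position (losing one Borda point from voter $v$) while every other candidate keeps its position; hence for every voter $v$ we have $\sum_{c \in C \setminus \{p\}} \bigl(\scr_E^{(v)}(c) - \scr_{\shf(E,\bs)}^{(v)}(c)\bigr) = s_v$, where $\scr^{(v)}$ denotes the points contributed by $v$. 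Summing over all voters yields the identity $\sum_{c \in C \setminus \{p\}} \bigl(\scr_E(c) - \scr_{\shf(E,\bs)}(c)\bigr) = \sum_{v \in V} s_v = k_{\bs}$, and note that each summand is nonnegative since no candidate other than $p$ ever gains a point.

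Next I would invoke successfulness of $\bs$: since $p \in \cR(\shf(E,\bs))$, for every $c \in C$ we have $\scr_{\shf(E,\bs)}(c) \leq \scr_{\shf(E,\bs)}(p) = \scr_E(p) + k_{\bs}$. Rearranging gives $\scr_E(c) - \scr_{\shf(E,\bs)}(c) \geq \scr_E(c) - \scr_E(p) - k_{\bs}$, and combining this with the nonnegativity noted above, we get for each $c \in C \setminus \{p\}$ that $\scr_E(c) - \scr_{\shf(E,\bs)}(c) \geq \max\{0,\, \scr_E(c) - \scr_E(p) - k_{\bs}\}$. For $c = p$ the right-hand side is $\max\{0, -k_{\bs}\} = 0$, so this bound trivially contributes nothing. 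Summing over all $c \in C$ and applying the total-loss identity from the first step gives
\[
\scrdiff(I, k_{\bs}) \;=\; \sum_{c \in C} \max\{0,\, \scr_E(c) - \scr_E(p) - k_{\bs}\} \;\leq\; \sum_{c \in C \setminus \{p\}} \bigl(\scr_E(c) - \scr_{\shf(E,\bs)}(c)\bigr) \;=\; k_{\bs},
\]
which is the claimed inequality.

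I do not expect a real obstacle here; the only step that needs care is the exact score accounting — namely the observation that a single forward shift of $p$ by $s_v$ positions in $\succ_v$ costs each of the $s_v$ overtaken candidates exactly one point and costs every remaining candidate nothing, so that the total number of points lost by the non-preferred candidates equals precisely $\sum_v s_v = k_{\bs}$. Everything after that is just rearranging the winner inequality and summing.
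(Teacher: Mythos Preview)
Your proposal is correct and follows essentially the same approach as the paper's proof: both compute that $p$'s score rises by exactly $k_{\bs}$, observe that each candidate $c$ must therefore have lost at least $\max\{0,\scr_E(c)-\scr_E(p)-k_{\bs}\}$ points, and bound the sum of these losses by the total number $k_{\bs}$ of unit shifts. The paper compresses the accounting into two sentences, whereas you spell out the per-voter total-loss identity explicitly; the underlying argument is identical.
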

\begin{proof}
  After applying $\bs$, the score of $p$ is exactly
  $\scr_E(p) + k_{\bs}$. Thus each candidate $c \in C$ must have lost
  at least $\max\{0, \scr_E(c) - \scr_E(p) - k_{\bs}\}$ points. This
  indeed means that
  $k_{\bs} \geqs \sum_{c \in C} \max\{0, \scr_E(c) - \scr_E(p) -
  k_{\bs}\} = \scrdiff(I,k_{\bs})$, as desired.
\end{proof}

We will also make use of the following subroutine,
which is based on a simple dynamic program.
(Note that it is not restricted to unit prices.)

\begin{lemma}\label{lem:dp}
  There exists an algorithm that given an instance $I = (E, p, \psi)$
  of Borda-\textsc{Shift-Bribery}, a subset
  $C' = \{c_1, \ldots, c_t\}$ of $t$ candidates from $E$, and a vector
  $(s_1, \ldots, s_t)$ of non-negative integers, computes a minimum cost shift
  action that ensures that each candidate $c_i$ loses at least $s_i$
  points. The algorithm runs in polynomial time with respect to
  $|I|+\prod_{i=1}^t(s_i+1)$.
\end{lemma}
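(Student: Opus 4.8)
The plan is to set up a dynamic program over the candidates $c_1, \ldots, c_t$ together with the ``deficit vector'' recording, for each candidate, how many more points that candidate still needs to lose. Concretely, I would process the voters one at a time; for voter $v$ and for each candidate $c_i$, shifting $p$ up in $v$'s preference order past $c_i$ causes $c_i$ to lose exactly one point in $v$ (this is where we use that we are under the Borda rule, so each scoring gap $\Delta w^v_\ell = 1$). Thus the effect of a shift $s_v$ in voter $v$ on the vector of losses of $C'$ is determined by how many of $c_1, \ldots, c_t$ sit in positions $\pi_v^{-1}(p) - s_v, \ldots, \pi_v^{-1}(p) - 1$; I would precompute, for each voter $v$ and each feasible shift value $s_v \in \{0, \ldots, \pi_v^{-1}(p) - 1\}$, the induced loss vector $\delta^v(s_v) \in \{0,1\}^t$ and the cost $\psi_v(s_v)$, capping each coordinate at $s_i$ since losing more than $s_i$ points for $c_i$ is never harmful but also never needed in the state.

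The DP table is indexed by a voter index $j \in \{0, 1, \ldots, n\}$ and a state $(d_1, \ldots, d_t)$ with $0 \leq d_i \leq s_i$, and stores the minimum total cost of a shift action restricted to the first $j$ voters that makes $c_i$ lose at least $d_i$ points, for every $i$. The transition from stage $j$ to stage $j+1$ ranges over all feasible shift values $s_{v_{j+1}}$ for voter $v_{j+1}$: from state $(d_1, \ldots, d_t)$ we move to the state with $i$-th coordinate $\min\{s_i, d_i + \delta^{v_{j+1}}_i(s_{v_{j+1}})\}$ and add cost $\psi_{v_{j+1}}(s_{v_{j+1}})$. The initial table has value $0$ at the all-zero state and $+\infty$ elsewhere; the answer is the value of the final table at state $(s_1, \ldots, s_t)$, and a standard backpointer recovers the shift action achieving it. Correctness is immediate by induction on $j$: the state after processing $j$ voters captures exactly the (capped) cumulative loss inflicted on each $c_i$, and since losses from distinct voters add and the price functions are separable across voters, optimality of the subproblems composes.

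For the running time: the number of states per stage is $\prod_{i=1}^{t}(s_i + 1)$, there are $n+1$ stages, and each transition enumerates at most $m$ shift values for the current voter, with each update touching $t \leq m$ coordinates; the one-time preprocessing of the loss vectors $\delta^v(\cdot)$ is polynomial in $|I|$. Hence the total time is polynomial in $|I| + \prod_{i=1}^{t}(s_i+1)$, as claimed. I do not expect a genuine obstacle here — the only point requiring a little care is making sure the state space stays bounded by $\prod_i(s_i+1)$, which is exactly why we cap each coordinate at $s_i$ rather than letting losses grow unboundedly; everything else is the routine bookkeeping of a shortest-path-style dynamic program.
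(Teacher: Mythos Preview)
Your proposal is correct and follows essentially the same voter-by-voter dynamic program over the deficit vector that the paper uses. The only minor difference is that the paper's table carries one additional coordinate recording the total number of unit shifts used so far (yielding a $(n+1)\times(nm+1)\times\prod_i(s_i+1)$ table), which is not needed for the lemma as stated but is exploited later in the proof of Theorem~\ref{thm:fpt-shift} to restrict to shift actions with a prescribed number of unit shifts.
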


\begin{proof}
  Let the notation be as in the statement of the lemma, let $m$
  be the number of candidates in $E$, and let
  $V = (v_1, \ldots, v_n)$ be the collection of voters in $E$.

  We use dynamic programming with table $\mathit{DP}$ of dimension
  $(n + 1) \times (nm+1) \times (s_1+1) \times \cdots \times (s_t+1)$.  The
  $(i, j, r_1, \dots, r_t)$-entry of the table stores the minimum cost
  of $j$ unit shifts on voters $v_1, \dots, v_i$ that jointly
  decrease the score of each $c_i$ by at least $r_i$ points. In the
  beginning, all the entries are set to $\infty$, except the
  $(0, 0, \dots, 0)$-entry, which is set to zero.

  Then we go through $i = 1, \dots, n$ in the increasing order and for
  every entry of the form $(i, j, r_1, \ldots, r_t)$, we compute its
  value using the following formula:
  \begin{align*}
    &\mathit{DP}[i][j][r_1] \cdots [r_t] =\\
    &\hspace{80pt} \min_{\ell \leqs \min\{j, \pi^{-1}_{v_i}(p)-1\}} \Big(\psi_{v_i}(\ell) + \mathit{DP}[i-1][j - \ell][\max\{0, r_1 - b_1\}] \cdots [\max\{0, r_t - b_t\}]\Big),
  \end{align*} 
  where each
  $b_z := \ind[0 < \pi^{-1}_{v_i}(p) - \pi^{-1}_{v_i}(c_z) \leqs \ell]$
  is the indicator variable specifying whether shifting the preferred
  candidate $p$ by $\ell$ positions within vote $v_i$ causes him or
  her to pass $c_z$.

  Finally, $\mathit{DP}[n][nm][s_1] \cdots [s_t]$ gives the minimum cost of
  the desired shift action.
\end{proof}

\subsection{A Combinatorial PTAS for Unit Prices}
We now  give a simple combinatorial PTAS for Borda-\textsc{Shift-Bribery}
with unit prices.
The main idea of our algorithm is as follows.  If the optimal number
of unit shifts needed is $\opt$, then, in total, the scores of other
candidates decrease by at most $\opt$. This means that, once we guess
$\opt$ correctly, for each $\varepsilon > 0$ there can be at most
$1/\varepsilon$ ``bad'' candidates, whose scores exceed that of the
preferred candidate by more than $(1 + \varepsilon)\opt$. Since there
are only $1/\varepsilon$ such candidates, we can use the algorithm
from Lemma~\ref{lem:dp} to compute the cheapest set of (at most)
$\opt$ unit shifts that ensure that the preferred candidate defeats
these candidates.  Then, we shift the preferred candidate up further
$\varepsilon \opt$ times, which ensures that $p$ also defeats all the
other candidates. In total, we use only $(1 + \varepsilon)\opt$ shifts
and hence we arrive at our PTAS for the unit prices case. This idea is
formalized below.

\begin{theorem} \label{thm:simple-ptas-borda} For each
  $\varepsilon > 0$, there exists an algorithm that given an instance
  $I$ of Borda-\textsc{shift-bribery} with unit prices runs in time
  $\opt(I)^{O(1/\varepsilon)} \poly(|I|)$ and outputs a successful
  shift action of cost at most $(1 + \varepsilon)\opt(I)$.
\end{theorem}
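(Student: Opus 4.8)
The plan is to carry out exactly the strategy sketched before the statement. Throughout, recall that with unit prices the cost of a shift action equals its number of unit shifts, and that every Borda score, hence every difference $\scr_E(c)-\scr_E(p)$, is an integer. We may assume $\varepsilon\le 1$. First I would dispose of the trivial case: if $\diffmax(I)\le 0$ then the empty shift action already makes $p$ a winner and we output it. Otherwise $\opt(I)\ge 1$, and by the bound of Elkind et al.\ recalled above (together with integrality of $\opt(I)$), $\opt(I)$ lies in $\{\lceil\diffmax(I)/2\rceil,\dots,\diffmax(I)\}$. The algorithm loops over every value $K$ in this range (``guessing'' $K=\opt(I)$); for each $K$ it constructs a shift action, and at the end it outputs the cheapest of the constructed actions that is actually successful (successfulness is checkable in polynomial time by recomputing scores). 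Since the iteration with $K=\opt(I)$ will produce a successful action of cost at most $(1+\varepsilon)\opt(I)$, the final output is successful and has cost at most $(1+\varepsilon)\opt(I)$.

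Fix a guess $K$ and let $C'=\{c\in C:\scr_E(c)-\scr_E(p)>(1+\varepsilon)K\}$ be the ``bad'' candidates. The key counting step is: if $K=\opt(I)$, then applying the earlier lemma (the one stating $\scrdiff(I,k_\bs)\le k_\bs$ for every successful $\bs$) to an optimal shift action gives $\scrdiff(I,K)\le K$, while each bad candidate contributes strictly more than $\varepsilon K$ to $\scrdiff(I,K)$; hence $|C'|<1/\varepsilon$. Consequently, whenever $|C'|\ge 1/\varepsilon$ we may simply skip the guess. Otherwise we call Lemma~\ref{lem:dp} on $C'$ with thresholds $s_i:=\max\{0,\scr_E(c_i)-\scr_E(p)-K\}$, obtaining a minimum-cost shift action $\bs$ making every $c_i$ lose at least $s_i$ points. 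Since $K\ge\diffmax(I)/2$ and $\opt(I)\ge\diffmax(I)/2$, each $s_i\le\diffmax(I)-K\le\diffmax(I)/2\le\opt(I)$, so $\prod_i(s_i+1)\le(\opt(I)+1)^{|C'|}\le(\opt(I)+1)^{1/\varepsilon}$ and Lemma~\ref{lem:dp} runs in time $\opt(I)^{O(1/\varepsilon)}\poly(|I|)$; multiplying by the $O(\opt(I))$ guesses (there are at most $\opt(I)+1$ of them) gives the claimed overall running time. Moreover, when $K=\opt(I)$ an optimal shift action is itself feasible for this dynamic program (it makes $p$'s score exactly $\scr_E(p)+K$, so every candidate, in particular each $c_i$, ends with score at most $\scr_E(p)+K$ and thus loses at least $s_i$), so $\cost_I(\bs)\le K$, i.e.\ $\bs$ uses at most $K$ unit shifts.

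Finally I would ``pad'' $\bs$: write $N=\sum_v(\pi_v^{-1}(p)-1)$ for the total number of available unit shifts and set $T=\lfloor(1+\varepsilon)K\rfloor$, and extend $\bs$ to a shift action $\bs'$ with exactly $\min(T,N)$ unit shifts by adding shifts in arbitrary votes that still have room --- possible because $\bs$ uses $k_\bs\le\min(K,N)\le\min(T,N)$ shifts. For $K=\opt(I)$ this $\bs'$ is successful and costs at most $(1+\varepsilon)\opt(I)$: if $N<T$ then $\bs'$ puts $p$ on top of every vote, so $p$ gets the maximum possible Borda score and wins, at cost $N<T\le(1+\varepsilon)K$; if $N\ge T$ then $\bs'$ uses $T$ shifts, so $p$ has score $\scr_E(p)+T$ in $\shf(E,\bs')$, every bad $c_i$ ends with score at most $\scr_E(p)+K\le\scr_E(p)+T$ (it lost at least $s_i$, and $\bs'$ extends $\bs$), and every non-bad $c$ ends with score at most $\scr_E(c)\le\scr_E(p)+\lfloor(1+\varepsilon)K\rfloor=\scr_E(p)+T$ --- this is the point where integrality of $\scr_E(c)-\scr_E(p)$ and the floor conspire to avoid any rounding loss --- so $p$ wins at cost $T\le(1+\varepsilon)\opt(I)$.

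The part needing the most care, and the main obstacle, is this coordination between the dynamic program and the padding: one must bound $|C'|$ and the $s_i$ tightly enough to keep the running time $\opt(I)^{O(1/\varepsilon)}\poly(|I|)$, argue $k_\bs\le K$ so that there is budget left to pad, and pick the padding target $T=\lfloor(1+\varepsilon)K\rfloor$ so that padding simultaneously defeats all non-bad candidates and stays within $(1+\varepsilon)\opt(I)$, with the degenerate ``too few slots to reach $T$'' case rescued by putting $p$ on top everywhere. The counting bound on $|C'|$ and the time accounting are then immediate from the two lemmas above.
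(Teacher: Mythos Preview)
Your proposal is correct and follows essentially the same strategy as the paper (guess the optimal number of shifts $K$, bound the bad candidates by $1/\varepsilon$, run the DP of Lemma~\ref{lem:dp} on them, then pad to $\lfloor(1+\varepsilon)K\rfloor$); the only cosmetic differences are that the paper filters guesses via the condition $\scrdiff(I,k)\le k$ rather than $|C'|<1/\varepsilon$, and it terminates at the first $k$ whose DP cost is at most $k$ instead of trying all $K$ and returning the cheapest successful action. One small presentational wrinkle: the justification ``$k_\bs\le K$'' you give for the padding step is only valid when $K=\opt(I)$, so for other guesses the padding as written may be ill-defined---just add ``skip $K$ if $k_\bs>T$'' (which is effectively what the paper does) and nothing else in the argument changes.
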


\begin{proof}
  Let $I = (E, p, \psi)$ be an instance of
  Borda-\textsc{Shift-Bribery} and let $\varepsilon > 0$ be the
  desired approximation ratio. 
  For every $k$  between $\diffmax(I)/2$ and $\diffmax(I)$, such that
  $\scrdiff(I,k) \leqs k$, we
  execute the following steps:
  \begin{enumerate}
  \item Let $C_{\bad}^k = \{c_1, \ldots, c_{t(k)}\}$ be the set
    of candidates whose scores are greater than
    $\scr_E(p) + (1 + \varepsilon)k$. We use the algorithm from 
    Lemma~\ref{lem:dp} to find the least-cost shift action that
    decreases the score of each $c \in C_{\bad}^k$ by at least
    $\scr_E(c) - \scr_E(p) - k$ points. \label{step:dp2}

  \item If the cost of this shift action is at most $k$, then we
    perform additional arbitrary unit shifts so that the total number
    of unit shifts is $\lfloor (1 + \varepsilon)k \rfloor$ or, if not
    enough unit shifts are possible, we shift $p$
    to the top of every vote.
    We output all the performed unit shifts and
    terminate.\label{step:output}
  \end{enumerate}

  We first note that the algorithm indeed outputs a successful shift
  action.  If $p$ ends up being on top of all the votes then he or she
  clearly wins. On the other hand, if the total number of unit shifts
  performed is $\lfloor (1 + \varepsilon)k \rfloor$, then the score of
  $p$ is at least $\scr_E(p) + \lfloor (1 + \varepsilon)k \rfloor$;
  this means that, for all the candidates $c \notin C_{\bad}^k$, the new
  score of $p$ is at least $\scr_E(c)$, which is at least as large as
  the score of $c$ after the shifts. Moreover, the algorithm from
  Lemma~\ref{lem:dp} ensures that after the shifts the score of $p$ is
  at least as high as the scores of all the candidates from
  $C_{\bad}^k$. Thus, $p$ is a winner.

  Next, let us argue that the algorithm computes a
  \mbox{$(1+\varepsilon)$}-approximate solution. 
  Recall that due to the results of Elkind et al.~\cite{EFS09}, 
  the number of unit shifts in the optimal
  solution is between $\diffmax(I)/2$ and $\diffmax(I)$.
  Therefore the algorithm must
  terminate at latest when considering $k = \opt(I)$.
  Given this many
  shifts, it is---by definition---possible for $p$ to obtain score higher
  than all the candidates from $C_{\bad}^k$ and, so, the algorithm from
  Lemma~\ref{lem:dp} returns a shift action with at most $\opt(I)$ unit
  shifts. Thus the algorithm terminates with at most
  $\lfloor (1 + \varepsilon)\opt(I) \rfloor$ unit shifts.

  The running time of the algorithm follows from Lemma~\ref{lem:dp}
  and is bounded by a polynomial in $|I|$ and 
  $\Pi_{c \in C_\bad^k}(\scr_E(c)-\scr_E(p)-k + 1) \leqs
  (\diffmax(I)+1)^{t(k)} = \opt(I)^{O(t(k))}$.
  However, we only invoke
  Lemma~\ref{lem:dp} when $k \geqs \scrdiff(I,k)$. This means that:
  \begin{align*}
    k \geqs \scrdiff(I,k) = \sum_{c \in C} \max\{0, \scr_E(c) - \scr_E(p) - k\} \geqs \sum_{c \in C_\bad^k} (\scr_E(c) - \scr_E(p) - k) > t(k) \cdot \varepsilon k,
  \end{align*}
  and we conclude that $t(k) < 1/\varepsilon$.
  Thus the running time is
  polynomial with respect to $|I| + \opt(I)^{O(1/\varepsilon)}$.
\end{proof}

\subsection{A Faster FPT Algorithm}
Using a very similar reasoning as in Theorem~\ref{thm:simple-ptas-borda},
we obtain an $\FPT$ algorithm for Borda-\textsc{Shift-Bribery} (with
arbitrary price functions) parameterized by the smallest number $K$ of
unit shifts in an optimal solution. 
While this case was already known
to be in $\FPT$, our algorithm runs in $2^{O(K)}\poly(|I|)$ time, which
is faster than the known $2^{O(K^2)}\poly(|I|)$-time algorithm of
Bredereck et al.~\cite{BCFNN16}.

\begin{theorem} \label{thm:fpt-shift} There exists a
  $2^{O(K)} \poly(|I|)$-time algorithm that can find an optimal
  solution to a given instance $I = (E, p, \psi)$ of
  Borda-\textsc{Shift-Bribery} where $K$ is 
  the smallest number of unit shifts in an optimal solution.
\end{theorem}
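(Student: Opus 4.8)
The plan is to follow the structure of the proof of Theorem~\ref{thm:simple-ptas-borda}, but to exploit the fact that here we may spend time exponential in $K$ (rather than only polynomial in $\opt(I)$), which lets us recover an \emph{exact} optimum instead of a $(1+\varepsilon)$-approximation. Recall that Elkind et al.\ give $\diffmax(I)/2 \leqs K \leqs \diffmax(I)$, so $\diffmax(I) \leqs 2K$. First I would dispose of the trivial case $\diffmax(I) \leqs 0$, in which $p$ already wins and the empty shift action is optimal. Otherwise I would iterate over every integer $k \in \{1, \dots, \diffmax(I)\}$ with $\scrdiff(I,k) \leqs k$; since $\scrdiff(I,K) \leqs K$ — apply the bound $\scrdiff(I,k_{\bs}) \leqs k_{\bs}$ to an optimal solution $\bs$ with $k_{\bs} = K$ unit shifts — the value $k = K$ is among them.

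For a fixed such $k$, put $s_c := \max\{0, \scr_E(c) - \scr_E(p) - k\}$ for each $c \in C \setminus \{p\}$ and let $C' := \{c : s_c \geqs 1\}$. I would then invoke the dynamic program of Lemma~\ref{lem:dp} on $C'$ with the targets $(s_c)_{c \in C'}$, but query it for an action using \emph{exactly} $k$ unit shifts rather than an unconstrained one (the table already records the number of unit shifts used, so this is immediate), and recover the corresponding minimum-cost shift action by the usual backtracking. Any action of this form is successful: afterwards $p$ has score exactly $\scr_E(p) + k$, each $c \in C'$ has score at most $\scr_E(c) - s_c = \scr_E(p) + k$, and each $c \notin C'$ had score at most $\scr_E(p) + k$ already (and shifting $p$ never raises the score of another candidate). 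Conversely, when $k = K$ an optimal solution with $K$ unit shifts is feasible for this query — it uses exactly $K$ shifts, and for $p$ to win it must drop every $c$ with $\scr_E(c) > \scr_E(p) + K$ by at least $s_c$ points — so the returned cost is at most $\opt(I)$. Outputting the cheapest action found over all $k$ therefore yields an optimal solution.

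For the running time, the crucial observation — exactly as in Theorem~\ref{thm:simple-ptas-borda} — is that the dynamic program is only invoked when $\scrdiff(I,k) \leqs k$, and then $\sum_{c \in C'} s_c = \scrdiff(I,k) \leqs k \leqs \diffmax(I) \leqs 2K$, so $\prod_{c \in C'}(s_c + 1) \leqs 2^{\sum_{c \in C'} s_c} \leqs 2^{2K}$ (using $s_c + 1 \leqs 2^{s_c}$ for integers $s_c \geqs 1$). Hence each call runs in $\poly(|I| + 2^{2K}) = 2^{O(K)}\poly(|I|)$ time by Lemma~\ref{lem:dp}, and there are at most $\diffmax(I) \leqs 2K$ values of $k$ to try, for an overall bound of $2^{O(K)}\poly(|I|)$.

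I expect the only genuine subtlety to be the bookkeeping around the number of unit shifts. The program of Lemma~\ref{lem:dp} by itself only guarantees that the candidates in $C'$ are beaten; it says nothing about whether $p$ overtakes a candidate whose score lies strictly between $\scr_E(p) + k_0$ and $\scr_E(p) + k$, where $k_0 \leqs k$ is whatever number of shifts the program happens to use. Pinning the shift count to exactly $k$ repairs this, and it costs nothing in terms of optimality precisely because the $K$-shift optimum is a feasible witness for the query with $k = K$. Everything else — the enumeration of $k$, the $\scrdiff$-based pruning, and the $2^{O(K)}$ bound on the size of the DP — transfers essentially verbatim from the combinatorial PTAS.
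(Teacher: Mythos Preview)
Your proposal is correct and follows essentially the same approach as the paper: enumerate $k$ up to $\diffmax(I)$ with $\scrdiff(I,k) \leqs k$, call the DP of Lemma~\ref{lem:dp} restricted to exactly $k$ unit shifts, and bound the table size using $\sum_c s_c = \scrdiff(I,k) \leqs k \leqs 2K$. Your bound $\prod_c(s_c+1) \leqs 2^{\sum_c s_c}$ via $s_c+1 \leqs 2^{s_c}$ for $s_c \geqs 1$ is actually cleaner than the paper's route through AM--GM and Bernoulli, but the overall structure is identical.
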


\begin{proof}
  As in the proof of Theorem~\ref{thm:simple-ptas-borda}, our algorithm
  tries all the values of $k$ between $\diffmax(I)/2$ and
  $\diffmax(I)$ such that $\scrdiff_E(I,k) \leqs k$. For such a $k$ it
  proceeds as follows: (a) It forms the set
  $C^k_\bad = \{c_1, \ldots, c_{t(k)}\}$ of all the candidates whose scores
  exceed $\scr_E(p)+k$ and (b) using the algorithm from
  Lemma~\ref{lem:dp}, it computes the cheapest shift bribery that
  consists of $k$ unit shifts and ensures that each candidate $c$ in
  $C^k_\bad$ loses at least $\scr_E(c)-\scr_E(p)-k$ points. If such a
  shift action exists then it stores it (note that all stored shift
  actions are successful). The algorithm outputs the cheapest stored
  shift action.

  The correctness of the algorithm follows from the fact that when it
  considers the value of $k = K$, then the algorithm from
  Lemma~\ref{lem:dp} indeed finds an optimal shift action.

  The running time of the algorithm is polynomial in $|I|$ and the
  values $\prod_{i=1}^{t(k)} (\scr_E(c_i) - \scr_E(p) - k + 1)$ for the
  values of $k$ for which Lemma~\ref{lem:dp} is invoked.
  When
  $\scrdiff_E(k) \leqs k$ then the following holds (the first
  inequality is due to the AM-GM inequality\footnote{The AM-GM
    inequality says that for non-negative numbers $x_1, \ldots, x_n$
    we have
    $\frac{1}{n}(x_1+ \cdots + x_n) \geqs \sqrt[n]{x_1x_2 \cdots
      x_n}$.}
  and the last one follows from the Bernoulli's
  inequality\footnote{For non-negative $x$ and $r$ between $0$ and $1$
    we have $(1+x)^r \leqs 1 + rx$. In our case $x = \nicefrac{k}{t(k)}$
    and $r = \nicefrac{t(k)}{k}$; $k \geqs t(k)$ because we need at least
    one unit shift
    for each candidate in $C_\bad$.}):
  \begin{align*}
    &\prod_{i=1}^{t(k)} (\scr_E(c_i) - \scr_E(p) - k + 1) \leqs \left(\frac{1}{t(k)} \cdot{\sum_{i=1}^{t(k)} (\scr_E(c_i) - \scr_E(p) -k + 1)}\right)^{t(k)} \\
                  &= \left(\frac{1}{t(k)} \cdot{\scrdiff_E(k) + 1}\right)^{t(k)}
                  \leqs \left(1 + \frac{k}{t(k)}\right)^{t(k)}
                  = \left(\left(1 + \frac{k}{t(k)}\right)^{\frac{t(k)}{k}}\right)^k
                  \leqs 2^k 
                  \leqs 2^{\diffmax(I)}
                  \leqs 2^{2K}.
  \end{align*}
  Thus the running time of our algorithm is polynomial in $|I|$ and $2^{O(K)}$,
  which suffices to complete the proof.
\end{proof}

\subsection{EPTAS for Unit Prices}
The main result of this subsection is an EPTAS (efficient
polynomial-time approximation scheme) for Borda-\textsc{Shift-Bribery}
with unit prices, that is, a PTAS for which the non-polynomial factors
in the running time depend only on the required approximation
ratio. Note that in the algorithm from
Theorem~\ref{thm:simple-ptas-borda} this factor was
$\opt(I)^{O(1/\varepsilon)}$ and, thus, did not depend on
$\varepsilon$ alone.

On the technical level, we first develop an algorithm that for an
instance $I$ of Borda-\textsc{Shift-Bribery} with unit prices outputs
a solution with cost at most $\opt(I) + \sqrt{\opt(I)}$.
The general structure of our algorithm is similar to that of the
algorithm from Theorem~\ref{thm:simple-ptas-borda}, but instead of invoking
Lemma~\ref{lem:dp}, we solve a linear program.  We form this program
in such a way that its basic solution has to consist almost entirely
of integral values. Then, 
rounding and complementing the
obtained shift action with arbitrary unit shifts gives the desired
solution.

To state our linear program, we will model shift actions as boolean
matrices $(x_{(v, j)})_{v \in V, j \in [m]}$ such that $x_{(v,j)}$ is
$1$ if after applying the shift action the preferred candidate is
ranked on position $j$ or better in vote $v$, and it is $0$ otherwise
(so we will always have
$0 \leqs x_{(v,1)} \leqs x_{(v,2)} \leqs \cdots \leqs x_{(v,m)} \leqs~1$).
Formally, a shift action $\bs = (s_v)_{v \in V}$ corresponds to the
boolean matrix $x_{(v, j)} = \ind[\pi_v^{-1}(p) - s_v \leqs j]$.

\begin{lemma} \label{lem:aptas-borda-unit} There is a polynomial-time
  algorithm that, given an instance $I$ of
  Borda-\textsc{Shift-Bribery} with unit prices, outputs a successful
  shift action with cost at most $\opt(I) + \sqrt{\opt(I)}$.
\end{lemma}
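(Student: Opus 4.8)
The plan is to run, for each candidate value $k$, the same outer scheme as in Theorem~\ref{thm:simple-ptas-borda} — guess that $k$ is the optimal number of unit shifts, peel off a tiny set of ``bad'' candidates, handle them with an auxiliary optimization, and then pad with arbitrary unit shifts — but to replace the dynamic program of Lemma~\ref{lem:dp} by a linear program whose basic solutions are integral on all but few voters.

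Fix $k$ with $\diffmax(I)/2 \le k \le \diffmax(I)$ and $\scrdiff(I,k) \le k$ (by~\cite{EFS09} and the earlier lemma of this section, $k = \opt(I)$ satisfies both, so it is among the values tried). Write $\ell_v := \pi_v^{-1}(p) - 1$, let $C_\bad^k$ be the candidates whose score exceeds $\scr_E(p) + k + \sqrt{k}$, and note, exactly as in Theorem~\ref{thm:simple-ptas-borda} with threshold $\sqrt k$ in place of $\varepsilon k$, that $\scrdiff(I,k) \le k$ forces $|C_\bad^k| < \sqrt{k}$. I would solve the linear program over the matrices $(x_{(v,j)})$ (with $x_{(v,j)} \equiv 1$ for $j \ge \pi_v^{-1}(p)$) consisting of the chain constraints $0 \le x_{(v,1)} \le \cdots \le x_{(v,\ell_v)} \le 1$ for every $v$, and, for every $c \in C_\bad^k$, the constraint $\scr_E(p) + \sum_v \sum_{j\le\ell_v} x_{(v,j)} + \sum_{v \in V_{c \succ p}} x_{(v,\pi_v^{-1}(c))} \ge \scr_E(c)$ — whose left side is $p$'s new score plus the number of voters in which $p$ overtakes $c$, i.e.\ the requirement that $p$ beats or ties $c$ — with objective $\sum_v \sum_{j\le\ell_v} x_{(v,j)}$, which is precisely the number of unit shifts. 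When $k = \opt(I)$, an optimal integral shift action is feasible with objective value $\opt(I) = k$, so the LP optimum is at most $k$; this is why the algorithm finally returns the cheapest successful action found over all tried $k$.

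Now take a basic optimal solution $x^\star$. Reparametrising each voter by the discrete derivatives $z_{(v,j)} = x_{(v,j)} - x_{(v,j-1)}$ together with the residual $1 - x_{(v,\ell_v)}$ exhibits $z^\star_{(v,\cdot)}$ as a point of a standard simplex (a distribution over how far $p$ is shifted in $v$), and the feasible region becomes a product of $n$ such simplices intersected with $|C_\bad^k|$ halfspaces. The standard vertex count — at a vertex the number of nonzero coordinates is at most (number of simplices) $+$ (number of halfspaces) — shows that all but at most $|C_\bad^k| < \sqrt k$ voters have $z^\star_{(v,\cdot)}$ at a vertex of their simplex, hence are already integral. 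I would then \emph{round $x^\star$ down} coordinatewise (replace each fractional entry by $0$); this is still a chain-monotone $0/1$ matrix, it corresponds to an integral shift action $\bs^\downarrow$ with at most $\sum_{v,j} x^\star_{(v,j)} \le k$ unit shifts, and since each of the $< \sqrt k$ modified voters changes the number of overtakings of any fixed $c$ by at most one, every $c \in C_\bad^k$ still loses at least $(\scr_E(c) - \scr_E(p)) - k - |C_\bad^k|$ points under $\bs^\downarrow$. Finally I would pad $\bs^\downarrow$ with arbitrary unit shifts up to a total of $T := \lfloor k + \sqrt k \rfloor$ shifts, or shift $p$ to the top of every vote if that many shifts are unavailable. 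Then $p$'s score becomes $\scr_E(p) + T$; since $\scr_E(c) - \scr_E(p) \le \lfloor k + \sqrt k\rfloor = T$ for every non-bad $c$, and since $|C_\bad^k| \le \lfloor\sqrt k\rfloor$ gives $T \ge k + |C_\bad^k|$ — so that $T$ plus $c$'s loss is at least $\scr_E(c) - \scr_E(p)$ for every $c \in C_\bad^k$ — the padded action is successful, and its cost is $T \le k + \sqrt k$ (or at most $\sum_v \ell_v < k + \sqrt k$ in the degenerate case). Taking $k = \opt(I)$ gives cost at most $\opt(I) + \sqrt{\opt(I)}$, and the algorithm solves $O(\opt(I))$ polynomial-size LPs plus polynomial overhead.

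The step I expect to be the main obstacle is getting the rounding right. The correct quantity to bound in a basic solution is the number of voters with a ``spread-out'' $z^\star_{(v,\cdot)}$ (equivalently, the number of maximal constant fractional blocks of $x^\star$), \emph{not} the number of fractional entries $x^\star_{(v,j)}$: a single fractional value may persist over a long block of consecutive positions, so rounding those entries \emph{up} to $1$ could add far more than $\sqrt k$ shifts. Rounding \emph{down} and repairing by padding is what confines the overspend to roughly $|C_\bad^k| < \sqrt{\opt(I)}$, and one then has to carry out the elementary but delicate bookkeeping among $|C_\bad^k|$, $\lfloor k + \sqrt k\rfloor$, and the various score inequalities so that the final action provably beats both the bad and the non-bad candidates.
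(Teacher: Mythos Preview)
Your proposal is correct and follows essentially the same approach as the paper's proof: enumerate $k$, define $C_\bad^k$ via the threshold $\scr_E(p)+k+\sqrt{k}$, solve an LP whose only ``complicated'' constraints are the $|C_\bad^k|<\sqrt{k}$ score constraints, take a basic optimal solution, round down, and pad with extra unit shifts up to $\lfloor k+\sqrt{k}\rfloor$.

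The only differences are cosmetic. First, your score constraint is the ``exact'' one, $\scr_E(p)+\sum_{v,j}x_{(v,j)}+\sum_{v\in V_{c\succ p}}x_{(v,\pi_v^{-1}(c))}\ge\scr_E(c)$, whereas the paper decouples $p$'s gain from $c$'s loss and simply requires $\sum_{v\in V_{c\succ p}}x_{(v,\pi_v^{-1}(c))}\ge\scr_E(c)-\scr_E(p)-k$; both LPs are feasible with value $\le k$ at $k=\opt(I)$, and the basic-solution counting argument (at most $|C_\bad^k|$ non-integral voters) depends only on the \emph{number} of score constraints, so either formulation works. Second, you phrase the basic-solution analysis via the reparametrisation $z_{(v,j)}=x_{(v,j)}-x_{(v,j-1)}$ as a product of simplices cut by $|C_\bad^k|$ halfspaces, while the paper counts tight chain constraints directly; these are two ways of saying the same thing. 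Your explicit remark that rounding \emph{up} could blow up the cost (because a fractional voter may carry a long constant block), and that this is why one rounds down and repairs by padding, is exactly the point the paper's analysis hinges on.
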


\newcommand{\LPU}{\mathrm{LP\hbox{-}U}}

\begin{figure}
  \small
  \FrameSep=3pt
  \begin{framed}
  \begin{align}
    &\text{minimize }   \sum_{v \in V} \sum_{j \in [\pi_v^{-1}(p) - 1]} x_{(v, j)}  \text{\quad s.t.: }  \nonumber \\
    & 0 \leqs x_{(v, 1)} \leqs \cdots \leqs x_{(v, \pi_v^{-1}(p) - 1)} \leqs 1 \text{\quad\quad ,} \forall v \in V \label{eq:shift-order1} \\
    & \!\!\!\!\!\!\!\!\! \sum_{v \in V_{c \succ p}} x_{(v, \pi_v^{-1}(c))}  \geqs \scr_E(c) - \scr_E(p) - k  \text{\quad\quad ,}\forall c \in C_{\bad}^k \label{eq:score-dec1}
  \end{align}
  \end{framed} 
  \caption{\label{fig:lpu}Program $\LPU(I,k)$ from the proof of
    Lemma~\ref{lem:aptas-borda-unit}. For each voter $v$, we have
    variables $x_{(v,1)},$ $\ldots, x_{(v,\pi_v^{-1}(p)-1)}$.
    Constraints~\eqref{eq:shift-order1} ensure that an integral
    solution describes a valid shift action
    and Constraints~\eqref{eq:score-dec1} ensure that after applying such an
    action, each candidate in $C_\bad^k$ has score no higher than $p$;
    recall that $V_{c \succ p}$
    means the set of voters that prefer $c$ to $p$. The optimization
    goal is to minimize the number of unit shifts in the shift
    action.}
\end{figure}

\begin{proof}
  Let $I = (E, p, \psi)$ be an instance of
  Borda-\textsc{Shift-Bribery} with unit prices, where
  $E = (C, V, \{\succ_v\}_{v \in V})$.  We try all integers $k$
  between $\diffmax(I)/2$ and $\diffmax(I)$, such that
  $\scrdiff(I,k) \leqs k$, and for each of them we perform the following
  steps:
  \begin{enumerate}
  \item We form the set $C_\bad^k \subseteq C$ of those candidates $c$
    whose scores exceed value $\scr_E(p) + k + \sqrt{k}$.
  \item We form the linear program $\LPU(I,k)$ from
    Figure~\ref{fig:lpu} and solve it for a basic solution (note that
    the objective function gives the number of unit shifts used). If
    $\LPU(I,k)$ is infeasible or the cost of the solution exceeds $k$,
    then we skip this value of $k$.  Let
    $(x^{\opt}_{(v, j)})_{v \in V, j \in [\pi^{-1}_v(p) - 1]}$ be the
    basic optimal solution found.
  \item Let $\bs^{\mathit{LP}}$ be the shift action corresponding to
    $(\lfloor x^{\opt}_{(v, j)} \rfloor)_{v \in V, j \in
      [\pi^{-1}_v(p) - 1]}$.
    (Note that the rounded solution indeed correctly describes a shift
    action and that its cost, i.e., the number of unit shifts it
    contains, is at most $k$.)  Form a shift action $\bs$ by extending
    $\bs^{\mathit{LP}}$ so that it contains $k + \lfloor \sqrt{k} \rfloor$ unit
    shifts or, if this is impossible, then so that $p$ is on the top
    of every preference order. Output $\bs$ and
    terminate. \label{step2:add-and-output}
  \end{enumerate}

  Since finding basic solutions for linear programs can be done in
  polynomial time, the whole algorithm runs in polynomial time.
  Further, the cost of the computed shift action $\bs$ is at most
  $\opt(I) + \sqrt{\opt(I)}$ unit shifts.  To see this, consider the
  step when the algorithm tries $k = \opt(I)$; if it terminates
  earlier then our claim certainly is satisfied.  For this value of
  $k$, $\LPU(I,k)$ certainly has a solution of cost at most $k$
  because an optimal successful shift action for $I$ is one of its
  feasible solutions. Thus the algorithm terminates for this value of
  $k$ and (in Step~\ref{step2:add-and-output}) outputs a shift action
  with at most $\opt(I) + \sqrt{\opt(I)}$ unit shifts.

  It remains to show that the computed shift action $\bs$ is
  successful. If $p$ ends up at the top of every preference order,
  then surely $\bs$ is a successful shift action. Let us consider the
  case where $\bs$ consists of exactly $k + \lfloor \sqrt{k} \rfloor$
  unit shifts, i.e., where after applying $\bs$, $p$ has score
  $\scr_E(p) + k + \lfloor \sqrt{k} \rfloor$, for the value of $k$ for
  which the algorithm terminates. Since prior to applying $\bs$ each
  candidate $c \notin C_{\bad}^k$ had score at most
  $\scr_E(p) + k + \lfloor \sqrt{k} \rfloor$, after applying $\bs$
  candidate $p$ certainly has score at least as high as theirs.  Thus we only
  need to show that each candidate in $C_\bad^k$ also ends up with score
  at most $\scr_E(p) + k + \lfloor \sqrt{k} \rfloor$.

  Let us say that a voter $v \in V$ is \emph{integral} if
  $x_{(v, j)}^{\opt} \in \{0, 1\}$ for all $j \in [\pi^{-1}_v(p) - 1]$
  and let $\Vi$ be the set of all integral voters.  Since
  $(x^{\opt}_{(v, j)})_{v \in V, j \in [\pi_v^{-1}(p) - 1]}$ is a
  basic solution of $\LPU(I,k)$, at least
  $\sum_{v \in V} (\pi_v^{-1}(p) - 1)$ inequalities must be
  tight. For each integral voter
  $v$, exactly $\pi_v^{-1}(p) - 1$ inequalities of the
  form~\eqref{eq:shift-order1} are tight. On the other hand, for each
  non-integral voter $v$, at most $\pi_v^{-1}(p) - 2$ inequalities
  in~\eqref{eq:shift-order1} are tight.  Further, there are $|C_\bad^k|$
  inequalities of the form~\eqref{eq:score-dec1} and
  $|C_\bad^k| \leqs \lfloor\sqrt{k} \rfloor$, because each candidate
  $c \in C_\bad^k$ contributes at least $\sqrt{k}$ points to
  $\scrdiff(I,k)$ and $\scrdiff(I,k) \leqs k$.  Altogether, this means
  that there are at most $\lfloor \sqrt{k} \rfloor$ non-integral
  voters, i.e., $|V \setminus \Vi| \leqs \lfloor \sqrt{k}
  \rfloor$.
  Intuitively, each tight inequality of the form~\eqref{eq:score-dec1}
  can lead to at most a single non-integral voter.

  Rounding the basic solution into $\bs^{\mathit{LP}}$
  can increase score difference
  between $c$ and $p$ by at most 1 for each non-integral voter (score of $c$ is
  increasing and score of $p$ is decreasing).
  Hence, in total, after rounding the difference can be at most
  $|V \setminus \Vi| \leqs \lfloor \sqrt{k}\rfloor$.
  More formally, for each $c \in C_{\bad}^k$, the score of $c$ after applying
  $\bs^{\mathit{LP}}$ is as follows
  (for a number $x$, $0 \leqs x \leqs 1$, by $\fr(x)$ we
  mean its fractional part,
  so that if $0 \leqs x < 1$ then $\fr(x) = x$ and if $x=1$ then 
  $\fr(x) = 0$):
  \begin{align*}
    &\scr_E(c) - \sum_{v \in V_{c \succ p}} \Big\lfloor x^{\opt}_{(v, \pi_v^{-1}(c))} \Big\rfloor
    = \scr_E(c) - \sum_{v \in V_{c \succ p}} x^{\opt}_{(v, \pi_v^{-1}(c))} + \sum_{v \in V_{c \succ p}} \fr\Big(x^{\opt}_{(v, \pi_v^{-1}(c))}\Big) \\
    &\overset{\eqref{eq:score-dec1}}{\leqs} \scr_E(p) + k + \sum_{v \in V_{c \succ p}} \fr\Big(x^{\opt}_{(v, \pi_v^{-1}(c))}\Big)
    = \scr_E(p) + k + \sum_{v \in V_{c \succ p} \setminus \Vi} \fr\Big(x^{\opt}_{(v, \pi_v^{-1}(c))}\Big) \\
    &< \scr_E(p) + k + |V \setminus \Vi| 
    \leqs \scr_E(p) + k + \lfloor \sqrt{k} \rfloor. 
  \end{align*}
  This means that $p$ is indeed a winner of the election.
\end{proof}

Using this algorithm and the combinatorial PTAS from
Theorem~\ref{thm:simple-ptas-borda}, we obtain our EPTAS (in short, if
$\diffmax(I) < 2/\varepsilon^2$ then we run the algorithm from
Theorem~\ref{thm:simple-ptas-borda} and we run the
algorithm from Lemma~\ref{lem:aptas-borda-unit} otherwise).

\begin{minipage}{\textwidth}
\begin{theorem} \label{thm:eptas-borda-unit} There is an algorithm
  that, given an instance $I$ of Borda-\textsc{Shift-Bribery} with
  unit prices and a positive 
  number $\varepsilon > 0$, runs in
  time $2^{O(\log(1/\varepsilon)/\varepsilon)} \poly(|I|)$ and outputs
  a successful shift action of cost at most $(1 + \varepsilon)\opt(I)$.
\end{theorem}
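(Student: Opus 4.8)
The plan is to dovetail the two algorithms already constructed in this section, choosing between them according to the value of $\diffmax(I)$, which can be computed directly from $E$ by tallying Borda scores. Given an instance $I$ with unit prices (so that $\opt(I)$ equals the optimal number of unit shifts, and hence the Elkind et al.\ bounds $\diffmax(I)/2 \leqs \opt(I) \leqs \diffmax(I)$ apply to $\opt(I)$ itself) together with $\varepsilon > 0$, I would set the threshold $\tau := 2/\varepsilon^2$ and branch as follows: if $\diffmax(I) < \tau$, run the combinatorial PTAS of Theorem~\ref{thm:simple-ptas-borda} with parameter $\varepsilon$ and return its output; if $\diffmax(I) \geqs \tau$, run the additive-error algorithm of Lemma~\ref{lem:aptas-borda-unit} and return its output. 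In both cases the returned shift action is successful, so it only remains to check the cost bound and the running time in each branch.

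In the first branch we have $\opt(I) \leqs \diffmax(I) < 2/\varepsilon^2$, so the non-polynomial factor $\opt(I)^{O(1/\varepsilon)}$ in the running time of Theorem~\ref{thm:simple-ptas-borda} is at most $(2/\varepsilon^2)^{O(1/\varepsilon)} = 2^{O(\log(1/\varepsilon)/\varepsilon)}$; the whole run therefore takes $2^{O(\log(1/\varepsilon)/\varepsilon)}\poly(|I|)$ time and outputs a successful shift action of cost at most $(1+\varepsilon)\opt(I)$ by that theorem. In the second branch we have $\opt(I) \geqs \diffmax(I)/2 \geqs 1/\varepsilon^2$, hence $\sqrt{\opt(I)} \geqs 1/\varepsilon$, i.e.\ $\varepsilon\sqrt{\opt(I)} \geqs 1$; multiplying by $\sqrt{\opt(I)}$ gives $\varepsilon\opt(I) \geqs \sqrt{\opt(I)}$. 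Consequently the shift action produced by Lemma~\ref{lem:aptas-borda-unit}, whose cost is at most $\opt(I) + \sqrt{\opt(I)}$, has cost at most $\opt(I) + \varepsilon\opt(I) = (1+\varepsilon)\opt(I)$, and this branch runs in $\poly(|I|)$ time. Either way the output is a successful $(1+\varepsilon)$-approximate shift action computed within $2^{O(\log(1/\varepsilon)/\varepsilon)}\poly(|I|)$ time.

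I do not anticipate a genuine obstacle: all the real work sits in Theorem~\ref{thm:simple-ptas-borda} and Lemma~\ref{lem:aptas-borda-unit}, and what remains is the case split plus the two one-line inequalities above. The one point that deserves care is that the branching must be performed on $\diffmax(I)$, which is known a priori, rather than on $\opt(I)$, which is not; the Elkind et al.\ sandwich $\diffmax(I)/2 \leqs \opt(I) \leqs \diffmax(I)$ is exactly what lets the threshold on $\diffmax(I)$ do double duty, and the choice $\tau = 2/\varepsilon^2$ is the balance point that makes the running-time estimate of the small-$\diffmax$ branch and the approximation estimate of the large-$\diffmax$ branch hold simultaneously.
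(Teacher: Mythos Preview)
Your proposal is correct and follows essentially the same approach as the paper: branch on whether $\diffmax(I)$ is below or above $2/\varepsilon^2$, invoke Theorem~\ref{thm:simple-ptas-borda} in the small case and Lemma~\ref{lem:aptas-borda-unit} in the large case, and use the sandwich $\diffmax(I)/2 \leqs \opt(I) \leqs \diffmax(I)$ to bound the running time and the additive error, respectively. Your observation that one must branch on the computable quantity $\diffmax(I)$ rather than on $\opt(I)$ is a useful remark that the paper leaves implicit.
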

\end{minipage}

\begin{proof}
  Given an instance $I$ of Borda-\textsc{Shift-Bribery} with unit
  prices, our algorithm proceeds as follows: If
  $\diffmax(I) < 2/\varepsilon^2$ then it runs the algorithm from
  Theorem~\ref{thm:simple-ptas-borda} and, otherwise, if
  $\diffmax(I) \geqs 2/\varepsilon^2$, it runs the algorithm from
  Lemma~\ref{lem:aptas-borda-unit}. Then it returns the output from
  the respective algorithm.

  If $\diffmax(I) < 2/\varepsilon^2$, then
  $\opt(I) < 2/\varepsilon^2$, which means that the algorithm from
  Theorem~\ref{thm:simple-ptas-borda} runs in time
  $\opt(I)^{O(1/\varepsilon)}\poly(|I|) =
  2^{O(\log(1/\varepsilon)/\varepsilon)}\poly(|I|)$.
  Since the algorithm from Lemma~\ref{lem:aptas-borda-unit} always
  runs in time $\poly(|I|)$, we conclude that the above algorithm runs
  in time $2^{O(\log(1/\varepsilon)/\varepsilon)}\poly(|I|)$.

  When we run the algorithm from Theorem~\ref{thm:simple-ptas-borda},
  the solution is always a $(1 + \varepsilon)$-approximation. On the
  other hand, when we invoke Lemma~\ref{lem:aptas-borda-unit}, we have
  $\opt(I) \geqs \diffmax(I)/2 \geqs 1/\varepsilon^2$; hence, the
  output is of cost at most
  $\opt(I) + \sqrt{\opt(I)} = (1+\opt(I)^{-0.5})\opt(I)  \leqs (1 + \varepsilon) \opt(I)$.
  This concludes the proof.
\end{proof}

Theorem~\ref{thm:eptas-borda-unit} gives formal evidence that
approximating Borda-\textsc{Shift-Bribery} is computationally easier
for the case of unit prices than for the general case or, even, for
the all-or-nothing prices case. The latter cases are 
$\W[1]$-hard when parameterized by the budget~\cite{BCFNN16} and this
means that no EPTAS exists for them unless $\W[1] = \FPT$. If there
were an EPTAS for Borda-\textsc{Shift-Bribery} for the case of general
prices, or for the all-or-nothing prices, which ran in time
$f(\varepsilon) n^{O(1)}$, then we could plug in
$\varepsilon = 1/(2B)$, where $B$ would be the budget limit, and solve
the problem exactly in time $f(\frac{1}{2B}) n^{O(1)}$, implying that
$\W[1] = \FPT$. Such connections between 
$\FPT$ algorithms and EPTASes are well-known in theoretical computer
science~\cite{CT97}, but, so far, have not found many applications in
computational social choice.

\subsection{Uniform-All-or-Nothing Prices}

In addition to the above PTASes, we devise a simple greedy algorithm
that yields an asymptotic 1.5-approximation ratio for the special case
of uniform-all-or-nothing prices. While this will be subsumed by our
PTAS in the next section, the simplicity of the algorithm may make it
more practical than the LP-based PTAS in
Theorem~\ref{thm:ptas-borda-main}.  The main idea of our greedy
algorithm is to simply shift the preferred candidate $p$ to the top in
the votes where $p$ is ranked lowest. The idea is formalized below.

\begin{theorem} \label{thm:apx-borda-uniform-aon} There is a greedy
  algorithm that, given a Borda-\textsc{Shift-Bribery} instance $I$
  with uniform-all-or-nothing prices, outputs a successful shift
  action with cost at most $(1.5 \cdot \opt(I) + 1)$ in polynomial
  time.
\end{theorem}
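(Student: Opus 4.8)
The plan is to first recast the problem in a purely combinatorial way. Under uniform-all-or-nothing prices every vote we pay for costs exactly one, regardless of how far $p$ is moved, so in a minimum-cost solution we may assume $p$ is shifted all the way to the top of each vote it touches; the task becomes: choose a smallest set $S$ of votes such that, after moving $p$ to the top of every $v\in S$, candidate $p$ is a Borda winner. I would state the greedy in these terms: it repeatedly moves $p$ to the top of a vote in which $p$ currently sits as low as possible, stopping the instant $p$ wins. For the analysis it is convenient to have, among votes in which $p$ occupies the same position, a tie-break towards a vote in which some candidate currently ahead of $p$ (or tied with $p$ for the lead) stands above $p$, so that such a step both raises $\scr(p)$ by $d_v:=\pi_v^{-1}(p)-1\ge 1$ and lowers a current front-runner's score by one.

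Next I would collect lower bounds on $\opt(I)$. Let $O$ be an optimal set of touched votes, $\opt(I)=|O|$, and let $T:=\sum_{v\in O}d_v$ be the number of unit shifts it performs. Moving $p$ to the top of all of $O$ is a successful shift action with $T$ unit shifts, so the observation of Elkind et al.\ recalled earlier gives $T\ge\diffmax(I)/2$, and the earlier lemma asserting $\scrdiff(I,k)\le k$ for any successful action with $k$ unit shifts gives $\scrdiff(I,T)\le T$. Moreover, if $c^{\star}$ is $p$'s strongest opponent, then $p$ winning under $O$ forces $\diffmax(I)=\scr_E(c^{\star})-\scr_E(p)\le T+\mathrm{loss}_O(c^{\star})$, and since each touched vote costs $c^{\star}$ at most one point, $\mathrm{loss}_O(c^{\star})\le\opt(I)$; hence $\opt(I)\ge\diffmax(I)-T$. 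I would also keep $\opt(I)\ge T/\max_v d_v$ in reserve for the regime of a few very powerful votes.

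For the upper bound, let $g$ be the number of votes the greedy touches; we may assume $g\ge\opt(I)$, as otherwise there is nothing to prove. Its first $\opt(I)$ touched votes are those in which $p$ started lowest, so their $d$-values are the $\opt(I)$ largest ones and sum to at least $T$; thus after $\opt(I)$ steps $p$ has gained at least $T$ points, the potential $\Phi=\sum_c\max\{0,\scr(c)-\scr(p)\}$ has dropped to at most $\scrdiff(I,T)\le T$, and no opponent trails $p$ by more than $\diffmax(I)-T$. From here on, by the tie-breaking rule every step raises $\scr(p)$ while lowering a current leader, so the largest trailing gap shrinks by at least two per step; hence at most $\tfrac12(\diffmax(I)-T)+O(1)$ further steps suffice. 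Combining with $\opt(I)\ge\diffmax(I)-T$ yields $g\le\opt(I)+\tfrac12(\diffmax(I)-T)+O(1)\le\tfrac32\opt(I)+O(1)$, and a careful treatment of the roundings reduces the additive term to $1$. I would finish by handling the two degenerate cases — when $p$ cannot be brought to the top of enough votes (then $p$ already sits on top of every vote and certainly wins) and when a single vote already suffices — and by noting that the procedure runs in polynomial time, being essentially one sort plus at most $n$ winner recomputations.

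The step I expect to be the genuine obstacle is pinning the constant to $1.5$ rather than $2$. The estimate $T\ge\diffmax(I)/2$ on its own yields only a $2$-approximation, and in fact the plain "shift $p$ up in the lowest votes" rule with an unlucky tie-break can be forced to a ratio arbitrarily close to $2$. The argument must therefore combine the second lower bound $\opt(I)\ge\diffmax(I)-T$ with the fact that each step of the greedy's short "catch-up" tail passes a current front-runner and so closes the gap by two, not one — a property that the refined tie-breaking is designed to guarantee but that one still has to verify (such a vote always exists while $p$ is behind, since a current leader that beats $p$ must stand above $p$ in some still-untouched vote). Treating near-ties at the top of the standings (where a step may close the gap by only one) and the powerful-vote regime is where I expect the remaining bookkeeping, and the "$+1$", to come from.
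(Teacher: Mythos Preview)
Your central claim—that in the ``tail'' phase after the first $\opt(I)$ greedy steps the maximum gap shrinks by at least two per step—is false, and this is not a matter of bookkeeping. Consider the situation where all remaining untouched votes have $p$ in position~$2$, and the candidates sitting above $p$ in these votes are $c_1,\ldots,c_r$, all tied for the lead with gap $\Delta$ to $p$. Your tie-break (pass \emph{some} candidate ahead of $p$) picks, say, the vote with $c_1$ on top: $p$ gains one point, $c_1$ loses one, but $c_2,\ldots,c_r$ are untouched and remain the leaders with gap $\Delta-1$. After cycling once through all $r$ of them, the maximum gap has dropped by only $r+1$ in $r$ steps, i.e.\ by roughly $1$ per step when $r$ is large. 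Plugging this into your inequality $g\le\opt(I)+\tfrac12(\diffmax(I)-T)+O(1)$ replaces the $\tfrac12$ by~$1$ and you recover only the $2$-approximation. Strengthening the tie-break to always target a \emph{current leader} (which is what the paper does, not what you wrote) does not help here either: with several tied leaders, hitting one still leaves the others in place.

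The paper's proof sidesteps this entirely. It first shows a structural lemma: some optimal solution bribes \emph{all} voters that rank $p$ strictly below position~$i$, plus a subset $W_i$ of those ranking $p$ exactly at position~$i$. Greedy also bribes all voters below~$i$ first, so the whole comparison happens inside $V_i$. For $i\ge 3$, each additional greedy bribe raises $p$'s score by $i-1\ge 2$, while the optimum's $|W_i|$ bribes can close a total gap of at most $i\cdot|W_i|$; this gives the ratio $\tfrac{i}{i-1}\le\tfrac32$ directly, with no need to track who the current leader is. For $i=2$ the paper uses a sharper tie-break (pick the voter whose top candidate has the \emph{highest} current score) and argues that greedy is then exactly optimal. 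Your tie-break (any candidate ahead of $p$) is weaker and would not give optimality in this case either.
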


\begin{proof}
  Let $I = (E, p, \psi)$ be an instance of
  Borda-\textsc{Shift-Bribery} with uniform-all-or-nothing prices,
  where $E = (C, V, \{\succ_v\}_{v \in V})$.  Let $m$ be the number of
  candidates. Recall that for all-or-nothing prices a solution is
  simply a set of bribed voters because we can move $p$ to the tops of
  their preference orders without additional cost.

  Our algorithm proceed as follows: As long as $p$ is still not a
  winner, the algorithm bribes a voter that ranks $p$ on the lowest
  position. When the lowest position is at least three, ties are
  broken arbitrarily. However, if the lowest position is two, then the
  algorithm breaks the ties by choosing the voter whose top-ranked
  candidate has the highest score.

  For each $j$, let $V_j \subseteq V$ denote the set of the voters
  that rank $p$ on position $j$.  The key observation for the analysis
  of our algorithm is that to obtain an optimal solution, it suffices
  to bribe all the voters from $V_m$, then all the voters from
  $V_{m-1}$, and so on, until some value $i$, such that we only bribe
  a subset of the voters from $V_i$. We formalize this observation in
  the following lemma.

  \begin{lemma}\label{lem:opt-w-exists}
    There exists an integer $i \geqs 2$, an optimal solution $W$ for
    $I$, and a set of voters $W_i \subseteq V_i$ such that $W =
    (\bigcup_{j = i+1}^m V_j) \cup W_i$.
  \end{lemma}
  \begin{subproof}
    Let $W'$ be some optimal solution for $I$. As long as $W'$ is
    still not in the required form, we modify $W'$ as follows:
    \begin{enumerate}
      \item remove a voter in $W'$ that ranks $p$ on the highest position,
      \item add a voter from $V \setminus W'$ that ranks $p$
        on the lowest position.
    \end{enumerate}
    Note that in each step of this process the removed voter and the
    added voter are from different $V_i$'s.  Thus in each step we
    increase the final scores of some candidates by at most $1$ (by
    canceling a shift which decreased their scores) but we also
    increase the final score of $p$ by at least $1$.  Hence $W'$
    remains an optimal solution after each step.  It is also easy to
    see that the process terminates when a solution is in the required
    form.
  \end{subproof}

  Clearly, the greedy algorithm also first bribes all the voters in
  $(\bigcup_{j = i+1}^m V_j)$ and it never bribes any voter in $V_{i -
    1}$. The only difference between the greedy algorithm and $W$ is
  in how they bribe voters in $V_i$. To analyze this difference, let
  $E'$ denote the election after all voters in $(\bigcup_{j = i+1}^m
  V_j)$ are bribed. Note that in $E'$ each voter ranks $p$ among his
  or her top $i$ candidates.

  We consider two cases based on the value of $i$:

  \begin{description}
  \item[Case $\boldsymbol{i = 2}$.] In this case the greedy algorithm
    outputs an optimal solution because it bribes the voters in $V_2$
    in an optimal order (the score of the candidate with the highest
    score is decreased first, and so on).
  \item[Case $\boldsymbol{i \geqs 3}$.] We claim that in this case the
    algorithm outputs a solution of cost at most $\frac{i}{i - 1}
    \opt(I) + 1$; note that this immediately implies the approximation
    guarantee in Theorem~\ref{thm:apx-borda-uniform-aon}.
    To see this, observe that since $W$ only bribes $|W_i|$ voters in
    $E'$, every candidate $c \ne p$ has his score decreased by at most
    $|W_i|$ points, whereas the score of our preferred candidate $p$
    increases by exactly $(i - 1) \cdot |W_i|$ points. Since $p$ wins
    the election after the voters in $W_i$ are bribed, we have
    $\scr_{E'}(c) - \scr_{E'}(p) \leq i \cdot |W_i|$.
    Since every voter bribed in $E'$ by the greedy algorithm increases
    the score of $p$ by $i - 1$, $p$ becomes a winner after at most
    $\lceil \frac{i}{i - 1}\cdot|W_i| \rceil$ voters in $E'$ are
    bribed. In total, the greedy algorithm bribes at most
    $\sum_{j=i+1}^m |V_j| + \lceil \frac{i}{i - 1}\cdot|W_i| \rceil
    \leqs \frac{i}{i - 1}|W| + 1 = \frac{i}{i - 1}\opt(I) + 1$ voters
    in $E$, as claimed.
  \end{description}

  Note that the greedy algorithm gives a solution of cost at most
  $\frac{i}{i - 1}\opt(I) + 1$ for $i$ defined in
  Lemma~\ref{lem:opt-w-exists}.  Hence, the worst guarantee, $1.5 \cdot
  \opt(I) + 1$, is achieved for $i=3$.
\end{proof}

We conclude this section by remarking that the
Borda-\textsc{Shift}-\textsc{Bribery} problem with
uniform-all-or-nothing prices was not known to be NP-hard before;
specifically, Elkind et al.~\cite{EFS09} only showed NP-hardness for
$(1, \infty)$-all-or-nothing prices. Nevertheless, it is possible to
(carefully) modify the proof to yield NP-hardness for
uniform-all-or-nothing prices as well. We provide the full proof in
Appendix~\ref{app:np-hardness-uni}.

\section{Positional Scoring Rules}\label{sec:positional}

In this section we give our main result: A PTAS for the case of
\textsc{Shift-Bribery} with an arbitrary positional scoring rule,
whose scoring vectors are, possibly, different for different voters,
and for arbitrary prices. The algorithm and the proof is somewhat
involved and we split its description into two parts, by first
deriving an algorithm with an additive error and then using it to form
the desired PTAS.

\subsection{An Algorithm with Additive Error}
The crucial part of our algorithm
is an approximation algorithm
that yields a good solution in the case where
$\psi^{\max}(I)$, the highest non-infinite price in the instance, is small.

The main complication in the general prices case, as opposed to the
unit prices case, is that the cost of obtaining some $k+1$ points for
the preferred candidate can be far larger than the cost of obtaining
$k$ points. Thus the main trick used in the proofs from the previous
section---deciding up front how many more points than in an optimal
solution the preferred candidate would get---cannot be directly
applied.  We work around this problem by first solving a linear
program which, roughly speaking, for a given value $\varepsilon > 0$
tells us how many extra points the preferred candidate needs to obtain
so that he or she has score higher than all but at most
$\nicefrac{1}{\varepsilon}$ candidates. Then, using a technique
similar to the one we used for Lemma~\ref{lem:aptas-borda-unit}---in
particular, solving a second linear program, for which a basic
solution contains a large number of integral variables---we find our
approximate solution.

\begin{lemma} \label{lem:ptas-borda-main} There is an algorithm that
  given $\varepsilon > 0$ and an instance $I$ of
  $\cR$-\textsc{Shift-Bribery}, where $\cR$ is a given positional
  scoring rule with a possibly different scoring vector for each
  voter, outputs a successful shift action for $I$ of cost at most
  $(1 + \varepsilon)\opt(I) + (1 + 1/\varepsilon) \psi^{\max}(I)$ and
  runs in time $|I|^{O(1)}$.
\end{lemma}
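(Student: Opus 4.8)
The plan is to mirror the two-stage linear-programming approach that worked for Lemma~\ref{lem:aptas-borda-unit}, but with an extra preliminary LP to cope with the general (non-unit) prices. As noted in the text, the fundamental obstacle is that, unlike the unit-price case, we cannot decide up front ``how many extra points $p$ should earn beyond the optimum'' because the marginal cost of the $(k{+}1)$-st point can dwarf that of the $k$-th. First I would guess, by trying all $\poly(|I|)$ possibilities, the exact score $s^\star := \scr_E(p) + K$ that $p$ attains in a fixed optimal solution (here $K$ is the number of unit shifts in that solution, so $s^\star$ ranges over at most $mn$ values). Fixing this target score, the question of which candidates $p$ must actively ``beat down'' is well-defined. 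I would then solve a first LP: variables $x_{(v,j)}$ as in Figure~\ref{fig:lpu} (monotone in $j$, in $[0,1]$), the objective being $\cost$ now written as a linear function of the $x$'s via the increments $\Delta\psi_v(\ell)$, a constraint forcing $p$'s score to reach $s^\star$, and for every candidate $c$ a constraint that, after the shifts, $\scr(c) \le s^\star$. The optimum of this relaxation is at most $\opt(I)$. The point of the first LP is to read off, from its (fractional) optimal solution, a cheap way to make $p$ dominate all but a bounded number of candidates; the candidates that the fractional solution ``fails'' to handle integrally will be few.

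Next I would, following the Lemma~\ref{lem:aptas-borda-unit} template, identify the set $C_\bad$ of at most $1/\varepsilon$ ``stubborn'' candidates — those whose score still exceeds $s^\star$ by more than a $\varepsilon$-fraction of the budget in a suitable sense — and form a second LP, $\LPU$-style, that only imposes the per-candidate domination constraints for $c \in C_\bad$, again minimizing $\sum_v \psi_v$ expressed through the $\Delta\psi_v(\ell)$ increments. Taking a \emph{basic} solution of this second LP, the number of tight constraints forces all but at most $|C_\bad| \le 1/\varepsilon$ voters to be integral (each fractional voter is ``charged'' to one of the at most $1/\varepsilon$ constraints of the second type, exactly as in the proof of Lemma~\ref{lem:aptas-borda-unit}). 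Rounding each $x^{\opt}_{(v,j)}$ down yields a genuine shift action whose cost is at most the LP optimum (rounding down only drops unit shifts, hence only lowers cost), and which already makes $p$ beat every $c \in C_\bad$ up to an error of at most $|V \setminus \Vi| \le 1/\varepsilon$ points, coming from the at most $1/\varepsilon$ non-integral voters. To absorb that slack and simultaneously dominate the candidates \emph{outside} $C_\bad$, I would then push $p$ further — to the needed extra positions in the cheapest remaining votes — at additional cost at most $(1/\varepsilon + 1)\psi^{\max}(I)$, since each of the at most $\lceil 1/\varepsilon\rceil + 1$ extra unit shifts (or extra bribed voters) costs at most $\psi^{\max}(I)$.

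Putting the accounting together: the rounded LP solution costs at most $\opt(I)$, the multiplicative $(1+\varepsilon)$ factor comes from the relaxation between ``cost at most $k$'' and the true optimum in the same way as in the simple PTAS (we only ever invoke the second LP for guesses consistent with feasibility, and the correct guess $s^\star$ certifies feasibility with cost $\le \opt(I)$), and the additive $(1 + 1/\varepsilon)\psi^{\max}(I)$ term pays for the top-up shifts. The whole procedure runs in polynomial time since there are $\poly(|I|)$ guesses and each iteration solves two linear programs for basic solutions. The step I expect to be the main obstacle is the bookkeeping in the second LP: getting the right constraint count so that a basic solution is integral on all but $1/\varepsilon$ voters, while the objective is a faithful linear encoding of the piecewise-linear price functions $\psi_v$ via the increments $\Delta\psi_v(\ell)$ — one must check that monotonicity of $\psi_v$ makes this encoding correct and that rounding down a basic solution cannot increase cost. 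The rest is a careful but routine adaptation of the arguments already developed for Lemma~\ref{lem:aptas-borda-unit} and Theorem~\ref{thm:simple-ptas-borda}.
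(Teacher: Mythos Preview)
Your outline carries Borda-specific reasoning into a setting where it no longer applies, and this breaks the argument at two places.

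First, the guessing of $s^\star$. You write $s^\star = \scr_E(p) + K$ where $K$ is the number of unit shifts, and conclude there are at most $mn$ values to try. That identity holds only for Borda; for an arbitrary positional rule with per-voter scoring vectors $\bw^v$, the score gain from a shift action is $\sum_v (w^v_{\text{new}} - w^v_{\text{old}})$, which is a sum of real increments $\Delta w^v_j$ and need not take polynomially many values. So the enumeration is not polynomial. The paper sidesteps this entirely: LP1 has, for every candidate $c$, the constraint $\scr_E(c) - (\text{loss of }c) \leqs \scr_E(p) + (\text{gain of }p)$ with both sides expressed in the $x$-variables, so no target score is ever guessed.

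Second, the ``round down and top up'' step. After rounding down, each of the at most $1/\varepsilon$ non-integral voters can cause a score shortfall of $\Delta w^v_j$ for some $j$, not of $1$; for general scoring vectors this can be arbitrarily large. There is no reason $\lceil 1/\varepsilon \rceil + 1$ extra unit shifts (or even extra fully-bribed voters) should repair the shortfall, since an extra shift in one voter buys you whatever $\Delta w^{v'}_{j'}$ happens to be, which is unrelated to the damage. The paper instead rounds \emph{up}: because every coefficient in the LP2 score constraints is non-negative, rounding up preserves feasibility automatically, and the cost overshoot is at most $\psi^{\max}(I)$ per non-integral voter, giving the additive $(1+1/\varepsilon)\psi^{\max}(I)$ directly.

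Relatedly, your bound $|C_{\bad}| \leqs 1/\varepsilon$ is asserted but not proved; the ``$\varepsilon$-fraction of the budget'' criterion mixes cost units with score units, which coincide only for Borda with unit prices. The paper's device here is the scaling $y^*_{(v,j)} = \min\{1,(1+\varepsilon)x^{\opt}_{(v,j)}\}$: one defines $C_{\bad}$ as the candidates still beating $p$ after this inflated (partly fractional) shift, and an averaging argument over the inequalities for $c\in C_{\bad}$ yields $|C_{\bad}| < 1/\varepsilon$. This scaling is also what produces the multiplicative $(1+\varepsilon)$ in the final bound (the cost of $y^*$ is at most $(1+\varepsilon)$ times the LP1 optimum), replacing your guessing step. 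That is the missing idea.
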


\begin{figure}
  \small
  \FrameSep=3pt
  \begin{framed}
  \begin{align}
    &\text{minimize } \sum_{v \in V} \sum_{j \in [\pi_v^{-1}(p) - 1]} \Delta\psi_{v}(\pi_v^{-1}(p) - j) \cdot x_{(v, j)} \text{\quad s.t.: } \nonumber \\
    & 0 \leqs x_{(v, 1)} \leqs \cdots \leqs x_{(v, \pi_v^{-1}(p) - 1)} = x_{(v, \pi_v^{-1}(p))} = \cdots = x_{(v, m)} = 1  \text{\hspace{47pt} ,} \forall v \in V \label{eq:sstar} \\
         &\scr_E(c) - \sum_{v \in V_{c \succ p}} \Delta w_{\pi_v^{-1}(c)}^v \cdot x_{(v, \pi_v^{-1}(c))} \leqs  \scr_E(p) + \sum_{v \in V} \sum_{j \in [\pi_v^{-1}(p) - 1]} \Delta w^v_j \cdot x_{(v, j)}  \text{ ,} \forall c \in C \label{eq:more-score-1} \!\!\!
  \end{align}
  \end{framed}
  \caption{\label{fig:lp1}LP1 for the proof of
  Lemma~\ref{lem:ptas-borda-main}.  For each voter $v$, we have
  variables $x_{(v,1)}, \ldots, x_{(v,m)}$.  For an integral solution,
  Constraints~\eqref{eq:sstar} ensure that the variables specify a
  shift action, Constraints~\eqref{eq:more-score-1} ensure that 
  this shift action is
  successful, and the optimization goal specifies its cost.}
\end{figure}

\begin{figure}[t!]
  \small
  \FrameSep=3pt
  \begin{framed}
    \begin{align}
      &\text{minimize } \sum_{v \in V} \sum_{j \in [\pi_v^{-1}(p) - 1]} \Delta\psi_{v}(\pi_v^{-1}(p) - j) \cdot y_{(v, j)} \text{\quad s.t.:} \nonumber \\
       & 0 \leqs y_{(v, 1)} \leqs \cdots \leqs y_{(v, j_v - 1)} \leqs y_{(v, j_v)} = \cdots = y_{(v, m)} = 1 \text{\hspace{109pt} ,} \forall v \in V \label{eq:order} \\[2mm]
       &\scr_E(c) - \textstyle\sum_{v \in V_{c \succ p}} \Delta w_{\pi_v^{-1}(c)}^v \cdot y_{(v, \pi_v^{-1}(c))} \leqs  \scr_E(p) + \textstyle\sum_{v \in V} \sum_{j \in [\pi_v^{-1}(p) - 1]} \Delta w^v_j \cdot y_{(v, j)}  ,\forall c \in C_{\bad} \label{eq:more-score} \\
      &\textstyle\sum_{v \in V} \sum_{j \in [\pi_v^{-1}(p) - 1]} \Delta w^v_j \cdot y_{(v, j)} \geqs \textstyle\sum_{v \in V} \sum_{j \in [\pi_v^{-1}(p) - 1]} \Delta w^v_j \cdot y^*_{(v, j)}  \label{eq:cond-num-shift}
    \end{align}
\end{framed} 
\caption{\label{fig:lp2}LP2 for the proof of
  Lemma~\ref{lem:ptas-borda-main}.  For each voter $v$, we have
  variables $y_{(v,1)}, \ldots, y_{(v,m)}$.  For an
  integral solution, Constraints~\eqref{eq:order} ensure that the
  variables specify a shift action that pushes $p$ at least as far as
  shift action $\bs^*$ does, Constraints~\eqref{eq:more-score} ensure that
  $p$'s score at least matches the scores of candidates in $C_\bad$,
  and Constraint~\eqref{eq:cond-num-shift} ensures that $p$'s score is
  higher than the scores of candidates not in $C_\bad$.}
\end{figure}

\begin{proof}
  We first describe the somewhat non-intuitive algorithm, then
  we explain its workings and argue why it produces the desired
  approximate solution.  Let $I = (E,p, \psi)$ be an instance of
  $\cR$-\textsc{Shift-Bribery}, where $E = (C,V)$ is an election, $p$
  is the preferred candidate, and $\psi = \{\psi_v\}_{v \in V(E)}$ is
  a collection of price functions. Further, let $\cR$ be a positional
  scoring rule specified via scoring vectors $(\bw^v)_{v \in V}$ (with
  one vector for each voter in $V$). Our algorithm proceeds as
  follows:

  \begin{enumerate}
  \item We solve linear program LP1 from Figure~\ref{fig:lp1}.  Let
    $(x^{\opt}_{(v, j)})_{v \in V, j \in [m]}$ be the computed optimal
    solution found for this program. Note that the value of the
    optimization goal for LP1 is at most $\opt(I)$, because this would
    be the cost of an optimal integral solution.

  \item For every $v \in V, j \in [m]$, we let
    $y^*_{(v, j)} = \min\{1, (1 + \varepsilon)x^{\opt}_{(v, j)}\}$,
    and we let $j_v \in [m]$ be the smallest index such that
    $y^*_{(v, j_v)} = 1$. Intuitively, the shift action $\bs^*$ that
    for each voter $v$ shifts $p$ to position $j_v$ is our ``first
    order approximation'' of the shift action that we will eventually
    produce; its cost is at most $(1+\varepsilon)\opt(I)$ but after
    applying it, $p$'s score might still be lower than that of some of
    the candidates.
    Formally, we define set $C_\bad$ to contain
    all the candidates $c$ such that:
    \begin{align}
      \scr_E(c) - \sum_{v \in V_{c \succ p}} \Delta w^v_{\pi^{-1}_v(c)} \cdot \ind[\pi^{-1}_v(c) \geqs j_v] > \scr_E(p) + \bigg(\sum_{v \in V} \sum_{j \in [\pi_v^{-1}(p) - 1]} \Delta w^v_j \cdot y^*_{(v, j)} \bigg). \label{eq:easy-score}
    \end{align}
    On the left-hand side of the above equation, candidate $c$ loses
    as many points as indicated by shift action $\bs^*$, but the score
    of $p$, on the right-hand side, is computed with respect to the
    possibly fractional values $y^*_{(v,j)}$.


  \item We solve linear program LP2 from Figure~\ref{fig:lp2} for an
    optimal, basic solution
    $(y^{\opt}_{(v, j)})_{v \in V, j \in [m]}$. We output shift
    action $\bs$ that corresponds to
    $(\lceil y^{\opt}_{(v, j)}\rceil)_{v \in V, j \in [m]}$.
  \end{enumerate}

  The algorithm certainly runs in polynomial time. Let us now explain
  why the shift action that it outputs indeed ensures that $p$ is a
  winner. Foremost, due to Constraints~\eqref{eq:order}, shift
  action $\bs$ (weakly) dominates $\bs^*$ (i.e., for each voter it
  shifts $p$ at least as far as $\bs^*$ does). Thus, after applying
  $\bs$, each opponent of $p$ who is not in $C_\bad$ has score at most
  as high as in the left-hand side of Equation~\eqref{eq:easy-score}.
  Constraint~\eqref{eq:cond-num-shift} ensures that $p$ obtains at
  least as high a score as on the right-hand side of
  Equation~\eqref{eq:easy-score} and, thus, $p$ does not lose against any
  candidate not in $C_\bad$.  On the other hand,
  Constraints~\eqref{eq:more-score} ensure that $p$ does not lose against
  anyone in $C_\bad$.

  It remains to argue that $\cost_I(\bs)$ is at most as required in
  the lemma.  To this end, we first claim that $|C_\bad| < 1/\varepsilon$.

  \begin{claim} \label{claim:bound-unsat}
    $|C_{\bad}| < 1/\varepsilon$.
  \end{claim}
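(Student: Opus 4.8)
The plan is to charge every candidate of $C_{\bad}$ against the quantity
\[
  g \;:=\; \sum_{v \in V} \sum_{j \in [\pi_v^{-1}(p)-1]} \Delta w^v_j \,\bigl(y^*_{(v,j)} - x^{\opt}_{(v,j)}\bigr) \;\geq\; 0 ,
\]
the number of extra (fractional) points that the scaled-up profile $y^*$ awards to $p$ over the LP1 optimum $x^{\opt}$; this is nonnegative since $y^*_{(v,j)} \geq x^{\opt}_{(v,j)}$ and every $\Delta w^v_j \geq 0$ (scoring vectors being non-increasing). Fix $c \in C_{\bad}$. Constraint~\eqref{eq:more-score-1} applied to $c$ upper-bounds $\scr_E(c)-\scr_E(p)$, while the defining inequality~\eqref{eq:easy-score} of $C_{\bad}$ strictly lower-bounds the same difference; comparing the two and rearranging yields
\[
  g_c \;:=\; \sum_{v \in V_{c \succ p}} \Delta w^v_{\pi_v^{-1}(c)} \Bigl( x^{\opt}_{(v,\pi_v^{-1}(c))} - \ind[\pi_v^{-1}(c) \geq j_v] \Bigr) \;>\; g ,
\]
i.e.\ the gap between the points $c$ loses under $x^{\opt}$ and under the shift action $\bs^*$ exceeds $g$.

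Next I would bound each $g_c$ from above. Whenever $\pi_v^{-1}(c) \geq j_v$, the corresponding term of $g_c$ is non-positive, because $x^{\opt}_{(v,\pi_v^{-1}(c))} \leq 1 = \ind[\pi_v^{-1}(c)\geq j_v]$ (all variables lie in $[0,1]$ by~\eqref{eq:sstar}) and $\Delta w^v_{\pi_v^{-1}(c)} \geq 0$; discarding these terms leaves $g_c \leq \sum_{v \in V_{c\succ p} :\, \pi_v^{-1}(c) < j_v} \Delta w^v_{\pi_v^{-1}(c)}\, x^{\opt}_{(v,\pi_v^{-1}(c))}$. For any index $j < j_v$ we have $y^*_{(v,j)} \neq 1$ by minimality of $j_v$, hence $y^*_{(v,j)} = \min\{1,(1+\varepsilon)x^{\opt}_{(v,j)}\} = (1+\varepsilon)x^{\opt}_{(v,j)}$ and therefore $x^{\opt}_{(v,j)} = \tfrac1\varepsilon\bigl(y^*_{(v,j)} - x^{\opt}_{(v,j)}\bigr)$. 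Since every $v \in V_{c\succ p}$ has $\pi_v^{-1}(c) \leq \pi_v^{-1}(p)-1$, the positions occurring here index genuine LP variables, and substituting gives
\[
  g_c \;\leq\; \frac1\varepsilon \sum_{v \in V_{c\succ p} :\, \pi_v^{-1}(c) < j_v} \Delta w^v_{\pi_v^{-1}(c)}\, \bigl( y^*_{(v,\pi_v^{-1}(c))} - x^{\opt}_{(v,\pi_v^{-1}(c))} \bigr) .
\]

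Finally I would sum this over $c \in C_{\bad}$. For a fixed voter $v$ and a fixed position $j$ at most one candidate satisfies $\pi_v^{-1}(c) = j$ (namely $\pi_v(j)$), so each pair $(v,j)$ is charged by at most one $c \in C_{\bad}$; and since every summand $\Delta w^v_j\,(y^*_{(v,j)} - x^{\opt}_{(v,j)})$ is nonnegative, enlarging the inner index set to all of $[\pi_v^{-1}(p)-1]$ only increases the total. Hence $\sum_{c \in C_{\bad}} g_c \leq \tfrac1\varepsilon \sum_{v\in V}\sum_{j\in[\pi_v^{-1}(p)-1]} \Delta w^v_j\,(y^*_{(v,j)} - x^{\opt}_{(v,j)}) = g/\varepsilon$. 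On the other hand $g_c > g$ for each of the $|C_{\bad}|$ candidates, so if $C_{\bad} \neq \emptyset$ then $|C_{\bad}|\cdot g < \sum_{c \in C_{\bad}} g_c \leq g/\varepsilon$: when $g > 0$ this is exactly $|C_{\bad}| < 1/\varepsilon$, and when $g = 0$ it reads $0 < 0$, which is absurd, so $C_{\bad} = \emptyset$. Either way $|C_{\bad}| < 1/\varepsilon$.

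The only genuinely delicate part is the bookkeeping in the last step: one must discard the non-positive $\pi_v^{-1}(c) \geq j_v$ contributions to each $g_c$ \emph{before} invoking the ``at most one candidate per position $(v,j)$'' charging, and one should keep checking that the positions $\pi_v^{-1}(c)$ arising for $v \in V_{c\succ p}$ always lie in the range $[\pi_v^{-1}(p)-1]$ on which the variables $x_{(v,j)}, y_{(v,j)}$ are free rather than pinned to $1$. Once the identity $x^{\opt}_{(v,j)} = \tfrac1\varepsilon(y^*_{(v,j)} - x^{\opt}_{(v,j)})$ for $j < j_v$ is isolated, the remainder is just this two-sided comparison of $\sum_{c \in C_{\bad}} g_c$.
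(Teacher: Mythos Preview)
Your proof is correct and follows essentially the same approach as the paper: derive the key inequality $g < g_c$ by combining the LP1 constraint~\eqref{eq:more-score-1} with the defining inequality~\eqref{eq:easy-score} of $C_{\bad}$, drop the non-positive terms from $g_c$, use the identity $y^*_{(v,j)} - x^{\opt}_{(v,j)} = \varepsilon\, x^{\opt}_{(v,j)}$ for $j < j_v$, and then sum over $c \in C_{\bad}$ with the ``one candidate per position'' charging. The only cosmetic difference is that the paper pivots on the quantity $P = \sum_{v}\sum_{j\in[j_v-1]} \Delta w^v_j\, x^{\opt}_{(v,j)}$ (a lower bound on your $g/\varepsilon$) rather than on $g$ itself; your handling of the degenerate case $g=0$ is in fact slightly more explicit than the paper's.
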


  \begin{subproof}
    Consider any $c \in C_{\bad}$. From the definition of $C_{\bad}$, we have:
    \begin{align*}
      \scr_E(p) + \left(\textstyle \sum_{v \in V} \sum_{j \in [\pi_v^{-1}(p) - 1]} \Delta w^v_j \cdot y^*_{(v, j)} \right) < 
      &\scr_E(c) - \left(\textstyle \sum_{v \in V_{c \succ p}} \Delta w^v_{\pi^{-1}_v(c)} \cdot \ind[\pi^{-1}_v(c) \geqs j_v]\right).
    \end{align*}
    On the other hand, from~\eqref{eq:more-score-1}, we have:
    \begin{align*}
      \scr_E(p) + \left(\textstyle\sum_{v \in V} \sum_{j \in [\pi_v^{-1}(p) - 1]} \Delta w^v_j \cdot x^{\opt}_{(v, j)}\right) \geqs
      &\scr_E(c) - \left(\textstyle\sum_{v \in V_{c \succ p}} \Delta w_{\pi_v^{-1}(c)}^v \cdot x^{\opt}_{(v, \pi_v^{-1}(c))}\right).
    \end{align*}
    Subtracting the latter inequality from the former gives:
    \begin{align} \label{eq:diff} 
      \sum_{v \in V}  \sum_{j \in [\pi_v^{-1}(p) - 1]} \Delta w^v_j \cdot (y^*_{(v, j)} - x^{\opt}_{(v, j)})
      &< \textstyle\sum_{v \in V_{c \succ p}} \Delta w_{\pi_v^{-1}(c)}^v \cdot (x^{\opt}_{(v, \pi_v^{-1}(c))} - \ind[\pi^{-1}_v(c) \geqs j_v]).
    \end{align}
    We lower bound the left-hand side of the above as follows (the
    inequality follows because we sum fewer non-negative terms, and
    the equality follows from the fact that for $j < j_v$, we have
    $y^*_{(v, j)} = (1+\varepsilon)x^{\opt}_{(v, j)}$):
    \begin{align*}
      \sum_{v \in V} \sum_{j \in [\pi_v^{-1}(p) - 1]} \Delta w^v_j \cdot (y^*_{(v, j)} - x^{\opt}_{(v, j)})
      &\geqs \textstyle\sum_{v \in V} \sum_{j \in [j_v - 1]} \Delta w^v_j \cdot (y^*_{(v, j)} - x^{\opt}_{(v, j)})\\
      &= \varepsilon \textstyle\sum_{v \in V} \sum_{j \in [j_v - 1]} \Delta w^v_j \cdot x^{\opt}_{(v, j)}.
    \end{align*}
    We upper bound the right-hand side of~\eqref{eq:diff} as follows,
    by dropping some possibly negative terms and adding some
    non-negative ones:
    \begin{align*}
      \sum_{v \in V_{c \succ p}} \Delta w_{\pi_v^{-1}(c)}^v \cdot (x^{\opt}_{(v, \pi_v^{-1}(c))} - \ind[\pi^{-1}_v(c) \geqs j_v])
      \leqs \textstyle\sum_{v \in V} \ind[\pi^{-1}_v(c) < j_v] \cdot \Delta w_{\pi_v^{-1}(c)}^v \cdot x^{\opt}_{(v, \pi_v^{-1}(c))}.
    \end{align*}
    Plugging the above two inequalities back to~\eqref{eq:diff}, we
    get:
    \begin{align*}
      \varepsilon &\textstyle\sum_{v \in V} \textstyle\sum_{j \in [j_v - 1]} \Delta w^v_j \cdot x^{\opt}_{(v, j)} < \textstyle\sum_{v \in V} \ind[\pi^{-1}_v(c) < j_v] \cdot \Delta w_{\pi_v^{-1}(c)}^v \cdot x^{\opt}_{(v, \pi_v^{-1}(c))}.
    \end{align*}
    Summing this over all candidates $c \in C_{\bad}$, we have the
    following (the final equality follows by replacing the summation
    over candidates $c$ with a summation over positions $j$, $j < j_v$
    where these candidates are ranked):
    \begin{align*}
      |C_{\bad}| \cdot \varepsilon \textstyle\sum_{v \in V} \sum_{j \in [j_v - 1]} \Delta w^v_j \cdot x^{\opt}_{(v, j)}
                 &< \textstyle\sum_{c \in C_{\bad}} \sum_{v \in V} \ind[\pi^{-1}_v(c) < j_v] \cdot \Delta w_{\pi_v^{-1}(c)}^v \cdot x^{\opt}_{(v, \pi_v^{-1}(c))} \\
                 &\leqs \textstyle\sum_{c \in C} \sum_{v \in V} \ind[\pi^{-1}_v(c) < j_v] \cdot \Delta w_{\pi_v^{-1}(c)}^v \cdot x^{\opt}_{(v, \pi_v^{-1}(c))} \\
                 &= \textstyle\sum_{v \in V} \sum_{j \in [j_v - 1]} \Delta w_j^v \cdot x^{\opt}_{(v, j)}.
    \end{align*}
    In the above, the expression on the left-most side is of the form
    $|C_\bad|\cdot\varepsilon$ times the expression on the right-most
    side, so we get $|C_{\bad}| < 1/\varepsilon$, as desired.
  \end{subproof}

  We now proceed to bound $\cost_I(\bs)$. First, observe that
  an optimal integral solution for LP1 has cost $\opt(I)$ and, thus,
  for our optimal, but perhaps non-integral, solution 
  $(x_{(v, j)})_{v \in V, j \in [m]}$ we have:
  \begin{align} \label{eq:LP1-v-OPT} 
    \textstyle\sum_{v \in V} \sum_{j \in
    [\pi_v^{-1}(p) - 1]}\Delta\psi_{v}(\pi_v^{-1}(p) - j) 
    x^{\opt}_{(v, j)} \leqs \opt(I).
  \end{align}

  For each voter $v \in V$, we say that $v$ is \emph{integral} if
  $y^{\opt}_{(v, j)} \in \{0, 1\}$ for all $j \in [m]$. Let $\Vi$
  denote the set of all integral voters.  Recall that
  $(y^{\opt}_{(v, j)})_{v \in V, j \in [m]}$ is a basic solution for
  LP2, meaning that exactly $m n$ linearly independent inequalities
  must be tight (because we have $mn$ variables).\footnote{We stress
    here that the inequalities must be linearly independent because in
    Constraint~\eqref{eq:order} we have equalities, each defined by
    two linearly dependent inequalities; satisfying such an equality
    ``counts'' as only one tight inequality for a basic solution.}
  For each non-integral voter $v \notin \Vi$, only at most $m - 1$
  linearly independent inequalities in~\eqref{eq:order} are tight.
  However, there are only
  $1 + |C_{\bad}| < 1 + 1/\varepsilon$ inequalities of the
  form~\eqref{eq:more-score} and~\eqref{eq:cond-num-shift}. From this,
  we can conclude that less than $1 + 1/\varepsilon$ voters are not
  integral, i.e.:
  \begin{align} \label{eq:vnonint} |V \setminus \Vi| < 1 +
    1/\varepsilon.
  \end{align}
  As a result, we have that $\cost_I(\bs)$ equals:
  \begin{align*}
    & \sum_{v \in V} \sum_{j \in [\pi_v^{-1}(p) - 1]} \Delta\psi_{v}(\pi_v^{-1}(p) - j)\cdot \lceil y^{\opt}_{(v, j)} \rceil \\
                 &= \textstyle\sum_{v \in \Vi} \sum_{j \in [\pi_v^{-1}(p) - 1]} \Delta\psi_{v}(\pi_v^{-1}(p) - j)\cdot y^{\opt}_{(v, j)} 
                 + \textstyle\sum_{v \in V \setminus \Vi} \sum_{j \in [\pi_v^{-1}(p) - 1]} \Delta\psi_{v}(\pi_v^{-1}(p) - j)\cdot \lceil y^{\opt}_{(v, j)} \rceil. 
  \end{align*}
  Now, observe that the first summation on the right hand side is
  upper bounded by the optimum of LP2. Note that
  $(y^*_{(v, j)})_{v \in V, j \in [m]}$ is a solution to LP2. Hence,
  we have:

  \begin{align*}
    \cost_I(\bs) \leqs \hspace{33pt} \textstyle\sum_{v \in V} \hspace{17pt} \sum_{j \in [\pi_v^{-1}(p) - 1]} &\Delta\psi_{v}(\pi_v^{-1}(p) - j) \cdot y^*_{(v, j)} \\
                 + \textstyle\sum_{v \in V \setminus \Vi} \sum_{j \in [\pi_v^{-1}(p) - 1]} &\Delta\psi_{v}(\pi_v^{-1}(p) - j)\cdot \lceil y^{\opt}_{(v, j)} \rceil \\
                 \leqs (1 + \varepsilon) \textstyle\sum_{v \in V} \hspace{17pt} \textstyle\sum_{j \in [\pi_v^{-1}(p) - 1]} &\Delta\psi_{v}(\pi_v^{-1}(p) - j)\cdot x^{\opt}_{(v, j)} \\
                 + \textstyle\sum_{v \in V \setminus \Vi} \sum_{j \in [\pi_v^{-1}(p) - 1]} &\Delta\psi_{v}(\pi_v^{-1}(p) - j)\cdot \lceil y^{\opt}_{(v, j)} \rceil \\
                 \hspace{0pt} \overset{\eqref{eq:LP1-v-OPT}}{\leqs} \textstyle (1 + \varepsilon)\opt(I)
                 + \textstyle \sum_{v \in V \setminus \Vi} \sum_{j \in [\pi_v^{-1}(p) - 1]} &\Delta\psi_{v}(\pi_v^{-1}(p) - j)\cdot \lceil y^{\opt}_{(v, j)} \rceil.
  \end{align*}

  Now, observe that for each $v \in V$, if
  $\sum_{j \in [\pi_v^{-1}(p) - 1]} \Delta\psi_{v}(\pi_v^{-1}(p) -
  j)\cdot \lceil y^{\opt}_{(v, j)} \rceil$
  is $\infty$, then $\opt(I)$ must also be $\infty$ and the inequality
  we try to prove is trivially true. Hence, we may assume that
  $\sum_{j \in [\pi_v^{-1}(p) - 1]} \Delta\psi_{v}(\pi_v^{-1}(p) -
  j)\cdot \lceil y^{\opt}_{(v, j)} \rceil$
  is finite; in this case, this quantity is bounded by
  $\psi^{\max}(I)$. As a result, we can further bound $\cost_I(\bs)$
  by
  \begin{align*}
    \cost_I(\bs) &\leqs (1 + \varepsilon)\opt(I) + |V \setminus \Vi| \cdot \psi^{\max}(I) \\
                 &\overset{\eqref{eq:vnonint}}{\leqs} (1 + \varepsilon)\opt(I) + (1 + 1/\varepsilon) \psi^{\max}(I),
  \end{align*}
  which concludes our proof.
\end{proof}

\subsection{The Final PTAS}
Now we use the approximation algorithm with additive error to derive
an approximation algorithm with a purely multiplicative ratio. The
intuition behind this process is simple:
Since the algorithm from Lemma~\ref{lem:ptas-borda-main} works
well when $\psi^{\max}(I)$ is small, we will
first ``preprocess'' our instance so that $\psi^{\max}(I)$ is much
smaller than $\opt(I)$. To do so, note that if we consider an optimal
shift action $\bs^{\opt}$, then there are only a few voters $v$ such
that $\psi_v(s^{\opt}_v)$ is large; specifically, for every
$\delta > 0$, only at most $1/\delta$ voters have
$\psi_v(s^{\opt}_v) \geqs \delta \opt(I)$. This means that if we guess
such voters and the numbers of unit shifts that we apply to
them, then we can reduce the instance $I$ to another instance $I'$,
where $\psi^{\max}(I')$ is bounded by $\delta \opt(I)$. We then run
the algorithm from Lemma~\ref{lem:ptas-borda-main} on  $I'$. 
By selecting $\delta = O(\varepsilon^2)$, the additive
error becomes
$O((\delta / \varepsilon)\opt(I)) = O(\varepsilon \opt(I))$ as
intended. 

\begin{theorem} \label{thm:ptas-borda-main} There is an algorithm that
  given $\varepsilon > 0$ and an instance $I$ of
  $\cR$-\textsc{Shift-Bribery}, where $\cR$ is a given positional
  scoring rule with a possibly different scoring vector for each
  voter, outputs a successful shift action for $I$ of
  cost at most $(1 + \varepsilon)\opt(I)$ and runs in time
  $|I|^{O(1/\varepsilon^2)}$.
\end{theorem}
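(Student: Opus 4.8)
The plan is to bootstrap the additive-error algorithm of Lemma~\ref{lem:ptas-borda-main} into a genuine PTAS by guessing, and committing to, the few voters on which an optimal shift action spends a large amount of money. Fix a parameter $\delta=\Theta(\varepsilon^{2})$, to be pinned down at the end, let $\bs^{\opt}$ be an optimal shift action for $I$, and write $\opt(I)=\sum_{v\in V}\psi_v(s^{\opt}_v)$. As the summands are nonnegative, fewer than $1/\delta$ voters $v$ satisfy $\psi_v(s^{\opt}_v)>\delta\opt(I)$; call this set $S^{\opt}$. My algorithm would enumerate every triple $(S,(t_v)_{v\in S},\tau)$ with $S\subseteq V$, $|S|\leq 1/\delta$, each $t_v\in\{1,\dots,\pi_v^{-1}(p)-1\}$, and $\tau$ ranging over the (at most $|I|$) finite price values occurring in $I$ together with the value $0$; there are $|I|^{O(1/\delta)}=|I|^{O(1/\varepsilon^{2})}$ such triples, and one of them is $\bigl(S^{\opt},(s^{\opt}_v)_{v\in S^{\opt}},\max_{v\notin S^{\opt}}\psi_v(s^{\opt}_v)\bigr)$, which will drive the analysis below.

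For each triple I would build a new $\cR$-\textsc{Shift-Bribery} instance $I'$: apply to $E$ the partial shift action that moves $p$ up by $t_v$ positions in each $v\in S$ (and does nothing elsewhere), forbid further shifts on the voters of $S$, and for every $v\notin S$ replace $\psi_v$ by the price function agreeing with $\psi_v$ on shifts of cost at most $\tau$ and forbidding costlier ones, so that $\psi^{\max}(I')\leq\tau$. Then I would run Lemma~\ref{lem:ptas-borda-main} on $I'$ with accuracy parameter $\varepsilon$; if it returns a finite-cost successful shift action $\bs'$ for $I'$ (which then leaves the voters of $S$ untouched), its union with the committed $S$-shifts is a valid shift action $\bs$ for $I$ with $\cost_I(\bs)=\sum_{v\in S}\psi_v(t_v)+\cost_{I'}(\bs')$, and $\bs$ is successful because $\shf(E,\bs)$ is exactly the election obtained from $I'$'s election by applying $\bs'$. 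The algorithm outputs the cheapest $\bs$ obtained over all triples. Each call of Lemma~\ref{lem:ptas-borda-main} runs in $|I'|^{O(1)}\leq|I|^{O(1)}$ time, so the overall running time is $|I|^{O(1/\varepsilon^{2})}$.

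For correctness I would focus on the distinguished triple, that is, $S=S^{\opt}$, $t_v=s^{\opt}_v$, and $\tau=\max_{v\notin S^{\opt}}\psi_v(s^{\opt}_v)$; this $\tau$ is one of the enumerated price values and satisfies $\tau\leq\delta\opt(I)$ by the definition of $S^{\opt}$. For this triple, $\bs^{\opt}$ restricted to $V\setminus S^{\opt}$ is feasible for $I'$ (none of its shifts costs more than $\tau$, and together with the committed shifts it makes $p$ a winner), so $\opt(I')\leq\opt(I)-\sum_{v\in S^{\opt}}\psi_v(s^{\opt}_v)$. Plugging this and $\psi^{\max}(I')\leq\tau\leq\delta\opt(I)$ into the guarantee of Lemma~\ref{lem:ptas-borda-main} yields
\[
\cost_I(\bs)\leq\sum_{v\in S^{\opt}}\psi_v(s^{\opt}_v)+(1+\varepsilon)\Bigl(\opt(I)-\sum_{v\in S^{\opt}}\psi_v(s^{\opt}_v)\Bigr)+\bigl(1+\tfrac1\varepsilon\bigr)\delta\,\opt(I)\leq\bigl(1+\varepsilon+(1+\tfrac1\varepsilon)\delta\bigr)\opt(I).
\]
Choosing $\delta=\varepsilon^{2}/2$ makes the last bracket at most $1+2\varepsilon$ for every $\varepsilon\leq 1$, so I would invoke the procedure with accuracy parameter $\varepsilon/2$ (falling back to $\varepsilon=1$ when a larger $\varepsilon$ is requested), obtaining a successful shift action of cost at most $(1+\varepsilon)\opt(I)$ in time $|I|^{O(1/\varepsilon^{2})}$. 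The degenerate cases $\opt(I)=0$ and $\opt(I)=\infty$ are dealt with directly.

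All the real content is already in Lemma~\ref{lem:ptas-borda-main}; the rest is essentially bookkeeping, and the only points I expect to need care are the following. First, one must observe that the cutoff need not equal $\delta\opt(I)$ numerically: it is enough to guess it among the price values actually present, since the ``right'' cutoff $\max_{v\notin S^{\opt}}\psi_v(s^{\opt}_v)$ is such a value and is automatically at most $\delta\opt(I)$. Second, one must verify that truncating the price functions and then recombining the two shift actions preserves both validity and successfulness and does not destroy the optimum on the guessed coordinates. Third, one must balance the additive term $(1+1/\varepsilon)\psi^{\max}(I')$ against the multiplicative slack, which is precisely what forces $\delta=\Theta(\varepsilon^{2})$ and hence the $|I|^{O(1/\varepsilon^{2})}$ running time.
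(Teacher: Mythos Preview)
Your proposal is correct and follows essentially the same approach as the paper: guess the few ``expensive'' voters together with their shift values, truncate the remaining price functions so that $\psi^{\max}$ becomes $O(\delta\,\opt(I))$, and invoke Lemma~\ref{lem:ptas-borda-main} so that its additive error $(1+1/\varepsilon)\psi^{\max}$ is absorbed into the multiplicative ratio. The only cosmetic difference is that the paper enumerates a set $S$ of \emph{exactly} $q=\lceil 1/\delta\rceil$ voters and takes the cutoff to be $b=\min_{v\in S}\psi_v(s_v)$ automatically (when $S$ consists of the $q$ most expensive voters, an averaging argument gives $b\leq\delta\,\opt(I)$), whereas you enumerate $|S|\leq 1/\delta$ and guess the cutoff $\tau$ separately among the price values present; both lead to the same $|I|^{O(1/\varepsilon^{2})}$ running time and the same analysis.
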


\begin{proof}[Proof of Theorem~\ref{thm:ptas-borda-main}]
  We are given an instance $I = (E,p, \psi)$ of
  $\cR$-\textsc{Shift-Bribery}, where $E = (C,V)$ is an election, $p$
  is the preferred candidate, $\psi = \{\psi_v\}_{v \in V})$ is a
  collection of price functions, and $\cR$ is a positional scoring
  rule (possibly with a different scoring vector for each voter in
  $V$). We also have $\varepsilon > 0$.  Our algorithm works as
  follows:
  \begin{enumerate}
  \item Let $\delta = \varepsilon^2/8$ and
    $q = \lceil 1/\delta \rceil$.
  \item For every subset $S \subseteq V$ of $q$ voters and every
    possible shift action $(s_v)_{v \in S} \in (\{0\} \cup [m])^S$ for
    these voters, we execute the following steps:
    \begin{enumerate}
    \item Compute $b = \min_{v \in S} \psi_v(s_v)$, the minimum cost
      among all shifts $(s_v)_{v \in S}$.
    \item Create new price functions $\{\psi'_v\}_{v \in V}$ as
      follows.  For all $v \in S$, let $\psi'_v(j)$ be $0$ for
      $j \leqs s_v$ and let it be $\infty$ for the remaining values of
      $j$. For all $v \notin S$, let $\psi'_v(j) = \psi_v(j)$ whenever
      $\psi_v(j) \leqs b$, and let $\psi'_v(j) = \infty$ otherwise.
    \item
      Run the algorithm from Lemma~\ref{lem:ptas-borda-main} on the
      instance $I' = (E, \{\psi'_v\}_{v \in V})$ and
      $\varepsilon/2$. \label{step:run-b}
    \end{enumerate}
  \item Output the shift action with minimum cost among all shift
    actions produced by Step~\ref{step:run-b}.
  \end{enumerate}
  First, notice that the algorithm runs in time
  $|I|^{O(q)} = |I|^{O(1/\varepsilon^2)}$ and that the algorithm
  outputs a successful shift action (i.e., $p$ is indeed a winner
  after the shifts).

  To see that the cost of the output solution is at most
  $(1 + \varepsilon) \opt(I)$, let us consider an optimal shift action
  $\bs^{\opt} = (s^{\opt}_v)_{v \in V}$ of $I$, i.e., we have
  $\cost_I(\bs) = \opt(I)$. Now, let $S$ denote the set of $q$ voters
  whose shifts are the most expensive, i.e., $S \subseteq V$ is the
  set of size $q$ such that $\psi(s^{\opt}_v) \geqs \psi(s^{\opt}_u)$
  for all $v \in S, u \notin S$. This set $S$ and the shifts
  $(s^{\opt}_v)_{v \in S}$ are considered by the above algorithm. In
  this case, observe that:
  \begin{align} \label{eq:bound-b} b = \min_{v \in S}
    \psi_v(s^{\opt}_v) \leqs \frac{1}{q} \sum_{v \in S}
    \psi_v(s^{\opt}_v) \leqs \delta \opt(I).
  \end{align}
  Observe also that $\psi^{\max}(I') \leqs b$. Thus,
  Step~\ref{step:run-b} gives a shift action
  $\bs^* = (s^*_v)_{v \in V}$ such that:
  \begin{align} \label{eq:bound-cost-i-prime} \cost_{I'}(\bs^*)  \leqs
    (1 + \varepsilon/2)\opt(I') + (1 + 2/\varepsilon)b 
     \overset{\eqref{eq:bound-b}}{\leqs} (1 + \varepsilon/2)\opt(I') +
    \varepsilon\opt(I)/2.
  \end{align}
  Next, notice that:
  \begin{align} \label{eq:opt-i-to-iprime} \opt(I') \leqs
    \cost_{I'}(\bs^{\opt}) = \opt(I) - \sum_{v \in S}
    \psi_v(s^{\opt}_v)
  \end{align}
  and:
  \begin{align} \label{eq:i-to-iprime} \cost_{I}(\bs^*) \leqs
    \cost_{I'}(\bs^*) + \sum_{v \in S} \psi_v(s^{\opt}_v).
  \end{align}
  Combining \eqref{eq:bound-cost-i-prime}, \eqref{eq:opt-i-to-iprime}
  and \eqref{eq:i-to-iprime}, we obtain the following bound on
  $\cost_{I}(\bs^*)$:
  \begin{align*}
    \cost_{I}(\bs^*) &\overset{\eqref{eq:i-to-iprime}}{\leqs} \cost_{I'}(\bs^*) + \sum_{v \in S} \psi_v(s^{\opt}_v) \\
                     &\overset{\eqref{eq:bound-cost-i-prime}}{\leqs} (1 + \varepsilon/2)\opt(I') + \varepsilon\opt(I)/2 + \sum_{v \in S} \psi_v(s^{\opt}_v) \\
                     &\overset{\eqref{eq:opt-i-to-iprime}}{\leqs} (1 + \varepsilon)\opt(I) - (\varepsilon/2) \sum_{v \in S} \psi_v(s^{\opt}_v) \\
                     &\leqs (1 + \varepsilon) \opt(I).
  \end{align*}
  Hence, the output shift action has cost at most
  $(1 + \varepsilon)\opt(I)$ as desired.
\end{proof}

We remark that as a corollary of Theorem~\ref{thm:ptas-borda-main}, we
also get a PTAS for Borda-\textsc{Shift-Bribery} for arbitrary prices.

\section{Copeland}\label{sec:copeland}

For the case of Copeland$^\alpha$ family of rules, we show that the
\textsc{Shift-Bribery} problem is hard to approximate even for the
unit prices and for the all-or-nothing prices.  Specifically, we show
that an approximation algorithm for the
$\copeland$-\textsc{Shift-Bribery} implies an approximation algorithm
for the \textsc{Densest-$k$-Subgraph} problem, which is 
believed to be hard to approximate~\cite{BCVGZ12}.
Below we describe our results generally, and then we provide their
proofs for respective types of price functions in the following
sections.

\begin{definition}\label{def:dks}
  In the \textsc{Densest-$k$-Subgraph} (D$k$S) problem, we are given
  an undirected graph $G = (V_G, E_G)$ and a positive integer $k$, and
  the goal is to output a $k$-vertex subgraph of $G$ with as many
  edges as possible.
\end{definition}

\begin{theorem} \label{thm:copeland} Let $\tau$ be an arbitrary
  non-decreasing function. If there is a polynomial time
  $\tau(|I|)$-approximation algorithm for
  $\copeland$-\textsc{Shift-Bribery} for some $\alpha \in [0, 1]$, for
  the case of unit prices or all-or-nothing prices, then there is a
  polynomial time $O(\tau(|V_G|^{O(1)})^2)$-approximation algorithm for
  the \textsc{D$k$S} problem.
\end{theorem}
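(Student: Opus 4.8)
The plan is to reduce \textsc{D$k$S} to a whole family of \textsc{Shift-Bribery} instances, one for each guess $t\in\{1,\dots,|E_G|\}$ of the optimal density, and then to recover a good $k$‑vertex subgraph from a $\tau$‑approximate bribery solution via a vertex‑subsampling argument that costs exactly one extra factor of $\tau$ (this is where the square in $O(\tau^2)$ comes from). Throughout we may assume $\tau(|V_G|^{O(1)})$ is bounded by a fixed polynomial in $|V_G|$: otherwise $O(\tau^2)$ already exceeds the ratio of a trivial \textsc{D$k$S} algorithm (e.g.\ output a single edge), so the conclusion is vacuous.

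\textbf{The instance $I_t$.} Fix $(G=(V_G,E_G),k)$ and $t$. Use one candidate $b_e$ per edge $e$, the preferred candidate $p$, a benchmark dummy $z^\ast$, and enough padding dummies. For each vertex $u$ create one \emph{selection vote} $\sigma_u$ in which exactly $\{b_e: e\ni u\}$ sits above $p$; for the unit‑price case additionally bury $p$ under a block of $L:=|V_G|^{10}$ dummies inside $\sigma_u$, so that passing even one $b_e$ there costs more than $L$. Also create $\Theta(|V_G|)$ \emph{up‑votes} in which $p$ sits below every $b_e$ and is effectively immovable (price $\infty$ in the all‑or‑nothing case; buried under $\gg\tau(|V_G|^{O(1)})\cdot\opt(I_t)$ dummies in the unit‑price case, so no near‑optimal action touches them). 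Choose the up‑votes and the dummies so that: (i) the number $N$ of voters is odd, hence Copeland$^\alpha$ equals plain Copeland for every $\alpha$ and there are no ties; (ii) every $b_e$ beats $p$ by exactly $3$ in the base election, so that $p$ beats $b_e$ precisely when $p$ is shifted past $b_e$ in at least two votes; (iii) $\scr_E(c)\le Z:=\scr_E(z^\ast)$ for every $c\neq p$, while $\scr_E(p)=Z-t$.

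\textbf{What $I_t$ computes.} Since $b_e$ (with $e=\{u,v\}$) can be passed only in $\sigma_u$ and $\sigma_v$, any successful shift action normalizes (without increasing cost) to: pick $S\subseteq V_G$ and push $p$ to the top of $\sigma_u$ for $u\in S$. Then $p$ beats exactly $\{b_e: e\subseteq S\}$, so $\scr(p)$ becomes $Z-t+|E_G[S]|$, and $p$ wins iff $|E_G[S]|\ge t$. Writing $f(t):=\min\{|S|:|E_G[S]|\ge t\}$, this gives $\opt(I_t)=f(t)$ for all‑or‑nothing unit prices, and $\opt(I_t)\in\bigl(L\cdot f(t),\,L\cdot f(t)(1+o(1))\bigr]$ for unit prices; moreover any successful action of cost $c$ yields, by reading off the shifted selection votes, a set $S$ with $|E_G[S]|\ge t$ and $|S|\le(1+o(1))\,c/(\text{scale})$, where $\text{scale}\in\{1,L\}$.

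\textbf{From bribery approximation to \textsc{D$k$S}.} Run the assumed $\tau$‑approximation on $I_t$ for every $t$. For the correct guess $t^\ast=\opt_{\mathrm{DkS}}(G,k)$ we have $\opt(I_{t^\ast})\le k$ (up to the scale factor), so the returned action has cost $\le\tau'k$ with $\tau':=\tau(|V_G|^{O(1)})$, and hence gives a set $S$ with $|E_G[S]|\ge t^\ast$ and $|S|\le(1+o(1))\tau'k$. A uniformly random $k$‑subset $S'\subseteq S$ retains in expectation a $\binom{k}{2}/\binom{|S|}{2}=\Omega(1/\tau'^2)$ fraction of $E_G[S]$, so some efficiently recoverable $k$‑subset has $\Omega(t^\ast/\tau'^2)=\Omega(\opt_{\mathrm{DkS}}(G,k)/\tau'^2)$ edges. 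Returning the best subgraph over all $t$ yields the claimed $O\!\bigl(\tau(|V_G|^{O(1)})^2\bigr)$‑approximation, and the election sizes are $m,n=|V_G|^{O(1)}$ so $|I_t|=|V_G|^{O(1)}$ as needed.

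\textbf{Main obstacle.} The conceptual content is light—the subsampling step and the ``two‑votes‑per‑edge'' coupling—so the real work is the gadget bookkeeping: forcing every pairwise margin to be exactly $3$, pinning $\scr_E(p)=Z-t$ against a benchmark candidate whose score $Z$ is unaffected by any shift, and, for unit prices, verifying that the dummy padding makes the per‑vertex ``selection cost'' uniform up to a $(1+o(1))$ factor, so that a partial (prefix) shift inside a selection vote cannot buy a meaningful discount. The step most likely to need care is proving that not just an optimal action but any $\tau$‑approximate action can be normalized to the form ``choose $S$ and push $p$ to the top there''.
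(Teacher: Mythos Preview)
Your proposal is correct and follows essentially the same route as the paper: edge candidates, one ``selection'' voter per vertex with the incident edges ranked just above $p$, a benchmark candidate and dummies (the paper packages this as Proposition~\ref{prop:dummy}) so that $p$ loses each pairwise election against an edge by exactly $3$ and needs $t$ more Copeland points, and the random $k$-subset subsampling to convert a size-$\tau k$ solution into a $k$-vertex subgraph at the cost of a $\tau^2$ factor. For unit prices the paper likewise pads $p$ with a block of filler candidates above it in each selection vote (Lemma~\ref{lem:copeland-red}), and disposes of your ``main obstacle'' exactly as you anticipate: each filler candidate is arranged to already lose its pairwise election to $p$, so any partial shift that stops inside the filler block can be set to zero without hurting $p$'s score, which normalizes an arbitrary (not just optimal) finite-cost action to the ``choose $S$ and push to the top'' form.
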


Although hardness of approximation of D$k$S within up to polynomial
factor is not known, inapproximability up to almost polynomial factor
is known assuming the \emph{exponential time hypothesis} (ETH) and its
gap version (Gap-ETH).\footnote{ETH~\cite{IP01,IPZ01} states that there
  is no subexponential time algorithm that solves
  3SAT. Gap-ETH~\cite{D16,MR17} states that no subexponential
  time algorithm can distinguish between a satisfiable 3CNF
  formula and one which is only $(1 - \delta)$-satisfiable for some
  absolute constant $\delta > 0$.} Specifically,
Manurangsi~\cite{Man17} has shown that under the ETH assumption
(the Gap-ETH assumption, respectively), \textsc{Densest-$k$-Subgraph}
is hard to approximate to within a factor of
$n^{1/\poly(\log \log n)}$ ($n^{o(1)}$, respectively). Together with
Theorem~\ref{thm:copeland}, this implies the following corollary.

\begin{corollary} \label{cor:copeland} Assuming ETH, for some constant
  $c > 0$ there is no polynomial-time
  $|I|^{1/(\log \log |I|)^c}$-approximation algorithm for
  $\copeland$-\textsc{Shift-Bribery} for any $\alpha > 0$, even for
  unit prices or all-or-nothing prices. Moreover, assuming Gap-ETH,
  the inapproximability ratio can be improved to $|I|^{f(|I|)}$ for
  any function $f = o(1)$.
\end{corollary}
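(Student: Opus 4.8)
The plan is to argue by contraposition, feeding the size- and ratio-preserving reduction of Theorem~\ref{thm:copeland} into Manurangsi's \textsc{D$k$S} lower bounds. Concretely, I will show that a polynomial-time approximation algorithm for $\copeland$-\textsc{Shift-Bribery} (with unit or all-or-nothing prices, for any $\alpha > 0$) whose ratio is as small as the corollary claims would, via Theorem~\ref{thm:copeland}, yield a polynomial-time \textsc{D$k$S} approximation beating Manurangsi's bounds under ETH (resp.\ Gap-ETH). The only real work is to check that the two operations Theorem~\ref{thm:copeland} performs on the ratio — squaring it, and composing it with the polynomial instance blow-up $|I| = |V_G|^{O(1)}$ — preserve both the ETH-grade inapproximability factor $n^{1/\poly(\log\log n)}$ and the Gap-ETH-grade factor $n^{o(1)}$ for \textsc{D$k$S}.

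For the ETH statement, fix the constant $c_0$ for which Manurangsi's theorem asserts that, under ETH, $n$-vertex \textsc{D$k$S} has no polynomial-time $n^{1/(\log\log n)^{c_0}}$-approximation, and set $c := c_0 + 1$; this is the constant promised by the corollary. Suppose, toward a contradiction, that for some $\alpha > 0$ there is a polynomial-time $|I|^{1/(\log\log |I|)^{c}}$-approximation for $\copeland$-\textsc{Shift-Bribery}. Its ratio function $\tau(N) = N^{1/(\log\log N)^{c}}$ is non-decreasing for $N$ beyond a fixed threshold (set it to a constant below that threshold), so Theorem~\ref{thm:copeland} yields a polynomial-time $O(\tau(n^{b})^2)$-approximation for \textsc{D$k$S}, where $b = O(1)$ is the exponent in the size blow-up $|I| \leq |V_G|^{b}$. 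Since $\log\log(n^{b}) = \log b + \log\log n \geq \log\log n$ for large $n$, and absorbing the leading constant into the exponent,
\[ O\!\left(\tau(n^{b})^2\right)\;=\;O\!\left(n^{\,2b/(\log\log(n^{b}))^{c}}\right)\;\leq\; n^{\,(2b+1)/(\log\log n)^{c}}\;\leq\; n^{\,1/(\log\log n)^{c_0}} \]
for all sufficiently large $n$, the last inequality because $c = c_0+1$ gives $(\log\log n)^{c - c_0} = \log\log n \geq 2b+1$ eventually. This contradicts the ETH lower bound, so no such algorithm for $\copeland$-\textsc{Shift-Bribery} exists.

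For the Gap-ETH statement, fix an arbitrary $f = o(1)$ (with $f \geq 0$, since approximation ratios are at least $1$). We may assume the ratio function $N \mapsto N^{f(N)}$ is non-decreasing: otherwise replace it by $\tau(N) := \sup_{M \leq N} M^{f(M)}$, which is non-decreasing, dominates the ratio on every instance, and still satisfies $\tau(N) = N^{o(1)}$ — indeed, for any $\varepsilon > 0$ all but finitely many $M$ have $f(M) < \varepsilon$, whence $\tau(N) \leq \max\{C_\varepsilon, N^{\varepsilon}\} \leq N^{2\varepsilon}$ for large $N$. Theorem~\ref{thm:copeland} then converts the hypothetical $|I|^{f(|I|)}$-approximation into a polynomial-time $O(\tau(n^{b})^2) = O(n^{\,2b\,f(n^{b})})$-approximation for \textsc{D$k$S}, i.e.\ an $n^{g(n)}$-approximation with $g(n) := 2b\,f(n^{b}) + o(1)$. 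Because $f = o(1)$ and $n^{b} \to \infty$, we have $g = o(1)$, contradicting Manurangsi's Gap-ETH lower bound that \textsc{D$k$S} is not $n^{o(1)}$-approximable in polynomial time.

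Since all of the genuine difficulty is already encapsulated in Theorem~\ref{thm:copeland} and in the cited \textsc{D$k$S} hardness results, this corollary is essentially their routine composition, and there is no single hard step. The one place that needs care is precisely the bookkeeping displayed above: because Theorem~\ref{thm:copeland} costs us a square in the ratio and a polynomial factor in the instance size, we pad the exponent constant ($c = c_0+1$ rather than $c_0$) in the ETH case and first smooth $f$ into a non-decreasing $n^{o(1)}$ function in the Gap-ETH case, so that the resulting \textsc{D$k$S} ratio stays strictly below Manurangsi's thresholds.
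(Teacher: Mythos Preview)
Your proposal is correct and takes essentially the same approach as the paper: the paper does not give a standalone proof of this corollary at all, simply stating that it follows by combining Theorem~\ref{thm:copeland} with Manurangsi's \textsc{D$k$S} lower bounds, and your argument is precisely a careful unpacking of that one-line justification. Your bookkeeping for how the polynomial size blow-up and the squaring of the ratio interact with the $n^{1/\poly(\log\log n)}$ and $n^{o(1)}$ thresholds is the only content to check, and it is fine.
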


For the parameterization by the number of unit shifts, assuming Gap-ETH
implies that there is no $\FPT$ approximation scheme for the problem
even for the case of unit prices.

\begin{theorem} \label{thm:copeland-unit-fpt} Assuming Gap-ETH, for
  every $\alpha \in [0, 1]$, every $\varepsilon > 0$, and every
  computable function $T$, there is no algorithm that given a
  $\copeland$-\textsc{Shift-Bribery} instance $I$
  with unit prices, runs in time
  $T(\opt(I)) \cdot \poly(|I|)$
  and outputs a successful shift action with at most
  $(2 - \varepsilon) \opt(I)$ unit shifts.
\end{theorem}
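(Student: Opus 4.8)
The plan is to prove this by a parameter-preserving reduction from a parameterized gap problem that is hard under Gap-ETH, combined with the standard argument that converts a hypothetical \FPT\ approximation algorithm into an \FPT\ decision procedure for that gap problem. A convenient source is a gap version of \textsc{Clique} (equivalently, of \textsc{D$k$S}, asking whether the densest $k$-subgraph is complete): assuming Gap-ETH, for every constant $\rho < 1$ there is no $g(k)\cdot\poly(|V_G|)$-time algorithm that, given a graph $G$ and an integer $k$, distinguishes $\omega(G) \geqs k$ from $\omega(G) < \rho k$. We fix such a $\rho = \rho(\varepsilon)$, close enough to $1$ to be determined by the analysis.

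The core step is to build, from $(G,k)$ with $|V_G| = N$, a $\copeland$-\textsc{Shift-Bribery} instance $I = I(G,k)$ with unit prices and a target value $B = B(k) = \Theta(k)$ such that: \textbf{(completeness)} if $\omega(G) \geqs k$ then $\opt(I) \leqs B$; and \textbf{(soundness)} if $\omega(G) < \rho k$ then every successful shift action uses more than $(2-\varepsilon)B$ unit shifts. I would construct $I$ along the lines of the \textsc{D$k$S}-to-Copeland reduction behind Theorem~\ref{thm:copeland}: besides the preferred candidate $p$, the candidate set contains a ``vertex candidate'' $a_u$ for each $u \in V_G$, ``guard'' candidates indexed by the non-edges of $G$, and padding candidates used only to calibrate the pairwise-contest balances (taking the value of $\alpha$ into account) so that $p$ currently trails and so that a shift action makes $p$ a winner exactly when it (a) beats at least $k$ vertex candidates and (b) beats the guard candidate of every non-edge whose two endpoints have both had their vertex candidate beaten. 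The preference orders are arranged so that beating one vertex candidate $a_u$ costs exactly one unit shift (``selecting $u$''), beating one guard candidate costs one extra unit shift, and no off-structure action does better; selecting a $k$-clique then makes $p$ a winner using $B$ unit shifts, while any set with few internal edges forces a constant fraction of $k$ extra shifts. The padding must in addition supply a structure-agnostic fallback of cost roughly $2B$, so that soundness holds against \emph{all} shift actions, not just the clique-structured ones.

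Granting this reduction, the theorem follows by the usual argument. Suppose some algorithm $\mathcal{A}$, for some $\alpha \in [0,1]$, some $\varepsilon > 0$ and some computable (w.l.o.g.\ non-decreasing) $T$, runs in time $T(\opt(I))\cdot\poly(|I|)$ and always outputs a successful shift action with at most $(2-\varepsilon)\opt(I)$ unit shifts. Given $(G,k)$, form $I = I(G,k)$, run $\mathcal{A}$ on $I$ for at most $T(B(k))\cdot\poly(|I|)$ steps, and answer ``$\omega(G)\geqs k$'' if and only if $\mathcal{A}$ halts within that budget with an output of at most $(2-\varepsilon)B(k)$ unit shifts. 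By completeness, if $\omega(G)\geqs k$ then $\opt(I)\leqs B(k)$, so $\mathcal{A}$ halts within $T(\opt(I))\poly(|I|)\leqs T(B(k))\poly(|I|)$ steps and outputs at most $(2-\varepsilon)\opt(I)\leqs(2-\varepsilon)B(k)$ unit shifts; by soundness, if $\omega(G)<\rho k$ then every successful action, in particular $\mathcal{A}$'s output if it halts in time, uses more than $(2-\varepsilon)B(k)$ unit shifts. Hence the procedure is correct, and since $|I| = \poly(N)$ and $B(k)$ depends only on $k$, it runs in time $h(k)\cdot\poly(N)$ for a computable $h$, contradicting the Gap-ETH hardness of gap \textsc{Clique}.

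The main obstacle is the gadget in the middle step: obtaining a gap of \emph{exactly} $2-\varepsilon$ (rather than merely super-constant, which would be comparatively easy) is delicate, because soundness must hold against every conceivable shift action — including brute-force actions ignoring the graph — so the padding must be tuned so that the cheapest structure-agnostic way to make $p$ win costs essentially $2B$, while a genuine $k$-clique (and, up to the $\rho$-slack, only a genuine $k$-clique) brings the cost down to $B$. A secondary point is to keep the parameter $\opt(I)$ bounded by a function of $k$ alone in the yes-case, i.e.\ to make the clique-based solution use only $\Theta(k)$ unit shifts while ensuring that repairing each missing internal edge genuinely costs a full extra unit shift; this quantitative bookkeeping, uniform over all $\alpha \in [0,1]$, is where I expect the construction to need the most care, adapting and sharpening the accounting already used for Theorem~\ref{thm:copeland}.
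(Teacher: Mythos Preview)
Your high-level strategy --- reduce from a Gap-ETH-hard parameterized gap problem and convert a hypothetical $\FPT$ approximation into an $\FPT$ decision procedure --- matches the paper's, and your wrap-up argument (running $\mathcal{A}$ with a time cap of $T(B(k))\cdot\poly(|I|)$) is correct. But the reduction you sketch is quite different from the paper's, and the central gadget has a genuine gap. The paper does \emph{not} start from the clique-number gap $\omega(G)\geqs k$ versus $\omega(G)<\rho k$; it uses the stronger density-gap result of Chalermsook et al.\ (Theorem~\ref{thm:dks-fpt-inapprox}): distinguish ``$G$ has a $k$-clique'' from ``every $k$-vertex subgraph has fewer than $\delta\binom{k}{2}$ edges''. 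With this stronger No-promise the reduction needs no guards at all. Concretely (Lemma~\ref{lem:copeland-01-fpt}, then Lemma~\ref{lem:copeland-red} for unit prices), voters are indexed by the \emph{edges} of $G$, candidates by the \emph{vertices}, and $B=\binom{k}{2}=\Theta(k^2)$, not $\Theta(k)$: for $e=\{u_1,u_2\}$ the voter $v_e$ ranks $u_1\succ u_2\succ p\succ\cdots$, the padding makes $p$ trail every vertex candidate by $2k-3$ in head-to-head, and $p$ must beat $k$ vertex candidates to win. Bribing a set $E^*$ of edge voters lets $p$ beat a vertex $u$ only if $E^*$ contains $k-1$ edges incident to $u$; summing over the $k$ beaten vertices $U$ gives $2\binom{k}{2}\leqs |E^*|+|E^*\cap E_G[U]|$, hence $|E^*|\geqs 2\binom{k}{2}-|E_G[U]|>(2-\delta)\binom{k}{2}$ under the density-gap No-promise. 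One double-counting, no conditional structure.

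Your ``guard'' construction, by contrast, asks for a condition that cannot be encoded in $\copeland$-\textsc{Shift-Bribery}. Shifting $p$ only changes pairwise contests that involve $p$; it never alters the Copeland score of any other candidate, nor any contest among non-$p$ candidates. So there is no mechanism by which ``$p$ has beaten $a_u$ \emph{and} $a_v$'' can \emph{create} an obligation to also beat $g_{uv}$: your condition (b) is a conditional, non-monotone requirement, whereas the winner condition in Copeland is a fixed score threshold and $p$'s set of pairwise wins is monotone in the shifts. Without a concrete way to realise (b), the soundness direction of your reduction does not go through, and starting from the clique-number gap (rather than the density gap) you have no direct substitute for the counting argument above. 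The fix is exactly what the paper does: take the density-gap problem as the source, swap the roles so that voters correspond to edges and candidates to vertices, set $B=\binom{k}{2}$, and let the incidence double-count do the work.
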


We are not aware of a constant factor $\FPT$ approximation
algorithms for the problem and it is possible that the factor $2$
above can be improved to larger constants, or even beyond a
constant. This remains an interesting open question.

Another parameter that has been considered in the literature is the
number of affected voters. For this parameter, the exact version is
known to be $\W[2]$-hard (for both unit prices and all-or-nothing
prices) by a reduction from $k$-Set Cover~\cite{BCFNN16}. Due to the
recent developments in parameterized inapproximability of $k$-Set
Cover~\cite{ChenL16,KLM18}, we can modify the $\W[2]$-hardness proof
to yield strong parameterized inapproximability results for
$\copeland$-\textsc{Shift-Bribery} with this parameter.

\begin{theorem} \label{thm:copeland-unit-fpt-voter} For every $\alpha
  \in [0, 1]$, every $\varepsilon > 0$,
  $\copeland$-\textsc{Shift-Bribery} parameterized by the number of
  affected voters is $\W[1]$-hard to approximate to within any
  constant factor, even for unit prices or $(1,
  \infty)$-all-or-nothing prices.
\end{theorem}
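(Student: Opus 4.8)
The plan is to reuse the $\W[2]$-hardness reduction of Bredereck et al.~\cite{BCFNN16} from $k$-\textsc{Set Cover} to $\copeland$-\textsc{Shift-Bribery} parameterized by the number of affected voters, to verify that it is gap-preserving, and then to plug in the recent parameterized inapproximability of $k$-\textsc{Set Cover}. Concretely, by the work of Chen and Lin~\cite{ChenL16} (sharpened in~\cite{KLM18}), for every constant $C \geqs 1$ it is $\W[1]$-hard to decide whether a given $k$-\textsc{Set Cover} instance admits a cover of size $k$ or whether every cover has size more than $Ck$; this gap problem reduces from a $\W[1]$-complete problem, which is exactly why the hardness we obtain here is $\W[1]$-hardness and not just $\W[2]$-hardness. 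Hence it suffices to exhibit, for each fixed $C$, a reduction from this gap problem to $\copeland$-\textsc{Shift-Bribery} --- with unit prices, and separately with $(1,\infty)$-all-or-nothing prices --- that produces instances whose optimal number of affected voters is at most $k$ in the ``yes'' case and more than $Ck$ in the ``no'' case, with the parameter changing only by a constant factor, so that an FPT-time $C$-approximation would decide the gap problem in FPT time.

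First I would build the reduction. Starting from an instance $(U,\mathcal{F},k)$ of $k$-\textsc{Set Cover}, I introduce one \emph{element candidate} $c_u$ for each $u \in U$, the preferred candidate $p$, and a bounded number of auxiliary candidates used only to calibrate the pairwise comparisons; and I create one \emph{selection voter} $v_F$ for each $F \in \mathcal{F}$ together with a bounded number of \emph{padding voters} that fix the required majority margins. The construction is arranged so that: (i) in the original election $p$ is tied with (or one vote behind) every element candidate $c_u$, while the auxiliary candidates never affect $p$'s standing against anyone; (ii) shifting $p$ to the top of voter $v_F$'s order reverses exactly the pairwise elections between $p$ and the candidates $\{c_u : u \in F\}$ and changes no other pairwise election of $p$; and (iii) $p$ is a $\copeland$-winner precisely when $p$ weakly beats every element candidate. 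Then a successful shift action corresponds to selecting a subfamily of $\mathcal{F}$ whose sets cover $U$, and the number of affected voters equals the number of selected sets. For $(1,\infty)$-all-or-nothing prices this is essentially the end: every selection voter costs $1$, while every padding voter (and every element candidate that sits above $p$ inside a selection voter) is assigned cost $\infty$, so the only admissible bribes are set selections. For unit prices I would instead pad every selection voter with the same number of auxiliary candidates above $p$, so that $p$ occupies one common position in all of them; then each set selection uses the same number of unit shifts, the unit-shift count is a fixed multiple of the affected-voter count, and a shift inside a padding voter only pushes $p$ past candidates it already beats, hence is useless and never appears in an optimal solution.

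Given this, the gap transfers directly. If $\mathcal{F}$ has a cover of size $k$, bribing the $k$ corresponding selection voters makes $p$ a $\copeland$-winner, so the optimum affects at most $k$ voters; conversely, if every cover of $U$ has size more than $Ck$, then every successful action must bribe selection voters whose sets cover $U$ and hence affects more than $Ck$ voters. An FPT algorithm approximating the affected-voter optimum within a factor smaller than $C$ would therefore distinguish the two cases in FPT time, contradicting the $\W[1]$-hardness of the gap problem; since the parameter grows only by the constant factor $C$, this is a legitimate parameterized reduction, and the conclusion holds for every $\alpha \in [0,1]$ and for approximation to within any fixed constant factor.

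The step I expect to be the real work, as is typical for reductions into Copeland-style rules, is realizing (i)--(iii) with an integer electorate uniformly in $\alpha$: for each pairwise contest one must decide whether $p$ should be made to win strictly or merely to tie (the right choice depends on whether the value of $\alpha$ rewards ties), and then choose the numbers of padding voters --- possibly with a parity correction on the total number of voters --- so that exactly the intended pattern of wins, ties, and losses holds both before and after any admissible bribe, and so that no ``stray'' pairwise election involving $p$ or an auxiliary candidate can be flipped by a legal shift. Once this calibration is done, the affected-voter count is genuinely governed by the cover size, and the rest is a routine adaptation of the known $\W[2]$-hardness construction together with the off-the-shelf parameterized inapproximability of $k$-\textsc{Set Cover}.
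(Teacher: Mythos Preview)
Your overall plan coincides with the paper's: start from the Chen--Lin $\W[1]$-hardness of constant-factor approximation for parameterized \textsc{Set Cover}, feed it through a Bredereck-et-al.\ style reduction so that the optimum of the resulting $(1,\infty)$-all-or-nothing $\copeland$-\textsc{Shift-Bribery} instance equals the minimum cover size (the paper's Lemma~\ref{lem:inapprox-voters-1}), and then convert to unit prices while keeping the number of affected voters tied to the cover size. For the $(1,\infty)$ case your sketch is essentially identical to the paper's, up to the calibration details you correctly flag as ``the real work'' (the paper handles those uniformly in $\alpha$ via Proposition~\ref{prop:dummy}).

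The one place where your proposal diverges from the paper and has a genuine soft spot is the unit-price conversion. You want shifts inside padding voters to be \emph{useless}, i.e., to push $p$ only past candidates it already beats. But the padding voters (and the reverse voters $v'_S$ that cancel the selection voters' side effects) are exactly what establish $p$'s one-vote deficit against every element candidate; in any McGarvey-style calibration some of them must rank element candidates above $p$, so a unit shift there \emph{can} flip a relevant pairwise election and is not useless. The paper avoids this difficulty by a different mechanism (Lemma~\ref{lem:copeland-red}): rather than making such shifts useless, it makes them \emph{prohibitively expensive} by inserting, in every voter, a block of fresh filler candidates directly above $p$ --- $B$ fillers in each cost-$1$ voter and $B' \gg B$ fillers in each cost-$\infty$ voter, with each filler losing to every original candidate. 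Choosing $B > b \cdot \opt(I)$ (where $b$ is the width) then forces any minimum-cost unit-price solution to affect at most $\opt(I)$ voters and to have cost in $[B \cdot \opt(I),\, 2B \cdot \opt(I)]$ (Lemma~\ref{lem:copeland-unit-voter-red}), which is precisely what you need to transfer the constant-factor gap while keeping the affected-voters parameter bounded by $\opt(\tilde I)$. If you swap ``useless'' for ``too expensive to appear in an optimum'' via such filler blocks, the rest of your argument goes through verbatim.
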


Notice that the hardness in Theorem~\ref{thm:copeland-unit-fpt-voter} is
stronger than Theorem~\ref{thm:copeland-unit-fpt} both in terms of
the approximation ratio and that it requires weaker assumption.
We also remark here that, for $(1, \infty)$-all-or-nothing prices,
the optimum is exactly equal to the number of affected voters.
Hence, the hardness above carries over to the ``budget'' parameter as well.
To summarize, our results implies inapproximability for
essentially all parameters left open by~\cite{BCFNN16}.

\subsection{Additional Notation and Tools} \label{sec:add-notation}

We need some additional notation in the following text. Let $C$ be a
set of candidates. We write $\left< C \right>$ to denote an arbitrary
(but fixed) preference order over $C$, and for each $A \subseteq C$,
by $\left< A \right>$ we mean $\left<C\right>$ restricted to the
candidates from $A$. For example, for $C = \{a,b,c,d,e\}$ and
$A = \{a,b,c\}$ we may write $d \succ \left< A \right> \succ e$ to
denote preference order where $d$ is ranked ahead of all the members
of $A$, and $e$ is below all members of $A$ (and below $d$, of
course). When we use this notation to specify a preference order,
typically the ranking of candidates within $\left< A \right>$ is
irrelevant for the argument.  For each $A \subseteq C$, we write
$\overleftarrow{\left<A\right>}$
to denote the reverse of the order~$\left<A\right>$.

In Definition~\ref{def:dks} we have defined the optimization variant
of the \textsc{Densest-$k$-Subgraph} problem. In the decision variant,
an instance consists of a graph $G$ and two positive integers, $k$ and
$t$, and we ask if it is possible to select $k$ vertices that jointly
induce a graph with at least $t$ edges.

Given a graph $G = (V_G,E_G)$ and a set of vertices $U \subseteq V_G$,
by $E_G[U]$ we mean the set of edges from $E_G$ that have both
endpoints in $U$.

Often, when
constructing hard instances for $\copeland$-\textsc{Shift-Bribery}, we
want to create additional voters and candidates so that the candidates
of our interest have certain scores and certain outcomes of pairwise
elections. The following proposition is especially useful for this
purpose; it is extracted from the proof of Theorem 4 of~\cite{BCFNN16}
with slight modifications. We provide a proof for the sake of
completeness.

\begin{proposition}[Bredereck et
  al.~\cite{BCFNN16}] \label{prop:dummy} Let $C$ be a candidate
  set that can be partitioned into $A \cup B \cup \{p, d\}$, where
  $|A| + |B|$ is an odd integer, and let $a, b$ be non-negative
  integers such that $a \leqs |B|$. Then, there exists an election
  $E = (C, V, \{\succ_v\}_{v \in V})$ with
  $|V| = 2|A| + 2|B| + 2b + 5$ such that:
  \begin{enumerate}
  \item $p$ loses pairwise elections against every candidate in $A$ by
    $2b + 1$ votes,
  \item $p$ receives $|B| - a + 1$ points (with respect to the $\copeland$
    rule),
  \item $d$ receives $|B|$ points, and
  \item every candidate in $A \cup B$ receives at most
    $(|A| + |B| + 3)/ 2$ points.
  \end{enumerate}
  (Note here that, since $|C|$ is an odd integer, there are no ties in
  pairwise elections.)
\end{proposition}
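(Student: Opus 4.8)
The plan is to construct the voter set $V$ explicitly as a union of a few carefully chosen blocks of voters, and then verify the four numerical conditions by direct counting. The key design principle is the standard trick for controlling Copeland scores: to make a pairwise tie or a near-tie between two candidates, we pair up ``mirror-image'' voters (one with preference order $\sigma$, one with the reversal $\overleftarrow{\sigma}$), so that the two orders cancel each other out in every pairwise comparison; adding a single extra voter then breaks the tie in a controlled direction. Since $|A|+|B|$ is odd, $|C| = |A|+|B|+2$ is odd, so the total number of voters I build will be odd and no pairwise election can tie — this is what the closing remark of the proposition records.

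First I would handle condition~(1): to make $p$ lose to every candidate in $A$ by exactly $2b+1$ votes, I introduce $2b+1$ voters who rank all of $A$ above $p$ (and otherwise rank candidates in some fixed convenient way, e.g.\ $\langle A\rangle \succ \langle B\rangle \succ d \succ p$), and then ``neutralize'' these voters for all \emph{other} pairwise comparisons by adding their mirror-images — except that I only need $2b+1$ of them on one side, so I pair $2b$ of them with reversals and leave one unmatched, contributing the net margin $2b+1$ of $A$-candidates over $p$ while contributing $0$ to every comparison not involving both an $A$-candidate and $p$. (Here the $2b$ reversed copies must be counted in the $|V|$ tally; note $2b+1 + 2b$ is not quite the right bookkeeping, so more precisely: use $b$ ``forward'' voters $\langle A\rangle\succ p\succ\cdots$ paired with $b$ reversed voters, plus one extra forward voter — wait, that gives margin $1$; instead use the construction where a block of voters ranking $A$ above $p$ and $d$ below $p$ is balanced so that the only surviving asymmetry is the $A$-vs-$p$ one, tuned to $2b+1$.) The cleanest route is: take $2b+1$ voters of the form $\langle A\rangle \succ p \succ d \succ \langle B\rangle$ and $2b+1$ of the form $\overleftarrow{\langle B\rangle} \succ d \succ p \succ \overleftarrow{\langle A\rangle}$ minus one — I would pin down the exact pairing so the $A$-over-$p$ margin is $2b+1$ and all margins among $A\cup B\cup\{d\}$ and between $p$ and $d$ are as needed; this accounts for $4b+2$ voters, leaving $2|A|+2|B|+3$ to come from the next blocks.

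Next I would handle conditions~(2) and~(3), the Copeland scores of $p$ and $d$. Condition~(1) already forces $p$ to lose to all $|A|$ candidates in $A$, so $p$'s Copeland score comes entirely from the candidates in $B\cup\{d\}$: I want $p$ to beat exactly $|B|-a$ of them and $d$, say, or beat $|B|-a+1$ among $B\cup\{d\}$ — this is engineered by a block of $|A|$ mirror-paired voters that fix the $p$-vs-$B$ and $p$-vs-$d$ margins: arrange that $p$ beats $|B|-a$ candidates of $B$ and beats $d$, giving $p$ exactly $|B|-a+1$ points. For $d$ to receive exactly $|B|$ points, $d$ must beat all of $B$ (that is $|B|$ points) and lose to $p$ and to everyone in $A$; again a mirror-paired block of $|B|$ voters realizes this. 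Finally, condition~(4) — every candidate in $A\cup B$ gets at most $(|A|+|B|+3)/2$ points — is the ``safety'' condition and I expect this to be the main obstacle: the mirror-pairing makes pairwise elections \emph{within} $A\cup B$ essentially coin-flips, and I must argue that no such candidate accidentally beats too many others. The right way to control this is to choose the fixed orders $\langle A\rangle$, $\langle B\rangle$ and their reversals so that within $A\cup B$ the surviving (tie-breaking, unpaired) voter induces a transitive tournament, whence the $i$-th candidate in that order beats exactly $i-1$ others inside $A\cup B$; combined with losses to $p$ (for $A$) and wins/losses against $p,d$ tallied above, the maximum works out to at most $(|A|+|B|+3)/2$. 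I would close by adding the final five ``extra'' voters (the ``$+5$'' in $|V|$) used to make the total odd and to fine-tune $p$-vs-$d$ and the parities, then re-verify each of (1)–(4) by summing the contributions block-by-block, and note that oddness of $|C|$ rules out ties so all margins are strict.
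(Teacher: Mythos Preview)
Your proposal has two genuine gaps.

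\textbf{Condition (4) fails under your plan.} You propose that the single unmatched voter induce a \emph{transitive} tournament on $A\cup B$, so that ``the $i$-th candidate in that order beats exactly $i-1$ others inside $A\cup B$.'' But then the top candidate in that order beats all $|A|+|B|-1$ others in $A\cup B$, and together with a possible win over $p$ or $d$ its Copeland score can be as large as $|A|+|B|$, which for any $|A|+|B|\geqs 3$ already exceeds $(|A|+|B|+3)/2$. The paper avoids this by \emph{not} letting a single voter break all the internal ties. Instead it introduces, for every $c\in A\cup B$, a dedicated pair of voters whose orders on $A\cup B$ are the two ``half-sequences'' $\hso(A\cup B,c)$ and $\hst(A\cup B,c)$: this pair makes $c$ beat exactly half of $A\cup B\setminus\{c\}$ and ties all other internal comparisons. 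Summed over all $c$, each candidate in $A\cup B$ ends up beating exactly $(|A|+|B|-1)/2$ others inside $A\cup B$, and this is what yields the bound $(|A|+|B|+3)/2$. This block accounts for the $2|A|+2|B|$ term in $|V|$.

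\textbf{Your mirror pairing cannot isolate the $A$-versus-$p$ margin.} You say you want pairs that ``contribute the net margin $2b+1$ of $A$-candidates over $p$ while contributing $0$ to every comparison not involving both an $A$-candidate and $p$.'' A voter $\sigma$ and its exact reversal $\overleftarrow{\sigma}$ cancel on \emph{every} pairwise comparison, including $A$ versus $p$; and the near-reversal you then write down, $\langle A\rangle\succ p\succ d\succ\langle B\rangle$ against $\overleftarrow{\langle B\rangle}\succ d\succ p\succ\overleftarrow{\langle A\rangle}$, also cancels $A$ versus $p$ (the first has $A\succ p$, the second has $p\succ A$), so your ``$2b+1$ minus one'' block gives margin $1$, not $2b+1$ --- as you yourself noticed. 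The paper's trick is to use $b$ pairs in which \emph{both} voters rank all of $A$ above $p$ (namely $\langle A\rangle\succ p\succ\langle B\rangle\succ d$ and $d\succ\overleftarrow{\langle B\rangle}\succ\overleftarrow{\langle A\rangle}\succ p$): each such pair adds $+2$ to the $A$-over-$p$ margin while cancelling on every comparison inside $A\cup B\cup\{d\}$. One further unpaired voter supplies the remaining $+1$. This uses $2b$ voters here rather than your $4b+2$, which is also why your voter count does not match $2|A|+2|B|+2b+5$.
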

\begin{proof}[Proof of Proposition~\ref{prop:dummy}]
  The construction here is a slight adaptation of the construction
  given in the proof of Theorem 4 by Bredereck et
  al.~\cite{BCFNN16}. For convenience, let us define some
  additional notation (also due to Bredereck et al.~\cite{BCFNN16}):
  For any set $T = \{c_1, \dots, c_{|T|}\}$ of an odd number of
  candidates and any $c_i \in T$, we define $\hso(T, c_i)$ and
  $\hst(T, c_i)$ as the preferences orders:
  \begin{align*}
    c_i \succ \left<T_i\right> \succ \left<T\setminus T_i\right>  
    && \text{and} &&
    \overleftarrow{\left<T\setminus T_i\right>} \succ c_i \succ \overleftarrow{\left<T_i\right>}
  \end{align*}
  respectively, where
  $T_i = \{c_{i + 1}, \dots, c_{i + (|T| - 1)/2}\}$; here we use the
  convention that $c_{|T| + j} = c_j$. The key property here is that
  if there is one voter with a preference order that includes
  $\hso(T, c_i)$ and one with a preference order that includes
  $\hst(T, c_i)$, then $c_i$ wins pairwise election against exactly
  half of the candidates in $T \setminus \{c_i\}$ (i.e. those in
  $T_i$) and ties with the rest. All other pairwise elections not
  involving $c_i$ result in ties.

  Let $S \subseteq B$ be any subsets of $B$ of size $a$. For our
  election $E$, we create the following voters:
  \begin{enumerate}
  \item One voter with preference order
    $\left<A \cup B\right> \succ p \succ d$.
  \item $b$ pairs of voters with preference orders
    $\left<A\right> \succ p \succ \left<B\right> \succ d$ and
    $d \succ \overleftarrow{\left<B\right>} \succ
    \overleftarrow{\left<A\right>} \succ p$.
  \item Two voters with preference orders
    $d \succ \left<B\right> \succ \left<A\right> \succ p$ and
    $p \succ \overleftarrow{\left<A\right>} \succ d \succ
    \overleftarrow{\left<B\right>}$.
  \item Two voters with preference orders
    $p \succ \left<B \setminus S\right> \succ \left<S\right> \succ
    \left<A\right> \succ d$
    and
    $d \succ \overleftarrow{\left<A\right>} \succ
    \overleftarrow{\left<S\right>} \succ p \succ
    \overleftarrow{\left<B \setminus S\right>}$.
  \item For every candidate $c \in A \cup B$, two voters with
    preference orders $\hso(A \cup B, c) \succ p \succ d$ and
    $d \succ p \succ \hst(A \cup B, c)$.
  \end{enumerate}
  It is simple to verify that (i) $p$ loses pairwise elections against
  every candidate in $A$ by $2b + 1$ votes, (ii) $p$ wins pairwise
  elections exactly against the candidates in the set
  $\{d\} \cup (B \setminus S)$, (iii) $d$ wins pairwise elections
  exactly with the candidates from $B$, and (iv) each candidate in
  $A \cup B$ wins pairwise elections against exactly half of the other
  candidates in $A \cup B$. These four features of $E$ indeed imply
  the claimed properties in the statement of the proposition.
\end{proof}

\subsection{All-or-Nothing Prices} \label{sec:copeland-01}

The goal of this section is to prove Theorem~\ref{thm:copeland} for
the case of all-or-nothing prices. To this end, we focus on the next
lemma, from which the desired result follows.
The proof is
similar to the $\W[1]$-hardness proof of
$\copeland$-\textsc{Shift-Bribery} of Bredereck et
al.~\cite{BCFNN16} except that the roles of edges and vertices
are reversed.

\begin{lemma} \label{lem:copeland-01} For each $\alpha \in [0, 1]$,
  there exists a reduction that takes in a
  \textsc{Densest-$k$-Subgraph} instance $(G, k, t)$ and outputs an
  instance $I$
  of $\copeland$-\textsc{Shift-Bribery}
  with $(1,\infty)$-all-or-nothing prices of width $|V_G|$
  (where $V_G$ is the set of $G$'s vertices)
  such that the following holds for every $0 < \delta \leqs 1$:
  \begin{enumerate}
  \item (Completeness) If there exists a $k$-vertex subgraph of $G$
    with $t$ edges, then $\opt(I) \leqs k$.
  \item (Soundness) If every $k$-vertex subgraph of $G$ contains fewer
    than $\delta t$ edges, then $\opt(I) > (k - 1)/\sqrt{\delta}$.
  \end{enumerate}
  Moreover, the reduction runs in $\poly(|V_G|, t)$ time.
\end{lemma}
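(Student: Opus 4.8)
The plan is to encode ``selecting a vertex'' of the \textsc{D$k$S} instance as ``bribing one voter'' of the \textsc{Shift-Bribery} instance. Concretely, I would make the bribable voters be in bijection with $V_G$, each carrying an all-or-nothing price of $1$, and I would arrange the election so that the preferred candidate $p$ becomes a Copeland$^{\alpha}$ winner exactly when the set $U\subseteq V_G$ of bribed vertices induces at least $t$ edges. Since a shift action then costs exactly $|U|$, we get $\opt(I)=\min\{\,|U| : |E_G[U]|\ge t\,\}$, and both the completeness claim ($\opt(I)\le k$) and the soundness claim ($\opt(I)>(k-1)/\sqrt{\delta}$) reduce to elementary facts about this quantity, the latter via an averaging argument over $k$-subsets of $U$.

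For the construction I would take the candidate set $\{p\}\cup\{c_e : e\in E_G\}\cup B\cup\{d\}$, where the $c_e$ are ``edge candidates'', $B$ is a set of filler candidates, and $d$ is a ``threshold candidate''. For each vertex $i\in V_G$ I add a bribable voter $v_i$ with preference order $\langle\{c_e : e\ni i\}\rangle\succ p\succ(\text{everything else})$, together with a non-bribable ``mirror'' voter $v_i'$ whose order is the exact reverse of $v_i$. The crucial point is that while $v_i$ is not bribed the pair $\{v_i,v_i'\}$ contributes nothing to any pairwise margin, whereas after $p$ is shifted to the top of $v_i$ the pair contributes $+2$ to $p$'s pairwise margin against $c_e$ for every $e\ni i$ and still nothing to any other pairwise comparison. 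On top of these I would place the dummy voters of Proposition~\ref{prop:dummy}, applied with $A=\{c_e : e\in E_G\}$, with $b=1$ (so that, with no bribes, $p$ loses the pairwise election against each $c_e$ by exactly $3$), with $a=t+1$ (so that $p$ starts with $|B|-t$ Copeland points, $d$ has $|B|$, and every candidate of $A\cup B$ has fewer than $|B|$ points), and with $|B|$ chosen large enough and of the right parity that $|A|+|B|$ is odd, $|B|>|A|+3$, and $|B|\ge t+1$. I make all dummy and mirror voters non-bribable ($c_v=\infty$) and every $v_i$ all-or-nothing with $c_{v_i}=1$; the resulting prices are $(1,\infty)$-all-or-nothing of width at most $|V_G|$, and since the total number of voters stays odd there are never any pairwise ties, which is exactly why the whole argument is insensitive to $\alpha$.

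Given this instance, a shift action may be assumed to pick a set $U\subseteq V_G$ of vertex voters and push $p$ to the top of each (partial shifts cost the same), and more generally any successful action has cost at least $\min\{|U| : |E_G[U]|\ge t\}$. By the mirror property the only pairwise outcomes that can change are that $p$ beats $c_e$ precisely when both endpoints of $e$ lie in $U$, so $p$'s Copeland score becomes $|B|-t+|E_G[U]|$ while $d$'s stays $|B|$ and no other candidate's score moves; hence $p$ is a winner iff $|E_G[U]|\ge t$, and therefore $\opt(I)=\min\{|U| : |E_G[U]|\ge t\}$. Completeness is then immediate: a $k$-vertex subgraph with $t$ edges gives a feasible $U$ of size $k$. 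For soundness, suppose some successful $U$ has $|U|\le(k-1)/\sqrt{\delta}$; if $|U|<k$ I pad $U$ arbitrarily up to $k$ vertices, which still induces at least $t>\delta t$ edges (using $k\le|V_G|$, which we may assume, and $\delta\le1$), contradicting the hypothesis; if $|U|\ge k$, averaging $|E_G[U]|\ge t$ over all $k$-subsets of $U$ produces a $k$-subset inducing at least $t\cdot\frac{k(k-1)}{|U|(|U|-1)}\ge\delta t$ edges, again a contradiction. Thus every successful $U$ has $|U|>(k-1)/\sqrt{\delta}$, i.e.\ $\opt(I)>(k-1)/\sqrt{\delta}$. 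The reduction clearly runs in $\poly(|V_G|,t)$ time.

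The step I expect to be most delicate is verifying the mirror gadget: one must check, case by case over the types of unordered pairs of candidates, that reversing $v_i$ cancels every contribution of $v_i$ and that bribing $v_i$ reintroduces exactly the $+2$ swings in $p$ versus $c_e$ (for $e\ni i$) and nothing else, so that the score guarantees of Proposition~\ref{prop:dummy} survive unchanged in the combined election and no stray candidate ever catches up to $d$. The averaging computation in soundness and the parity bookkeeping that rules out pairwise ties (hence handles all $\alpha$ uniformly) are the remaining routine pieces.
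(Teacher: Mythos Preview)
Your proposal is correct and follows essentially the same approach as the paper: the same candidate set (edge candidates, dummies, $p$, $d$), the same vertex-indexed bribable voters with reversed mirror voters, the same invocation of Proposition~\ref{prop:dummy} with $A=E_G$, $b=1$, $a=t+1$, and the same averaging argument over random $k$-subsets of $U$ for soundness. The only cosmetic difference is that you explicitly separate out the $|U|<k$ padding case, whereas the paper folds it into the random-subset line.
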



\begin{proof}
  Consider an instance $(G,k,t)$ of the decision variant of the
  \textsc{Densest-$k$-Subgraph} problem, where $G = (V_G,E_G)$ is a
  graph and $k$, $t$ are two integers.
  Our reduction forms an instance $I = (E,\psi,p)$ of the
  $\copeland$-\textsc{Shift-Bribery} problem, with election
  $E = (C,V, {\{\succ_v\}_{v \in V}})$..
  We set $C = E_G \cup D \cup \{p,d\}$, where $D$ is a set of
  $|E_G| + 5$ dummy candidates, $p$ is the preferred candidate, and
  $d$ will be the unique winner of the election prior to shifting $p$.
  The voters are constructed as follows (for a vertex $u$, by
  $\Gamma_G(u)$ we mean the set of edges incident to it):
  \begin{enumerate}
  \item For each vertex $u \in V_G$, we create two voters, $v_u$ and
    $v'_u$, such that the preference order of $v_u$ is:
    \begin{align*}
      \big<\Gamma_G(u)\big> \succ_{v_u} p \succ_{v_u} \left<C \setminus \big(\{p\} \cup \Gamma_G(u)\big)\right>.
    \end{align*}
    and the preference order of $v'_u$ is its reverse.  The cost of
    each of the possible shifts for $v_u$ is one, i.e.,
    $\psi_{v_u}(1) = \cdots = \psi_{v_u}(|\Gamma_G(u)|) = 1$, whereas
    the cost of each of the possible shifts for $v_u'$ is infinity,
    i.e.,
    $\psi_{v'_u}(1) = \cdots = \psi_{v'_u}(|C| - 1 - |\Gamma(u)|) =
    \infty$.
  \item We invoke Proposition~\ref{prop:dummy} with $A = E_G, B = D, b = 1$
    and $a = t + 1$ and create polynomially many additional voters so
    that $p$ loses pairwise elections against every candidate in $E_G$
    by $3$ votes, $\scr_E(p) = |D| - t, \scr_E(d) = |D|$ and, for all
    $c \in E_G \cup D$, $\scr_E(c) \leqs (|E_G| + |D| + 3)/2 < |D|$.
    Note that it is possible to obtain these scores via
    Proposition~\ref{prop:dummy} because for each vertex $u$, the preference
    orders of $v_u$ and $v'_u$ cancel each other out.
    For these voters, let the price of all the possible shifts be infinity.
  \end{enumerate}

  (Completeness) Suppose that there exists a set $U \subseteq V_G$ of
  $k$ vertices that induces a graph with at least $t$ edges.  Consider
  a shift action where for each $u \in V$ we shift $p$ to the first
  position in the preference order of $v_u$. Clearly, this shift
  action is of cost $k$. Moreover, for each edge
  $e = \{u_1, u_2\} \in E_G[U]$, the shifts switch the ranks of $e$
  and $p$ in two preference orders, those of $v_{u_1}$ and
  $v_{u_2}$. Since in the original election $p$ was losing the
  pairwise election against $e$ by 3 votes, after the shifts $p$ wins
  this pairwise election. As a result, after the shifts the score of
  $p$ is at least $|D| - t + |E_G[U]| \geqs |D| = \scr_E(d)$. Hence,
  $p$ is a winner of the election.

  (Soundness) 
  Suppose contrapositively
  that $\opt(I) \leqs (k - 1)/\sqrt{\delta}$. Suppose that
  $\bs = (s_v)_{v \in V}$ is a successful shift action with cost
  $\opt(I)$. Since we have all-or-nothing prices, we can assume that
  for every vote where $\bs$ shifts $p$, it shifts him or her to the top
  position in the vote.  Note that no shift action of finite cost can
  change the score of $d$. As a result, $p$ must end up having score
  at least $|D|$. Observe also that for each vote, no shift action of
  finite cost can change the relative order of $p$ and any of the
  candidates in $D$. This implies that after applying $\bs$, $p$ must
  win pairwise elections against at least $t$ candidates in $E_G$; let
  $Y \subseteq E_G$ denote the set of these edges.

  Now, let $U \subseteq V$ denote the set of all $u \in V_G$ such that
  at least one unit shift is applied to $v_u$ (i.e., $s_{v_u} > 0$).
  Note that $|U| = \opt(I) \leqs (k - 1)/\sqrt{\delta}$.  Furthermore,
  it is not hard to see that each edge $e = \{u_1, u_2\} \in E_G$
  loses pairwise election against $p$ if and only if $U$ contains both
  $u_1$ and $u_2$. In other words, we have
  $\big|E_G[U]\big| = |Y| \geqs t$. Let $U'$ be a random subset of $U$
  of size $k$ (or $U' = U$ when $|U|<k$).
  The expected number of edges induced by $U'$ is:
  \begin{align*}
    &\Ex\Big[\big|E_G[U']\big|\Big] = \sum_{u \in V}\sum_{\substack{v \in V\\\{u,v\} \in E_G}} \frac{1}{2}\Pr\left[u \in U' \wedge v \in U'\right] \\ 
    &= \frac{{|U| \choose k-2}}{{|U| \choose k}} \cdot \sum_{u \in V}\sum_{\substack{v \in V\\\{u,v\} \in E_G}} \frac{1}{2} \\
    &= \frac{k(k - 1)}{|U|(|U| - 1)} \cdot \big|E_G[U]\big| \geqs \frac{k(k - 1)}{(k - 1)^2/\delta} \cdot t \geqs \delta t.
  \end{align*}
  As a result, there exists a $k$-vertex subgraph of $G$ with
  $\delta t$ edges, which concludes our proof.
\end{proof}

\subsection{Unit Prices} \label{sec:copeland-unit}

The unit prices part of Theorem~\ref{thm:copeland} is established via
a reduction from the all-or-nothing price case; we state the reduction
in the general form below as it will be used again in the next
section.

\begin{lemma} \label{lem:copeland-red} For every $\alpha \in [0, 1]$
  and $B, B' \in \N$ such that $B \leqs B'$,
  there exists a $poly(|I|, B, B')$-time algorithm that takes in an
  instance $I$
  of $\copeland$-\textsc{Shift-Bribery}
  with $(1,
  \infty)$-all-or-nothing prices of width
  $b$ and produces an instance $I'$
  of $\copeland$-\textsc{Shift-Bribery} with unit prices, such that
  $\min\{B', B \cdot \opt(I)\} \leqs \opt(I') \leqs (B + b) \cdot
  \opt(I)$.
  Moreover, when $B > b \cdot \opt(I)$, any minimum cost successful
  shift action in $I'$ affects only $\opt(I)$ voters.
\end{lemma}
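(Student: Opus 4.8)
The plan is to amplify the two‑level $\{1,\infty\}$ price structure of $I$ into a unit‑price instance by prefixing, inside every voter, a long block of dummy candidates that $p$ must climb over before any shift has a real effect. Write $V_1,V_\infty$ for the sets of voters with $c_v=1$ and $c_v=\infty$ and put $q_v=\pi_v^{-1}(p)$. For each voter $v$ I would create a fresh block $Z_v=\{z^v_1,\dots,z^v_{d_v}\}$ of dummy candidates with $d_v=B$ if $v\in V_1$ and $d_v=B'$ if $v\in V_\infty$, set $C'=C\cup D$ with $D=\bigcup_v Z_v$, keep exactly the voter set of $I$ (adding no auxiliary voters), and use unit prices throughout; call the resulting election $E'$ and the instance $I'=(E',p,\psi')$. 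In voter $v$ I insert $Z_v$ immediately above $p$ and dump every other dummy, in a fixed global order $\prec_D$ (chosen so that $z^v_1\prec_D\dots\prec_D z^v_{d_v}$ inside each block), at the very bottom; so $v$'s order becomes ``(candidates originally above $p$) $\succ z^v_1\succ\dots\succ z^v_{d_v}\succ p\succ$ (candidates originally below $p$) $\succ$ (the remaining dummies in $\prec_D$-order)''.

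Three facts then carry the argument. \emph{Barrier:} inside voter $v$, moving $p$ past any candidate of $C$ originally above it requires shifting $p$ up by at least $d_v+1$ positions, i.e.\ paying $d_v+1$ under unit prices; in particular any shift action must spend more than $B'$ unit shifts on a $V_\infty$ voter it wishes to exploit. \emph{Faithfulness:} the restriction of each new preference order to $C$ is exactly the old one and the voter set is unchanged, so the pairwise comparisons among $C$ agree in $I$ and $I'$, and a shift action of $I'$ that shifts $p$ in $v$ by $d_v+r_v$ has, on the candidates of $C$, precisely the effect of the all‑or‑nothing bribe of $I$ that bribes the voters with $r_v\ge 1$. \emph{Inert dummies:} every candidate of $C$ lies above each $z^v_i$ in every voter except possibly $v$, hence beats $z^v_i$ by at least $|V|-2$ votes and does so under \emph{any} shift action; consequently each $c\in C$ gains exactly $|D|$ extra Copeland points over its $I$-score, every dummy loses every pairwise election against $C$ and so has score at most $|D|-1<|D|\le\scr_{E'}(c)$, and $p$ is a winner of $I'$ after a shift action if and only if $p$ is a winner of $I$ after the projected bribe.

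Given this, the quantitative claims are short. Taking an optimal all‑or‑nothing bribe $W^{\ast}\subseteq V_1$ of $I$ and shifting $p$ in each $v\in W^{\ast}$ up by $B+(q_v-1)\le B+b$ positions yields, by faithfulness, a successful shift action of cost at most $(B+b)\opt(I)$. Conversely, let $\bs'$ be a minimum‑cost successful shift action of $I'$: either it moves $p$ past a $C$-candidate in some $V_\infty$ voter and then already costs more than $B'$, or it does not, in which case the $V_1$ voters on which it has a real effect project to a successful bribe $W$ of $I$ with $|W|\ge\opt(I)$, and since each of them costs more than $B$ we get $\cost_{I'}(\bs')>B\,\opt(I)$; either way $\opt(I')\ge\min\{B',B\,\opt(I)\}$. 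Finally, when $B>b\,\opt(I)$ a minimum‑cost $\bs'$ contains no wasted shifts and spends more than $B$ on every voter it genuinely uses, while its total cost is at most $(B+b)\opt(I)<B(\opt(I)+1)$; hence it affects at most $\opt(I)$ voters (and exactly $\opt(I)$ whenever it avoids the $V_\infty$ voters, again by the projection).

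The delicate step is the ``inert dummies'' fact. Under unit prices a voter cannot be frozen except by placing $p$ at its very top, and adding many such auxiliary voters would uniformly inflate $p$'s pairwise margins and thereby distort the instance; the construction above avoids auxiliary voters entirely by giving each dummy its own block, placed so tightly around $p$ that the dummy is above $p$ (indeed above almost everything) in a single voter and is buried at the bottom of every other voter. What must be verified with care is that this one placement simultaneously builds the $B$- and $B'$-sized cost barriers, leaves the effective $C$-election and its winner untouched after the uniform $+|D|$ shift of all Copeland scores, and keeps every dummy strictly below $p$ in score robustly under every shift action; the rest is bookkeeping.
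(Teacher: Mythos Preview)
Your construction and argument are essentially identical to the paper's proof: the paper also inserts, for each voter $v$, a block $D_v$ of $B$ (if $c_v=1$) or $B'$ (if $c_v=\infty$) fresh filler candidates immediately above $p$, places all remaining fillers at the bottom of $v$'s order, and then argues exactly your barrier/faithfulness/inertness facts and the same upper and lower bounds on $\opt(I')$. The one detail you leave implicit that the paper states explicitly is the harmless assumption $|V|\geqs 3$, needed so that the $|V|-2$ margin of every $c\in C$ over each dummy is actually a pairwise win.
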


\begin{proof}
  Let the notation be as in the statement of the theorem and let
  $E = (C,V, \{\succ_v\}_{v \in V})$ be the election from
  instance~$I$.  We assume without loss of generality that $|V| \geqs 3$.
  We form an election $E' = (C', V, \{\succ'_{v}\}_{v \in V})$ with the
  same voters (but with modified preference orders), and with
  candidate set $C' = C \cup D$, where $D$ is the set of additional
  $(B + B') \cdot |V|$ ``filler candidates''. Let
  $\{D_{v}\}_{v \in V}$ be a collection of disjoint subsets of $D$
  such that $|D_{v}| = B$ if it costs one to shift $p$ in $v$'s
  preference list and $|D_{v}| = B'$ otherwise. For each voter
  $v \in V$, his or her preference order $\succ'_{v}$ in $E'$ is:
  \begin{align*}
    \pi_v(1) &\succ'_{v} \cdots \succ'_{v} \pi_v(\pi_v^{-1}(p) - 1) \succ'_{v} 
         \left<D_v\right> \succ'_{v}  p \\ 
       &\succ'_{v}  \pi_v(\pi_v^{-1}(p) + 1) \succ'_{v} \cdots \succ'_{v} \pi_v(|C|) \succ'_{v} \left<D \setminus D_v\right>.
  \end{align*}
  In other words, we add the filler candidates from $D_v$ right in
  front of $p$, and we put the rest of the filler candidates at the
  end of the list. Note that each filler candidate from $D$ loses
  pairwise election to each candidate from $C$.  We form an instance
  $I' = (E',p, \psi')$, where $\psi'$ are unit prices.

  It is not hard to check that if a shift action $\bs \in \N_0^{V}$ is
  successful for $I$, then the shift action $\bs' \in \N_0^{V}$
  defined by:
  \begin{align*}
    s'_{v} =
    \begin{cases}
      0 & \text{ if } s_v = 0, \\
      B + \pi_v^{-1}(p) - 1 & \text{ if } s_{v} > 0,
    \end{cases}
  \end{align*}
  is successful for $I'$. Moreover, since $I$ uses
  $(1, \infty)$-all-or-nothing price functions of width $b$, the cost
  incurred by $\bs'$ for each voter is at most $(B + b)$ times the
  original cost of that voter in $\bs$. As a result, we have
  $\opt(I') \leqs (B + b) \cdot \opt(I)$ as desired. 

  Next, we show that $\opt(I') \geqs \min\{B \cdot \opt(I), B'\}$.
  Suppose that $\opt(I') < B'$; we will show that
  $\opt(I') \geqs B \cdot \opt(I)$.  Consider any shift action
  $\bs' \in \N_0^V$ with cost $\opt(I')$ (with respect to the instance
  $I'$) that is successful for $I'$. We define $\bs \in \N_0^{V}$ by:
  \begin{align*}
    s_v = 
    \begin{cases}
      0 & \text{ if } s'_{v} \leqs |D_{v}|, \\
      s'_{v} - |D_{v}| & \text{ if } s'_{v} > |D_v|.
    \end{cases}
  \end{align*}
  It is simple to see that $\bs$ is successful for $I$. Moreover,
  since $\opt(I') < B'$ and for each voter $v \in V$ such that the
  cost of shifting $p$ in $v$ is $\infty$ (in terms of the instance
  $I$) we have that $|D_{v'}| = B'$, we see that $\bs$ does not
  contain infinity-priced shifts (in terms of $I$). Finally, for every
  voter $v$ such that $s_v > 0$, we have $s'_v > B$. This means that
  the cost of $\bs$ (in terms of $I$) is at most $\opt(I') /
  B$. Hence, $B \cdot \opt(I) \leqs \opt(I')$ as desired.

  Lastly, suppose that $B > b \cdot \opt(I)$.
  Consider any minimum cost successful shift action $\bs'$ of $I'$;
  we might assume without loss of generality that $s'_v$ is either zero
  or at least $B$. Otherwise, if $0 < s'_v < B$,
  we can change $s'_v$ to zero, which retains $p$ as a winner and also reduces
  the cost. From this, we have that the number of affected voters
  in $\bs'$ is at most
  \begin{align*}
  \frac{\opt(I')}{B} \leqs \frac{(B + b) \cdot \opt(I)}{B} < \opt(I) + 1.
  \end{align*}
  In other words, $\bs'$ affects at most $\opt(I)$ voters,
  which concludes our proof.
\end{proof}

We can now prove the desired inapproximability of $\copeland$ for unit
prices, by simply applying Lemmas~\ref{lem:copeland-01}
and~\ref{lem:copeland-red} together with appropriate values of $B$ and
$B'$. This idea is formalized below.

\begin{lemma} \label{lem:copeland-unit} For each $\alpha \in [0, 1]$,
  there exists a reduction that takes in a
  \textsc{Densest-$k$-Subgraph} instance $(G, k, t)$ and outputs an
  instance $I'$
  of $\copeland$-\textsc{Shift-Bribery}
  with unit prices such that the following holds for every $0 < \delta \leqs 1$
  ($V_G$ is the set of vertices for $G$):
\begin{enumerate}
\item (Completeness) If there exists a $k$-vertex subgraph of $G$ with
  $t$ edges, then $\opt(I') \leqs 2 |V_G| \cdot k$.
\item (Soundness) If every $k$-vertex subgraph of $G$ contains fewer
  than $\delta t$ edges, then
  $\opt(I') > \frac{|V_G| (k - 1)}{\sqrt{\delta}}$.
\end{enumerate}
Moreover, the reduction runs in $\poly(|V_G|, t)$ time.
\end{lemma}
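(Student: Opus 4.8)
The plan is to compose the two reductions just established. Starting from a \textsc{D$k$S} instance $(G,k,t)$ (we may assume $t \ge 1$, $|V_G| \ge k$, and $E_G \neq \emptyset$, since the remaining instances are trivial and can be mapped to a fixed $\copeland$-\textsc{Shift-Bribery} instance in which $p$ can never be made a winner), first apply Lemma~\ref{lem:copeland-01} to obtain an instance $I$ of $\copeland$-\textsc{Shift-Bribery} with $(1,\infty)$-all-or-nothing prices of width $b = |V_G|$, which satisfies $\opt(I) \le k$ in the completeness case and $\opt(I) > (k-1)/\sqrt{\delta}$ in the soundness case. Then feed $I$ into Lemma~\ref{lem:copeland-red} with parameters $B := |V_G|$ and $B' := |V_G|^2 t$ (so $B \le B'$, both are polynomial in $|V_G|,t$, and the composed reduction runs in $\poly(|V_G|,t)$ time), obtaining the unit-price instance $I'$ with $\min\{B', B\cdot\opt(I)\} \le \opt(I') \le (B + b)\cdot\opt(I)$.

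Completeness is then immediate: if $G$ has a $k$-vertex subgraph with $t$ edges, then $\opt(I) \le k$, hence $\opt(I') \le (B+b)\opt(I) = 2|V_G|\cdot\opt(I) \le 2|V_G|k$. For soundness, assume every $k$-vertex subgraph of $G$ has fewer than $\delta t$ edges, so that $\opt(I) > (k-1)/\sqrt{\delta}$; we must upgrade $\opt(I') \ge \min\{B', B\cdot\opt(I)\}$ to $\opt(I') > |V_G|(k-1)/\sqrt{\delta} = B(k-1)/\sqrt{\delta}$. The crucial auxiliary fact is that whenever $\opt(I)$ is finite it is at most $|V_G|$: bribing all $|V_G|$ unit-price voters $v_u$ (shifting $p$ to the top of each) already makes $p$ win every pairwise election against the edge-candidates and is therefore successful (this is exactly the situation $|E_G| \ge t$). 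Consequently, if $\opt(I) < \infty$ then $B\cdot\opt(I) \le |V_G|^2 \le B'$, so the minimum equals $B\cdot\opt(I) > B(k-1)/\sqrt{\delta}$; and if $\opt(I) = \infty$ (which here means $|E_G| < t$), then $\opt(I') \ge B' = |V_G|^2 t$, and since $E_G \neq \emptyset$ and $|V_G| \ge k$ some $k$-vertex subgraph contains at least one edge, so the soundness hypothesis can only hold for $\delta > 1/t$, whence $|V_G|(k-1)/\sqrt{\delta} < |V_G|(k-1)\sqrt{t} < |V_G|^2 t$. In every case $\opt(I') > |V_G|(k-1)/\sqrt{\delta}$.

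The only delicate point — the step I would be most careful about — is the interaction between the fixed choice of $B'$ inside the construction and the fact that soundness is quantified over \emph{all} $0 < \delta \le 1$: when $\opt(I) = \infty$ the guarantee supplied by Lemma~\ref{lem:copeland-red} is merely $\opt(I') \ge B'$, a fixed finite number, so one genuinely needs both the observation that the soundness hypothesis then forces $\delta$ to stay bounded away from $0$ (by $1/t$) and, for the finite branch, the a priori bound $\opt(I) \le |V_G|$, together with a (polynomial) value of $B'$ large enough to dominate $|V_G|(k-1)\sqrt{t}$; handling the genuinely degenerate instances ($E_G=\emptyset$ or $|V_G|<k$, and $t=0$) by hand keeps this clean. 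Everything else is a direct substitution into the two cited lemmas.
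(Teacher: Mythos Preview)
Your proposal is correct and follows the same approach as the paper: compose Lemma~\ref{lem:copeland-01} with Lemma~\ref{lem:copeland-red} using $B=|V_G|$. The only difference is bookkeeping in the soundness step: the paper takes $B'=|V_G|^4+1$ and uses the harmless assumptions $\delta\geqs 1/t$ and $t\leqs k^2\leqs |V_G|^2$ to get $\min\{B',B\cdot\opt(I)\}>\min\{|V_G|^4,\,|V_G|(k-1)/\sqrt{\delta}\}\geqs |V_G|(k-1)/\sqrt{\delta}$, whereas you take $B'=|V_G|^2 t$ and instead open the construction of Lemma~\ref{lem:copeland-01} to observe $\opt(I)\leqs |V_G|$ whenever it is finite. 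Both choices work; the paper's is slightly more black-box, yours is a bit more explicit about the degenerate cases.
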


\begin{proof}
  Let $(G,k,t)$ be our input instance of D$k$S,
  where $G  = (V_G,E_G)$ is a graph,  $k$
  is the number of vertices we can select, and $t$
  is the lower bound on the required number of edges in the graph
  induced by the selected vertices.  The reduction proceeds as
  follows. First, we apply Lemma~\ref{lem:copeland-01} to produce an
  instance $I$.
  Then, we invoke Lemma~\ref{lem:copeland-red} with $B = |V_G|$ and
  $B' = |V_G|^4 + 1$ to produce an instance $I'$,
  which we output. The reduction runs in polynomial time.

  (Completeness) If there exists a $k$-vertex subgraph of $G$ with $t$
  edges, then Lemma~\ref{lem:copeland-01} ensures that
  $\opt(I) \leqs k$. Moreover, Lemma~\ref{lem:copeland-red} then
  ensures that
  $\opt(I') \leqs (|V_G|-1 + |V_G|) \opt(I) \leqs 2|V_G| \cdot k$.

  (Soundness) If every $k$-vertex subgraph of $G$ contains fewer than
  $\delta t$ edges, then Lemma~\ref{lem:copeland-01} implies that
  $\opt(I) > (k - 1)/\sqrt{\delta}$. Then,
  Lemma~\ref{lem:copeland-red} ensures that
  $\opt(I') \geqs \min\{B', B \cdot \opt(I)\} > \min\{|V_G|^4, |V_G|
  (k - 1) / \sqrt{\delta}\} \geqs |V_G| (k - 1) / \sqrt{\delta}$.
  (The last inequality holds because we can assume w.l.o.g. that
  $\delta \geqs 1/t$ and that $t \leqs k^2 \leqs |V_G|^2$.)
\end{proof}

\subsection{FPT Inapproximability Results} \label{sec:copeland-fpt}

In this section we show that approximating
$\copeland$-\textsc{Shift-Bribery} is difficult even for FPT
algorithms, for the parameterizations by the number of unit shifts and
by the number of affected voters.

\subsubsection{Parameterization by the Number of Unit Shifts}

We first show $\FPT$ inapproximability of
$\copeland$-\textsc{Shift-Bribery} with unit prices, parameterized by
the number of unit shifts (Theorem~\ref{thm:copeland-unit-fpt}). To do
so, let us recall the following hardness result regarding
distinguishing a graph with a $k$-clique and one in which every
$k$-vertex subgraph is sparse, as proved by Chalermsook et
al.~\cite{CCKLMNT17} (which, in turn, relies heavily on the reduction
and the main lemma of Manurangsi~\cite{Man17}).

\begin{theorem}[Chalermsook et al.~\cite{CCKLMNT17}]\label{thm:dks-fpt-inapprox}
  Assuming Gap-ETH, for every computable function $T$ and every constant
  $\delta \in (0, 1)$, there is no algorithm that, given a graph
  $G = (V_G, E_G)$ and an integer $k$, can distinguish between the
  following two cases in time $T(k) \cdot |V_G|^{O(1)}$:
  \begin{enumerate}
  \item (Yes) There exists a $k$-clique in $G$.
  \item (No) Every $k$-vertex subgraph of $G$ contains fewer than
    $\delta \cdot \binom{k}{2}$ edges.
  \end{enumerate}
\end{theorem}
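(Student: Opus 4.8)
The plan is to reduce Gap-3SAT to this promise problem in a parameter-preserving way, following~\cite{Man17,CCKLMNT17}. By the sparsification lemma together with a constant-gap PCP --- equivalently, the standard formulation of Gap-ETH --- we may assume we are given a 3-CNF formula $\phi$ with $n$ variables and $m = \Theta(n)$ clauses, such that no $2^{o(n)}$-time algorithm distinguishes a satisfiable $\phi$ from one that is at most $(1 - \varepsilon_0)$-satisfiable, for an absolute constant $\varepsilon_0 > 0$. Fix the target sparsity constant $\delta \in (0, 1)$.

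Next I would build an FGLSS-style graph $G$ together with a parameter $k$. Cover the $m$ clauses by $k$ \emph{blocks} $B_1, \dots, B_k$, each consisting of $r = \Theta(m/k)$ clauses, where the cover is chosen (as a function of $\delta$ and $\varepsilon_0$) to form a disperser: any $\sqrt{\delta}\, k$ of the blocks already touch all but at most $\varepsilon_0 m / 10$ clauses. Letting $V_i$ be the variables occurring in $B_i$ and $\Sigma_i$ the set of assignments to $V_i$ that satisfy every clause of $B_i$, put $V(G) = \bigcup_{i \in [k]} (\{i\} \times \Sigma_i)$, so that $|V(G)| \le k \cdot 2^{3r} = k \cdot 2^{O(m/k)}$, and join $(i, \alpha)$ to $(j, \beta)$ iff $i \ne j$ and $\alpha, \beta$ agree on $V_i \cap V_j$. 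Since $G$ is $k$-partite, any $k$-vertex subset spans at most $\binom{k}{2}$ edges, with equality exactly when it is a $k$-clique, i.e. one pairwise-consistent vertex per block.

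Completeness is immediate: a satisfying assignment $\sigma$ for $\phi$ yields the $k$-clique $\{(i, \sigma|_{V_i})\}_{i \in [k]}$. The real work is the soundness direction: if $\phi$ is at most $(1 - \varepsilon_0)$-satisfiable, then every $k$-vertex subset must span fewer than $\delta \binom{k}{2}$ edges. Contrapositively, from a $k$-vertex subset $S$ with at least $\delta \binom{k}{2}$ edges one has to manufacture an assignment satisfying more than $(1 - \varepsilon_0) m$ clauses, a contradiction. The extraction is by a plurality vote --- each variable $x$ is given the value cast by the largest number of vertices of $S$ whose block contains $x$ --- and a double count over the (many) consistent pairs inside $S$, combined with the disperser property of the block cover, shows that this assignment disagrees with few of the votes on few variables and hence satisfies almost every clause. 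This is the step I expect to be the main obstacle: a naive fixed-partition FGLSS graph only yields a $(1 + O(\varepsilon_0))$-factor soundness gap, whereas here we must drive it down to an \emph{arbitrarily small} constant $\delta$, which is exactly what the disperser and the main lemma of~\cite{Man17} accomplish; I would import that lemma with at most cosmetic changes.

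Finally, the parameter bookkeeping. Suppose, towards a contradiction, that for this $\delta$ there were an algorithm solving the promise problem in time $T(k) \cdot |V(G)|^{O(1)}$ for some computable $T$. Composing the reduction above (run with parameter $k$) with this algorithm decides Gap-3SAT on $\phi$ in time $T(k) \cdot (k \cdot 2^{O(m/k)})^{O(1)} = T(k) \cdot 2^{O(m/k)}$, absorbing $\poly(k)$ into the exponential. Since $T$ is computable, the reduction may choose $k = k(m) \to \infty$ slowly enough that $T(k(m)) = 2^{o(m)}$ while still $2^{O(m/k(m))} = 2^{o(m)}$ (the latter because $k(m) \to \infty$); the total running time is then $2^{o(m)} = 2^{o(n)}$, contradicting Gap-ETH. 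Ranging over all computable $T$ and all constants $\delta \in (0, 1)$ gives the theorem.
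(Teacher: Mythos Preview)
The paper does not prove Theorem~\ref{thm:dks-fpt-inapprox} at all: it is quoted verbatim as a known result of Chalermsook et al.~\cite{CCKLMNT17} (building on Manurangsi~\cite{Man17}) and is used as a black box to derive Theorem~\ref{thm:copeland-unit-fpt}. So there is no ``paper's own proof'' to compare against. Your sketch is therefore not competing with anything in the paper; it is a summary of the proof from the cited external sources, and as such it is broadly faithful to that argument --- FGLSS-style label-cover graph over clause blocks, disperser/birthday-repetition structure to amplify the gap to an arbitrary constant $\delta$, plurality decoding for soundness, and the standard $k = k(m) \to \infty$ trick to convert an $\FPT$ algorithm into a $2^{o(n)}$ algorithm for Gap-3SAT. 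If your goal is only to reproduce what the present paper does with this statement, you should simply cite it; if your goal is to supply the omitted proof, your outline is on the right track, with the soundness extraction (the main lemma of~\cite{Man17}) indeed being the only nontrivial step.
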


In Lemma~\ref{lem:copeland-unit-fpt-red}, we give a reduction from the
above problem to $\copeland$-\textsc{Shift-Bribery} with unit prices,
parameterized by the number of unit shifts (i.e., parameterized by the
value of an optimal solution).  Together with
Theorem~\ref{thm:dks-fpt-inapprox}, this reduction implies
Theorem~\ref{thm:copeland-unit-fpt} (by selecting $\delta =
\varepsilon/2$).

\begin{lemma} \label{lem:copeland-unit-fpt-red} For every
  $\alpha \in [0, 1]$ and every constant $\delta \in (0, 1)$, there
  exists a polynomial time reduction that takes in a graph $G$
  and a positive integer $k$, and outputs an instance $I'$
  of $\copeland$-\textsc{Shift-Bribery} with unit prices such that the
  following holds (where $R = \lceil 4/\delta \rceil + 2$):
  \begin{enumerate}
  \item (Completeness) If there exists a $k$-clique in $G$, then
    $\opt(I') \leqs R \cdot \binom{k}{2}$.
  \item (Soundness) If every $k$-vertex subgraph of $G$ contains fewer
    than $\delta \binom{k}{2}$ edges, then
    $\opt(I') > R (2 - 2\delta) \binom{k}{2}$.
  \end{enumerate}
\end{lemma}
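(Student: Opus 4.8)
The plan is to reduce from the clique-versus-sparse problem of Theorem~\ref{thm:dks-fpt-inapprox}, reusing the two-stage pipeline from Section~\ref{sec:copeland-unit}: first build a $(1,\infty)$-all-or-nothing $\copeland$-\textsc{Shift-Bribery} instance $I$, then convert it to unit prices via Lemma~\ref{lem:copeland-red}. The essential difference from Lemma~\ref{lem:copeland-01} is that the intermediate instance must have \emph{constant} width $b$ (a small fixed constant suffices), because the number of unit shifts in the final instance — our parameter — has to be bounded by a function of $k$ alone, and Lemma~\ref{lem:copeland-red} turns the width into a multiplicative factor $B+b$. Trading the $\Theta(1/\sqrt{\delta})$ gap of Lemma~\ref{lem:copeland-01} for the weaker constant gap $2-2\delta$ is precisely the price of this bounded width.

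Concretely, I would use Proposition~\ref{prop:dummy} to fix the Copeland scores so that a dummy $d$ is the unique initial winner, $p$ trails $d$ by exactly $\binom{k}{2}$ points, every edge candidate $c_e$ ($e\in E_G$) and every auxiliary candidate beats $p$ by a small odd margin, and no candidate other than $d$ can ever overtake $d$; then $p$ wins precisely when the shift action makes $p$ defeat $\binom{k}{2}$ further candidates. The shift gadget must be arranged so that (i) every affected voter moves $p$ by at most $b=O(1)$ positions, yet (ii) the per-edge cost of getting $p$ to defeat $c_e$ obeys a clean $1$-versus-$2$ dichotomy: there is a cheap route costing one bribed voter, available for $c_e$ exactly when $e$ lies inside the set of ``committed'' vertices, with the commitment of the $k$ vertices of a clique amortized so that committing the clique and defeating all its $\binom{k}{2}$ edges cheaply costs $\binom{k}{2}$ voters in total, while an edge outside the committed set admits only a route costing at least two voters. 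Feeding this $I$ into Lemma~\ref{lem:copeland-red} with $B=R-b$ and $B'$ a sufficiently large polynomial then turns completeness $\opt(I)\le\binom{k}{2}$ into $\opt(I')\le (B+b)\binom{k}{2}=R\binom{k}{2}$, and (since $B'$ is not binding) turns an all-or-nothing soundness bound $\opt(I)>\frac{R}{R-b}(2-2\delta)\binom{k}{2}$ into $\opt(I')\ge B\cdot\opt(I)>R(2-2\delta)\binom{k}{2}$.

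For the intermediate instance, completeness is immediate — commit the clique's $k$ vertices and defeat its $\binom{k}{2}$ edges cheaply. For soundness I would argue contrapositively: from a successful shift action for $I$ of cost at most the stated threshold, read off the committed vertex set $U$; the cheaply-defeated edges lie in $E_G[U]$, and the gadget — together with a sampling/averaging step in the spirit of the soundness argument of Lemma~\ref{lem:copeland-01}, or a direct cap obtained from the cost budget — forces fewer than $\delta\binom{k}{2}$ such edges. Hence more than $(1-\delta)\binom{k}{2}$ of the $\binom{k}{2}$ required defeats use the expensive route at cost at least two voters each, pushing the total above $(2-2\delta)\binom{k}{2}$ even after the $\frac{R}{R-b}$ slack, a contradiction. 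One finally sets $R=\lceil 4/\delta\rceil+2$ so that $B=R-b$ makes every inequality close.

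The real obstacle is designing the gadget so that it has constant width while still realizing the $1$-versus-$2$ cost dichotomy: this is exactly where Lemma~\ref{lem:copeland-01}'s construction fails, since there a single vertex voter moves $p$ past all incident edge candidates and so the width is $|V_G|$; here a bounded number of moves per voter cannot directly ``defeat'' all edges inside $U$ at once, so the coupling between the graph structure and the shift costs must be threaded through a bounded-size per-voter gadget. Keeping the Copeland pairwise bookkeeping consistent (committing a vertex or taking the cheap route must flip exactly the intended pairwise elections, and $d$ must never be overtaken), making the soundness step genuinely limit the cheaply-defeated edges to fewer than $\delta\binom{k}{2}$, and choosing the margins, the constant $b$, and the auxiliary copy counts so that the bounds are exactly $R\binom{k}{2}$ and $R(2-2\delta)\binom{k}{2}$, is the bulk of the delicate-but-routine remaining work.
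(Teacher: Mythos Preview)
Your high-level plan matches the paper: build a $(1,\infty)$-all-or-nothing instance of \emph{constant} width, then apply Lemma~\ref{lem:copeland-red}. The gap is that you never actually construct such an instance. You keep the edges as candidates (as in Lemma~\ref{lem:copeland-01}) and propose a vague ``commitment'' gadget with a $1$-versus-$2$ per-edge cost dichotomy, but you yourself flag this as ``the real obstacle'' and leave it unresolved. As stated, the sketch is not a construction: if commitment is free you would commit every vertex; if it has positive cost your amortization claim (``committing the clique and defeating all its $\binom{k}{2}$ edges cheaply costs $\binom{k}{2}$ voters in total'') needs a concrete mechanism, and your soundness paragraph relies on properties of this unspecified gadget.

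The paper's key idea, which you are missing, is simply to \emph{swap the roles of vertices and edges}: make the vertices of $G$ the candidates and the edges the voters. Each edge voter $v_e$ for $e=\{u_1,u_2\}$ ranks $u_1\succ u_2\succ p\succ\cdots$, so the width is exactly $2$ with no further work. Setting the margin so that $p$ loses to each vertex candidate by $2k-3$ votes and so that $p$ needs to beat $k$ new vertex candidates, completeness is immediate (bribe the $\binom{k}{2}$ edge voters of a $k$-clique). Soundness is a one-line double count: if $E^*$ is the set of bribed edge voters and $U$ is a set of $k$ vertices that $p$ ends up beating, then each $u\in U$ forces at least $k-1$ incident edges into $E^*$, so
\[
2\binom{k}{2}\leq \sum_{u\in U}\sum_{e\in E^*}\ind[u\in e]=\sum_{e\in E^*}|e\cap U|\leq |E^*|+|E^*\cap E_G[U]|,
\]
whence $|E_G[U]|\geq 2\binom{k}{2}-|E^*|\geq\delta\binom{k}{2}$ whenever $|E^*|\leq(2-\delta)\binom{k}{2}$. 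Plugging $b=2$ and $B=\lceil 4/\delta\rceil$ into Lemma~\ref{lem:copeland-red} then gives exactly the stated bounds with $R=B+2$. No commitment gadget, no amortization, no sampling step.
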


Before we proceed, we note that the reduction from the previous
section does not work in the parameterized context. The reason is that
the optimum there depends on $|V_G|$, whereas here we would like the
optimum to be bounded from above by some function of $k$. Instead, we
turn to the reduction used by Bredereck et al.~\cite{BCFNN16} to
prove $\W[1]$-hardness of the problem; we only make slight
modifications to the reduction so that the analysis goes through even
for the inapproximability proof. Once again, we describe this
reduction in two steps, by first reducing to
$\copeland$-\textsc{Shift-Bribery} with $(1, \infty)$-all-or-nothing
prices and then applying Lemma~\ref{lem:copeland-red} to obtain an
instance with unit prices.

\begin{lemma} \label{lem:copeland-01-fpt} For every
  $\alpha \in [0, 1]$, there exists a polynomial time reduction that
  takes in a graph $G$ and a positive integer $k$ and outputs an
  instance $I$ of $\copeland$-\textsc{Shift-Bribery} with
  $(1, \infty)$-all-or-nothing prices of width $2$, such that the
  following holds for every $0 < \delta \leqs 1$:
  \begin{enumerate}
  \item (Completeness) If there exists a $k$-clique in $G$, then
    $\opt(I) \leqs \binom{k}{2}$.
  \item (Soundness) If every $k$-vertex subgraph of $G$ contains fewer
    than $\delta \binom{k}{2}$ edges, then
    $\opt(I) > (2 - \delta) \binom{k}{2}$.
  \end{enumerate}
\end{lemma}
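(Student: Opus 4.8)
The plan is to adapt the $\W[1]$-hardness construction of Bredereck et al.~\cite{BCFNN16} so that the edges of $G$ play the role of candidates that $p$ must overtake, and the vertices of $G$ correspond to votes that can be (all-or-nothing) bribed at unit price with width $2$; the width bound is crucial, since each vote should be associated with one endpoint of an edge and shifting $p$ past that edge requires at most two positions. Concretely, I would set the candidate set to $C = E_G \cup D \cup \{p, d\}$ where $D$ is a set of dummy candidates, and create, for each vertex $u \in V_G$, a ``gadget'' vote in which $p$ sits just below the edges $\Gamma_G(u)$ incident to $u$, together with a reversing vote $v'_u$ whose shifts cost $\infty$ (so that the pairwise tallies among the non-$p$ candidates are unaffected by the construction and we may invoke Proposition~\ref{prop:dummy}). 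I would invoke Proposition~\ref{prop:dummy} with $A = E_G$, $B = D$, $b = 1$, and an appropriate value of $a$ so that $p$ loses each pairwise election against an edge-candidate $e \in E_G$ by exactly $3$ votes, $\scr_E(p) = |D| - \binom{k}{2}$, $\scr_E(d) = |D|$, and every candidate in $E_G \cup D$ has Copeland score strictly below $|D|$. The price functions are $(1, \infty)$-all-or-nothing: shifting $p$ in the vote $v_u$ costs $1$ (and is possible by up to $|\Gamma_G(u)|$ positions, but only the first $2$ positions matter for overtaking any single incident edge), while all other shifts cost $\infty$.

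For completeness, if $G$ has a $k$-clique $U$, I bribe the $\binom{k}{2}$... wait, that over-counts; the key point is that $p$ overtakes an edge $e = \{u_1, u_2\}$ in its pairwise election exactly when \emph{both} $v_{u_1}$ and $v_{u_2}$ are bribed (each such bribe flips $p$ above $e$ in one vote, turning a $3$-vote deficit into a $1$-vote lead). So bribing all $k$ votes $\{v_u : u \in U\}$ costs $k$ — but the lemma claims $\opt(I) \leq \binom{k}{2}$, which is weaker for $k \geq 3$, so that bound is fine (and in fact we expect equality $\opt(I) = k$ when $k \geq 1$, but only the stated upper bound is needed). After these $k$ bribes, $p$ wins pairwise elections against all $\binom{k}{2}$ edges of the clique, so $\scr(p) \geq |D| - \binom{k}{2} + \binom{k}{2} = |D| = \scr(d)$, and since no finite-cost shift can change $\scr(d)$ or the relative order of $p$ and any dummy, $p$ becomes a winner. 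Hmm — to match the literal statement I should double-check whether the intended bribe set has size $\binom{k}{2}$ rather than $k$; in the Bredereck et al.\ reduction it is the \emph{vertices} that are candidates and the edges that are voters, and here ``the roles of edges and vertices are reversed'' relative to Lemma~\ref{lem:copeland-01}, so I would instead make each \emph{edge} of $G$ a votable vote and each \emph{vertex} a candidate $p$ must pass, with width $2$ arising because an edge-vote lists its two vertex-endpoints adjacently above $p$; then bribing the $\binom{k}{2}$ edge-votes of a $k$-clique makes $p$ overtake all $k$ clique-vertices (vertex $u$ is overtaken once $\deg_U(u)$ of its incident edge-votes are bribed, and we arrange $p$'s deficit against $u$ to equal... ) — I would fix the exact parameters so that $p$ overtakes a vertex-candidate precisely when that vertex is covered by $k-1$ bribed clique-edges, giving completeness bound $\binom{k}{2}$.

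For soundness, suppose every $k$-vertex subgraph of $G$ has fewer than $\delta\binom{k}{2}$ edges, and suppose toward a contradiction that $\opt(I) \leq (2 - \delta)\binom{k}{2}$; let $\bs$ be an optimal successful shift action, which by all-or-nothing-ness we may take to shift $p$ to the top in each affected vote. As in Lemma~\ref{lem:copeland-01}, $p$ must end with score $\geq |D|$, hence must overtake at least $k$ of the vertex-candidates; let $W \subseteq V_G$ be a set of $k$ such overtaken vertices. The combinatorial heart of the argument is that a vertex $w$ is overtaken only if a large fraction of its incident edge-votes have been bribed — specifically, at least $\deg_G(w) - (\text{something}) $ of them — so the set $F$ of bribed edges must ``almost cover'' all of $W$, forcing $F$ to contain many edges with both endpoints in $W$, i.e.\ $|E_G[W]| \geq \binom{k}{2} - |F \setminus E_G[W]|$ or a similar counting bound; combining this with $|F| = \opt(I) \leq (2-\delta)\binom{k}{2}$ yields $|E_G[W]| \geq \delta\binom{k}{2}$, contradicting the sparsity assumption. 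The main obstacle — and where I would spend the most care — is calibrating the gadget so that this covering-to-density accounting is tight enough: I need each vertex-candidate's pairwise deficit against $p$, and the adjacency structure of edge-votes, tuned so that ``$p$ overtakes $k$ vertices with budget $(2-\delta)\binom{k}{2}$'' genuinely forces a $\delta$-dense $k$-subgraph, with the constant $R = \lceil 4/\delta\rceil + 2$ from Lemma~\ref{lem:copeland-unit-fpt-red} emerging when Lemma~\ref{lem:copeland-red} is subsequently applied to convert to unit prices (choosing $B = R - 2$ and $B'$ large). The width-$2$ claim is immediate from the construction since each edge-vote has exactly its two endpoint-candidates between $p$ and the top.
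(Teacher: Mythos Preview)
Your proposal eventually converges on the paper's construction, but only after a false start and with the crucial parameters left unspecified. The paper's reduction is exactly your second attempt: vertex-candidates, edge-voters, width~$2$ because each edge-vote $v_e$ for $e=\{u_1,u_2\}$ has preference $u_1 \succ u_2 \succ p \succ \cdots$, and Proposition~\ref{prop:dummy} is invoked with $A=V_G$, $B=D$, $b=k-2$, $a=k+1$ so that $p$ trails every vertex-candidate by exactly $2k-3$ votes and $\scr_E(p)=|D|-k$. Your first paragraph (edges as candidates, vertices as voters) is the construction of Lemma~\ref{lem:copeland-01}, which has width $|V_G|$, not $2$; you correctly abandon it, but the resulting write-up reads as exploration rather than proof.

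The genuine gap is in the soundness accounting. You write that a vertex $w$ is overtaken ``only if \ldots at least $\deg_G(w) - (\text{something})$ of [its incident edge-votes] have been bribed''; this is wrong and is where the argument would break. The correct threshold has nothing to do with $\deg_G(w)$: since $p$'s deficit against each vertex-candidate is $2k-3$, overtaking $w$ requires at least $k-1$ bribed edges incident to $w$. Summing over the $k$ overtaken vertices $U$ gives
\[
  2\binom{k}{2} \;\leqs\; \sum_{u\in U}\sum_{e\in E^*}\ind[u\in e]
  \;=\; \sum_{e\in E^*}|e\cap U|
  \;\leqs\; |E^*| + |E^*\cap E_G[U]|
  \;\leqs\; (2-\delta)\binom{k}{2} + |E_G[U]|,
\]
whence $|E_G[U]|\geqs \delta\binom{k}{2}$. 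Your stated inequality ``$|E_G[W]| \geqs \binom{k}{2} - |F\setminus E_G[W]|$'' is off by a factor of two on the left-hand side of the double count and would not close the argument. Once you fix the deficit to $2k-3$ (equivalently, choose $b=k-2$ in Proposition~\ref{prop:dummy}) and carry out the edge-incidence double count above, the proof is exactly the paper's.
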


\begin{proof}
  Our input consists of graph $G = (V_G, E_G)$ and positive integer
  $k$.  We create an instance $I = (E,p,\psi)$ of
  $\copeland$-\textsc{Shift-Bribery}, where
  $E = (C,V,\{\succ_v\}_{v \in V})$ is an election, $p$ is the
  preferred candidate, and $\psi = (\psi_v)_{v\in V}$ is a collection
  of price functions. We let $C = V_G \cup D \cup \{p, d\}$, where $D$
  consists of additional $|V_G|+5$ dummy candidates and $d$ is a
  candidate who will be a winner in $E$ (prior to shifting $p$).  We
  form the voter collection, together with their preference orders and
  price functions, as follows:
  \begin{enumerate}
  \item For each edge $e = \{u_1, u_2\} \in E_G$, we create two voters,
    $v_e$ and $v'_e$, such that the preference order of $v_e$ is:
    \begin{align*}
      u_1 \succ_{v_e} u_2 \succ_{v_e} p \succ_{v_e} \left<C \setminus \{p, u_1, u_2\}\right>,
    \end{align*}
    and the preference order of $v'_e$ is the reverse of that of
    $v_e$. The cost of the two possible shifts for $v_e$ are ones,
    i.e., $\psi_{v_e}(1) = \psi_{v_e}(2) = 1$, whereas the cost of all
    possible shifts for $v_e'$ are infinity, i.e.,
    $\psi_{v'_e}(1) = \cdots = \psi_{v'_e}(|C| - 3) = \infty$.
  \item We invoke Proposition~\ref{prop:dummy} with
    $A = V_G, B = D, b = k - 2$ and $a = k + 1$ to create polynomially
    many additional 
    voters so that $p$ loses pairwise elections against every
    candidate in $V_G$ by $2k - 3$ votes,
    $\scr_E(p) = |D| - k, \scr_E(d) = |D|$ and, for all
    $c \in V_G \cup D$, $\scr_E(c) \leqs (|V_G| + |D| + 3)/2 < |D|$.
    Note that we can use Proposition~\ref{prop:dummy} to obtain these scores
    because for each edge $e$, the preference orders of $v_e$ and
    $v'_e$ cancel each other out.  For all these voters, let the price of all
    possible shifts be infinity.
  \end{enumerate}

  (Completeness) Suppose that there exists a set $U \subseteq V_G$ of
  $k$ vertices that induces a $k$-clique. Consider a shift action
  where for each edge $e \in E_G[U]$, we shift $p$ to the first
  position in the preference order of $v_e$. Clearly, this solution is
  of cost $\binom{k}{2}$. Moreover, for each vertex $u \in U$, the
  shifts switch the ranks of $u$ and
  $p$ in the preference orders of $(k - 1)$ voters
  corresponding to all the edges incident to $u$. Since in the
  original election $p$ was losing the pairwise election against $u$
  by $2k - 3$ voters, $p$ wins the pairwise election against $u$ after
  the shifts. As a result, after the shifts the score of $p$ is at
  least $|D| - k + |U| = |D|$, which is at least as high as the score
  of $d$ (note that our shifts did not change $d$'s score). Hence, $p$
  is a winner of the election.

  (Soundness) Suppose contrapositively
  that $\opt(I) \leqs (2 - \delta)\binom{k}{2}$. Suppose that
  $\bs = (s_v)_{v \in V}$ is a successful shift action with cost
  $\opt(I)$.
  Since we have all-or-nothing prices,
  we can assume that for every vote where $\bs$ shifts $p$, it shifts
  him or her to the top position in the vote.
  Note that no shift action of finite cost can change the
  score of $d$. As a result, $p$ must end up having score at least
  $|D|$. Observe also that no shift action of finite cost affects the
  relative order of $p$ and any candidate from $D$. This implies that,
  after applying $\bs$, $p$ must win pairwise elections against at
  least $k$ candidates in $V_G$; let $U \subseteq V_G$ denote the set
  of $k$ candidates that $p$ ends up wining pairwise elections
  against. (If $p$ wins against more than $k$ such candidates in $V_G$
  then pick $k$ of them arbitrarily.)

  Now, let $E^* \subseteq E_G$ denote the set of all edges $e \in E_G$
  such that at least one unit shift is applied to $v_e$ (i.e.,
  $s_{v_e} > 0$). Note that
  $|E^*| = \opt(I) \leqs (2 - \delta)\binom{k}{2}$. Now, let us
  consider each vertex candidate $v \in U$. Since $p$ loses pairwise
  election against $u$ by $2k - 3$ votes in $E$, but ends up winning
  the pairwise election after applying $\bs$, it must be that applying
  $\bs$ puts $p$ ahead of $u$ in at least $k - 1$ preference
  orders. More formally, this means that:
  \begin{align*}
    k - 1 \leqs \sum_{e \in E^*} \ind[u \in e].
  \end{align*}
  Summing the above inequality over all $u \in U$ gives:
  \begin{align*}
    2\binom{k}{2} &\leqs \sum_{u \in U} \sum_{e \in E^*} \ind[u \in e] 
                  = \sum_{e \in E^*} |e \cap U| \\
                  &\leqs \bigg(\sum_{e \in (E^* \cap E_G[U])} 2\bigg) +  \bigg(\sum_{e \in (E^* \setminus E_G[U])} 1\bigg) \\
                  &= 2|E^* \cap E_G[U]| + (|E^*| - |E^* \cap E_G[U]|) \\
                  &= |E^*| + |E^* \cap E_G[U]|
                  \leqs (2 - \delta) \binom{k}{2} + |E_G[U]|,
  \end{align*}
  which implies that $|E_G[U]| \geqs \delta \binom{k}{2}$. That is,
  the subgraph of $G$ induced by $U$ is a $k$-vertex subgraph with at
  least $\delta \binom{k}{2}$ edges, which concludes our proof.
\end{proof}

We are ready to prove Lemma~\ref{lem:copeland-unit-fpt-red}, which in
turn gives Theorem~\ref{thm:copeland-unit-fpt}.

\begin{proof}[Proof of Lemma~\ref{lem:copeland-unit-fpt-red}]
  Let graph $G = (V_G, E_G)$ and positive integer $k$ be our input.
  We first apply Lemma~\ref{lem:copeland-01-fpt} to produce an
  instance $I$ of $\copeland$-\textsc{Shift-Bribery}
  with $(1,\infty)$-all-or-nothing prices and, then, we
  invoke Lemma~\ref{lem:copeland-red} with $B
  = \lceil 4/\delta \rceil$ and $B' = B(|V_G|^4 +
  1)$ to produce an instance $I'$ with unit price functions.
  The reduction runs in polynomial time.

  (Completeness) If $G$ contains a $k$-clique, then
  Lemma~\ref{lem:copeland-01-fpt} ensures that
  $\opt(I) \leqs \binom{k}{2}$. Moreover, Lemma~\ref{lem:copeland-red}
  then ensures that
  $\opt(I') \leqs (B + 2) \opt(I) \leqs R \binom{k}{2}$, where
  $R = B+2 = \lceil 4/\delta \rceil + 2$, as desired.

  (Soundness) If every $k$-vertex subgraph of $G$ contains fewer than
  $\delta \binom{k}{2}$ edges, then Lemma~\ref{lem:copeland-01-fpt}
  implies that $\opt(I) > (2 -
  \delta)\binom{k}{2}$. Lemma~\ref{lem:copeland-red} then ensures that:
  \begin{align*} \opt(I') &\geqs \min\{B', B \cdot \opt(I)\} > B (2 -
    \delta) \binom{k}{2}\\
    &\geqs  R\left(1 - \delta/2\right)(2 - \delta)\binom{k}{2}
    \geqs R(2 - 2\delta)\binom{k}{2}.
  \end{align*}
  This completes the proof.
\end{proof}

\subsubsection{Parameterization by the Number of Affected Voters}

We now move on to the parameterization by the number of affected voters
(Theorem~\ref{thm:copeland-unit-fpt-voter}). Our result will rely on
the recent parameterized hardness of approximation result for Set Cover,
due to Chen and Lin~\cite{ChenL16}. Before we state their result, let
us briefly recall the (Minimum) Set Cover problem. An instance $(U,
\cS)$ of Set Cover consists of the universe $U = \{u_1, \dots, u_N\}$
and a collection $\cS = \{S_1, \dots, S_M\}$ of subsets of $U$. The
goal is to find a subcollection $\cS' \subseteq \cS$ of smallest size
such that $\bigcup_{S \in \cS'} S = U$. The result of Chen and Lin can
be stated as follows:

\begin{theorem}[Chen and Lin~\cite{ChenL16}] \label{thm:chen-lin} Set
  Cover is $\W[1]$-hard to approximate to within any constant factor
  when parameterized by the optimal solution size.
\end{theorem}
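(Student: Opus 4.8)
The plan is to prove the statement by reconstructing the reduction of Chen and Lin, whose two pillars are \emph{gap creation} from a $\W[1]$-hard problem and \emph{gap amplification} by a product construction. Since (Minimum) \textsc{Set Cover} and \textsc{Dominating Set} are interreducible under parameterized reductions that preserve the solution size up to an additive constant (map each set $S_j$ to a vertex adjacent to the universe elements it covers, and vice versa), it suffices to establish the claim for either problem. I would work with the \textsc{Set Cover} formulation and reduce from \textsc{Multicolored Clique}: given a graph whose vertex set is partitioned into $k$ color classes, decide whether there is a clique using one vertex of each color; this is a canonical $\W[1]$-complete problem.

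First I would recast \textsc{Multicolored Clique} as a $2$-CSP with $k$ variables $X_1, \dots, X_k$ (one per color class), each ranging over the vertices of its class, with a constraint for every pair $(i,j)$ asserting that the chosen vertices are adjacent; a multicolored clique corresponds exactly to an assignment satisfying all $\binom{k}{2}$ constraints. The heart of the argument is to turn this \emph{exact} satisfiability question into one with a constant multiplicative \emph{gap} in the natural cover cost, while keeping the parameter bounded by a function of $k$. To this end I would build a \textsc{Set Cover} instance whose sets are indexed by (variable, value) pairs and whose universe is engineered so that a globally consistent assignment (a clique) is covered by exactly $k$ sets, whereas any inconsistent selection leaves a constant fraction of the universe uncovered unless strictly more than $k$ sets are used. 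Concretely, the universe is split into ``constraint blocks'', one per pair $(i,j)$, and within each block one places a small \emph{partition system} (in the spirit of the Lund--Yannakakis/Feige set-cover gadgets) so that a block is fully covered cheaply precisely when the values chosen for $X_i$ and $X_j$ satisfy the edge constraint.

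With a single constant gap $\rho_0 > 1$ in hand, I would amplify it to an arbitrary constant $\rho$ by a product construction. Viewing the gap instance as a \textsc{MaxCover}-type problem (maximize the fraction of requirement blocks covered by a size-$k$ selection), its $h$-fold product multiplies the maximum coverable fraction, so the completeness-versus-soundness gap grows like $\rho_0^{\Omega(h)}$; taking $h$ a constant depending only on $\rho$ drives the gap past $\rho$ while the solution-size parameter of the product instance remains a function of $k$ alone, independent of the input graph's size. Translating the amplified instance back through the block-gadget reduction yields a \textsc{Set Cover} instance with optimum $\le K$ in the yes-case and $> \rho K$ in the no-case for some $K = K(k)$; an FPT $\rho$-approximation would then distinguish yes from no and hence decide \textsc{Multicolored Clique} in FPT time, contradicting $\W[1] \ne \FPT$.

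The step I expect to be the main obstacle is gap creation together with its behaviour under products: ordinary $\W[1]$-hardness reductions are exact and produce no gap, and naive product amplification of a general CSP need not decrease the satisfaction value at a controlled rate. The technical work lies in exploiting the special clique structure of the CSP so that the product's optimum cover cost factorizes cleanly, and in designing the partition-system blocks so that inconsistency is \emph{locally} detectable yet forces a \emph{global} constant-fraction covering deficit; getting these two pieces to compose while keeping the solution-size parameter bounded by $f(k)$ is the crux of the Chen--Lin argument.
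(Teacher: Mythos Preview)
The paper does not prove this theorem; it is quoted as a result of Chen and Lin~\cite{ChenL16} and used as a black box for the subsequent reductions (Lemmas~\ref{lem:inapprox-voters-1} and~\ref{lem:copeland-unit-voter-red}). There is therefore no ``paper's own proof'' to compare your proposal against.

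As for the proposal itself: the high-level architecture you describe---start from a $\W[1]$-hard problem such as \textsc{Multicolored Clique}, create a constant gap in a covering instance, then amplify the gap by a product construction while keeping the parameter a function of $k$ only---is indeed the shape of the Chen--Lin argument. A caveat is that the gadgetry you sketch (Lund--Yannakakis/Feige partition systems inside constraint blocks, the \textsc{MaxCover} viewpoint) is closer to the later framework of Karthik, Laekhanukit and Manurangsi~\cite{KLM18} than to Chen and Lin's original proof, which obtains the initial gap via a more direct combinatorial construction (threshold-type graph gadgets) and then tensors. Your identification of the crux---that gap creation must be done so that the product step amplifies cleanly while the solution-size parameter stays bounded by $f(k)$---is accurate, but as written the proposal is a plan rather than a proof: the ``partition system'' blocks and the factorization of optimum cover cost under products are asserted rather than constructed, and those constructions are exactly where the work lies.
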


We note that the above result has been qualitatively strengthened by
Karthik et al.~\cite{KLM18}. Indeed, if we were interested in getting
a super-constant hardness of approximation result, or a tight
running-time lower bound, the latter would give a better
result. However, we choose to state our hardness as simply as possible
(i.e., as Theorem~\ref{thm:copeland-unit-fpt-voter}) and, hence, 
it suffices to start from the result of Chen and Lin~\cite{ChenL16}.

Similarly to our previous results, we prove
Theorem~\ref{thm:copeland-unit-fpt-voter} in two steps. First, by
reducing Set Cover to $\copeland$-\textsc{Shift-Bribery} with $(1,
\infty)$-all-or-nothing prices, as stated below, and then by using
Lemma~\ref{lem:copeland-red}.

\begin{lemma} \label{lem:inapprox-voters-1} There is a polynomial time
  reduction that takes as input an instance $\tI = (U, \cS)$ of Set
  Cover and produces an instance $I$ of
  $\copeland$-\textsc{Shift-Bribery} with $(1, \infty)$-all-or-nothing
  prices such that $\opt(I) = \opt(\tI)$.
\end{lemma}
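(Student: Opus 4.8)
The plan is to reuse the two-step template of Lemmas~\ref{lem:copeland-01} and~\ref{lem:copeland-01-fpt}, but with the roles of ``coverers'' and ``covered objects'' playing the parts of voters and candidates: each set $S \in \cS$ becomes a voter whose only nontrivial bribe moves $p$ to the top, and each element of $U$ becomes a candidate that $p$ must overtake. Writing $U = \{u_1, \dots, u_N\}$ and identifying each $u_i$ with a candidate, I would take $C = U \cup D \cup \{p, d\}$, where $p$ is preferred, $d$ will be the (unique) winner before any shift, and $D$ is a set of dummy candidates large enough that $(|U| + |D| + 3)/2 < |D|$ and of the parity making $|C|$ odd, so that no pairwise election ever ties and the value of $\alpha$ is irrelevant throughout. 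For each $S_j \in \cS$ I create a voter $v_j$ with preference order $\langle S_j \rangle \succ_{v_j} p \succ_{v_j} \langle C \setminus (S_j \cup \{p\}) \rangle$ together with a voter $v_j'$ with the reversed order; the all-or-nothing price is $1$ for $v_j$ and $\infty$ for $v_j'$. Finally I invoke Proposition~\ref{prop:dummy} with $A = U$, $B = D$, $b = 0$, and $a = N+1$ to add polynomially many further voters, for whom every shift has price $\infty$. Because each pair $v_j, v_j'$ contributes nothing to any pairwise margin and keeps $|V|$ of fixed parity, the base election is exactly the one produced by the Proposition, so in $E$ the candidate $p$ loses every pairwise election against a candidate of $U$ by exactly one vote, $\scr_E(p) = |D| - N$, $\scr_E(d) = |D|$, and every candidate of $U \cup D$ has Copeland score strictly below $|D|$.

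The key observation, which gives both directions at once, is the following. For any finite-cost shift action $\bs$ we may assume that $\bs$ moves $p$ to the top in every bribed voter (all-or-nothing prices, and moving $p$ higher is never harmful), that the only voters it can bribe are of the form $v_j$, and that such a shift moves $p$ past exactly the candidates in $S_j$ -- never past $d$ or any dummy, which lie below $p$ in every $v_j$. Hence after applying $\bs$ the score of $p$ equals $(|D| - N)$ plus the number of candidates of $U$ that $p$ now beats, and since pairwise elections never tie, ``$p$ beats $u_i$'' holds iff some bribed $v_j$ has $u_i$ above $p$, i.e.\ iff $u_i \in S_j$. Using the standard monotonicity fact that raising $p$ can only weakly decrease every other candidate's Copeland score, $d$ keeps score $|D|$ and every candidate of $U \cup D$ stays strictly below $|D|$; therefore $p$ is a winner of $\shf(E, \bs)$ if and only if $p$ beats every candidate of $U$, which in turn holds if and only if $\{S_j : \bs \text{ bribes } v_j\}$ is a set cover of $U$.

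Given this equivalence, both directions are immediate. For completeness, a set cover of size $s$ yields a shift action bribing the $s$ corresponding voters, of cost exactly $s$, so $\opt(I) \leqs \opt(\tilde I)$. For soundness, any minimum-cost successful $\bs$ bribes voters whose sets form a set cover of size $\opt(I)$, so $\opt(\tilde I) \leqs \opt(I)$; and if $\tilde I$ admits no set cover, the same equivalence shows that no $\bs$ makes $p$ a winner, so $\opt(I) = \infty = \opt(\tilde I)$. Combining the two inequalities gives $\opt(I) = \opt(\tilde I)$, and the construction is clearly polynomial time.

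There is no serious obstacle here; the effort is in the bookkeeping. The single point that really matters is taking $b = 0$ in Proposition~\ref{prop:dummy}, which makes $p$'s deficit against each element-candidate a single vote, so that covering an element by one set corresponds to flipping exactly one pairwise election; and taking $a = N+1$ is what makes $p$'s score reach exactly $\scr_E(d)$ once $p$ beats all of $U$. Everything else -- that no third candidate can reach the winning score, and that partial shifts may be assumed to be shifts to the top -- follows, as in the previous reductions, from the monotonicity of Copeland scores under upward shifts of $p$ and from $|C|$ being odd.
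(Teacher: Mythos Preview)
Your proposal is correct and essentially identical to the paper's own proof: the same candidate set $U \cup D \cup \{p,d\}$, the same paired voters $v_S, v'_S$ for each set $S \in \cS$ with prices $1$ and $\infty$, and the same invocation of Proposition~\ref{prop:dummy} with $A = U$, $B = D$, $b = 0$, $a = N+1$. The paper fixes $|D| = N+5$ explicitly (which satisfies your parity and size conditions), but otherwise the construction and the two-direction argument match yours line for line.
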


Note that the above lemma, together with Theorem~\ref{thm:chen-lin},
immediately yields hardness in
Theorem~\ref{thm:copeland-unit-fpt-voter} for $(1,
\infty)$-all-or-nothing prices. The proof of
Lemma~\ref{lem:inapprox-voters-1} is similar to the previous proofs
except that now (some) voters correspond to the subsets and (some)
candidates correspond to the elements of the universe; this can be
seen as a modification of the $\W[2]$-hardness proof of Bredereck et
al.~\cite{BCFNN16}.

\begin{proof}[Proof of Lemma~\ref{lem:inapprox-voters-1}]
  We are given an instance $\tI = (U, \cS)$ of Set Cover where $U =
  \{u_1, \dots, u_N\}$ and $\cS = \{S_1, \dots, S_M\}$. We create an
  instance $I = (E,p,\psi)$ of $\copeland$-\textsc{Shift-Bribery},
  where $E = (C,V,\{\succ_v\}_{v \in V})$ is an election, $p$ is the
  preferred candidate, and $\psi = (\psi_v)_{v\in V}$ is a collection
  of price functions. We let $C = U \cup D \cup \{p, d\}$, where $D$
  consists of additional $N + 5$ dummy candidates and $d$ is a
  candidate who will be a winner in $E$ (prior to shifting $p$).  We
  form the voter collection, together with their preference orders and
  price functions, as follows:
  \begin{enumerate}
  \item For each subset $S \in \cS$, we create two voters,
    $v_S$ and $v'_S$, such that the preference order of $v_S$ is:
    \begin{align*}
      \left<S\right> \succ_{v_S} p \succ_{v_S} \left<C \setminus (\{p\} \cup S)\right>,
    \end{align*}
    and the preference order of $v'_S$ is the reverse of that of
    $v_S$. The cost of the all possible shifts for $v_S$ are ones,
    i.e., $\psi_{v_S}(1) = \cdots = \psi_{v_S}(|S|) = 1$,
    whereas the cost of all
    possible shifts for $v_S'$ are infinity, i.e.,
    $\psi_{v'_S}(1) = \cdots = \psi_{v'_S}(|C| - |S| - 1) = \infty$.
  \item We invoke Proposition~\ref{prop:dummy} with
    $A = U, B = D, b = 0$ and $a = N + 1$ to create polynomially
    many additional 
    voters so that $p$ loses pairwise elections against every
    candidate in $U$ by $1$ vote,
    $\scr_E(p) = |D| - N, \scr_E(d) = |D|$ and, for all
    $c \in U \cup D$, $\scr_E(c) \leqs (|U| + |D| + 3)/2 < |D|$.
    Note that we can use Proposition~\ref{prop:dummy} to obtain these scores
    because for each $S \in \cS$, the preference orders of $v_S$ and
    $v'_S$ cancel each other out. For all these voters, let the price of all
    possible shifts be infinity.
  \end{enumerate}

  Clearly the reduction runs in polynomial time. We next argue that $\opt(I) = \opt(\tI)$.

  To do so, let us first argue that $\opt(I) \leq \opt(\tI)$.
  Let $\cS' \subseteq \cS$ be such that $|\cS'| = \opt(\tI)$ and
  $\bigcup_{S \in \cS'} S = U$.
  Consider a shift action that shifts $p$ to the first position in
  the preference order of $v_S$ for all $S \in \cS'$.
  This shift action is of cost $\opt(\tI)$.
  Moreover, since $\bigcup_{S \in \cS'} S = U$, this shift action switches
  the rank of $p$ and $u$ in at least one voter for each $u \in U$.
  Since $p$ was losing the pairwise election against $u$ by one vote,
  $p$ wins the pairwise election after the shifts.
  Hence, the score of $p$ after the shifts become $|D| - N + N = |D|$,
  which is at least as high as the score of $d$
  (note that our shifts did not change $d$'s score).
  Hence, $p$ is a winner of the election after the shifts,
  which implies that $\opt(I) \leq \opt(\tI)$ as desired.

  Finally, we argue that $\opt(I) \geq \opt(\tI)$.
  Suppose that $\bs = (s_v)_{v \in V}$ is a successful shift action
  with cost $\opt(I)$. Since we have all-or-nothing prices,
  we can assume that for every vote where $\bs$ shifts $p$, it shifts
  him or her to the top position in the vote.
  Note that no shift action of finite cost can change the score of $d$.
  As a result, $p$ must end up having score at least
  $|D|$. Observe also that no shift action of finite cost affects the
  relative order of $p$ and any candidate from $D$. This implies that,
  after applying $\bs$, $p$ must win pairwise elections against all
  $n$ candidates in $U$. Let $\cS'$ denote the collection of all subsets $S$
  such that at least one unit shift is applied to $v_S$ (i.e. $s_{v_S} > 0$).
  Consider any element $u \in U$. Since $p$ loses the pairwise election
  against $u$ in the original election but wins after the shifts,
  it must be that the shifts switch the order of $p$ and $u$ in
  at least one voter. This is equivalent to: $u \in S$ for some $S \in \cS'$.
  In other words, we have $\bigcup_{S \in \cS'} S = U$,
  which implies that $\opt(\tI) \leq |\cS'| = \opt(I)$ as desired.
\end{proof}

Next we use Lemma~\ref{lem:copeland-red} with appropriate values of
$B$ and $B'$ to translate the hardness from $(1,
\infty)$-all-or-nothing prices to unit prices. The properties of the
reduction from Set Cover to $\copeland$-\textsc{Shift-Bribery} with
unit prices are summarized in the lemma below.

\begin{lemma} \label{lem:copeland-unit-voter-red} There is a
  polynomial time reduction that takes as input an instance $\tI = (U,
  \cS)$ of Set Cover and produces an instance $I'$ of
  $\copeland$-\textsc{Shift-Bribery} with unit prices and $B \in
  \mathbb{N}$ such that
\begin{enumerate}[(i)]
\item any minimum cost shift action affects at most $\opt(\tI)$ voters, and, 
\item $B \cdot \opt(\tI) \leqs \opt(I') \leqs 2B \cdot \opt(\tI)$.
\end{enumerate}
\end{lemma}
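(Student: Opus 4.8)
The plan is to compose the two reductions from the previous sections, exactly as in the proofs of Lemma~\ref{lem:copeland-unit} and Lemma~\ref{lem:copeland-unit-fpt-red}: first pass through Lemma~\ref{lem:inapprox-voters-1} to reach $(1,\infty)$-all-or-nothing prices, then through Lemma~\ref{lem:copeland-red} to reach unit prices, with the parameters $B$ and $B'$ chosen so that the multiplicative blow-up is exactly $2$ and so that the ``moreover'' clause of Lemma~\ref{lem:copeland-red} applies.

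First I would apply Lemma~\ref{lem:inapprox-voters-1} to $\tI = (U, \cS)$, writing $N = |U|$ and $M = |\cS|$, obtaining an instance $I$ of $\copeland$-\textsc{Shift-Bribery} with $(1, \infty)$-all-or-nothing prices and $\opt(I) = \opt(\tI)$. Inspecting that construction, the only voters with unit price are the $v_S$ for $S \in \cS$, and $p$ is ranked just behind the $|S|$ candidates of $S$ in $v_S$'s order, so $I$ has width $b = \max_{S \in \cS}|S| \leq N$. We may assume $\tI$ is feasible (otherwise we output a trivial infeasible instance), so that $\opt(\tI) \leq M$. Next I would invoke Lemma~\ref{lem:copeland-red} on $I$ with $B = bM + 1$ and $B' = BM$ (note $B \leq B'$ since $M \geq 1$), obtaining the unit-price instance $I'$; since $B$ and $B'$ are bounded by a polynomial in $N$ and $M$, the whole reduction runs in polynomial time. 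For the upper bound, Lemma~\ref{lem:copeland-red} gives $\opt(I') \leq (B + b)\opt(I) \leq 2B\,\opt(I) = 2B\,\opt(\tI)$, where $B + b \leq 2B$ holds because $b < B$. For the lower bound, Lemma~\ref{lem:copeland-red} gives $\opt(I') \geq \min\{B', B\,\opt(I)\}$, and since $\opt(I) = \opt(\tI) \leq M$ we have $B' = BM \geq B\,\opt(I)$, so this minimum equals $B\,\opt(\tI)$; this is property~(ii). Finally, $b \cdot \opt(I) = b \cdot \opt(\tI) \leq bM < B$, so the ``moreover'' part of Lemma~\ref{lem:copeland-red} tells us that every minimum cost successful shift action in $I'$ affects only $\opt(I) = \opt(\tI)$ voters, which is property~(i).

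The only place that needs genuine care is the simultaneous choice of $B$ and $B'$: the factor-$2$ upper bound forces $b \leq B$, the lower bound forces $B' \geq B \cdot \opt(I)$, and the affected-voters conclusion forces $B > b \cdot \opt(I)$. The observation that makes all three compatible with polynomially bounded $B$ and $B'$ is simply that $\opt(\tI) \leq M$, so taking $B$ slightly above $bM$ and $B'$ about $bM^2$ suffices; everything else is a direct substitution into the two cited lemmas.
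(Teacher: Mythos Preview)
Your proposal is correct and follows essentially the same route as the paper's own proof: apply Lemma~\ref{lem:inapprox-voters-1}, then Lemma~\ref{lem:copeland-red} with $B = bM + 1$ and $B' = BM$, and read off properties~(i) and~(ii) from the conclusions of Lemma~\ref{lem:copeland-red}. The parameter choices, the inequalities, and the use of the ``moreover'' clause all match the paper exactly.
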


We remark that Lemma~\ref{lem:copeland-unit-voter-red} immediately implies
the hardness for unit prices stated in
Theorem~\ref{thm:copeland-unit-fpt-voter}. This is because (i) implies that
the reduction is FPT (i.e. the parameter in the new problem is no more than
the previous parameter), and (ii) implies that, for any constant $\tau$,
a $\tau$-approximation for $\opt(I')$ would give a $(2\tau)$-approximation for
$\opt(\tI)$; the latter is $\W[1]$-hard (Theorem~\ref{thm:chen-lin}).

\begin{proof}[Proof of Lemma~\ref{lem:copeland-unit-voter-red}]
  Let an instance $\tI = (U, \cS)$ of Set Cover be our input.
  We first apply Lemma~\ref{lem:inapprox-voters-1} to produce an
  instance $I = (E, \psi, p)$ of $\copeland$-\textsc{Shift-Bribery}
  with $(1,\infty)$-all-or-nothing prices with election
  $E = (C, V, \{\succ_v\}_{v \in V})$.
  We then invoke Lemma~\ref{lem:copeland-red} with $B
  = b \cdot M + 1$ and $B' = B \cdot M$ to produce
  an instance $I'$ with unit price functions.

  The reduction clearly runs in polynomial time.
  Moreover, notice that $E$ has width $b$ and $b \leqs N < |C|$.
  Since $B > b \cdot M \geqs b \cdot \opt(I)$,
  Lemma~\ref{lem:copeland-red} implies that any optimal successful shift
  affects at most $\opt(I) = \opt(\tI)$ voters as desired.

  Lemma~\ref{lem:copeland-red} also yields
  $\opt(I') \geqs \min\{B', B \cdot \opt(I)\} = B \cdot \opt(\tI)$ and
  $\opt(I') \leqs (B + b) \cdot \opt(I) \leqs (2B) \cdot \opt(\tI)$,
  which completes our proof.
\end{proof}

\section{Conclusions}
We have given the first PTAS for \textsc{Shift-Bribery} for the case
of positional scoring rules, and we have shown severe limitations
regarding approximability of $\copeland$-\textsc{Shift-Bribery}. We
have also shown more efficient versions of our algorithms for the case
of the Borda rule with unit prices.
Our PTAS improves upon the $2$-approximation algorithm of Elkind et
al.~\cite{EFS09,EF10}, but their algorithm is quite robust and was
used, e.g., for combinatorial shift bribery~\cite{BFNT16-combi} and
bribery in approval elections~\cite{FST17}. It may be possible to
apply our technique in these settings as well.

Another interesting direction is to see whether our ideas can be
applied to the \textsc{Bribery} problem, where the goal is to minimize
the number of bribed voters but we are allowed to change each bribed
voter's preference arbitrarily. On this front, Keller et
al.~\cite{KellerHH18} give a PTAS for the problem for Borda,
$t$-approval, and, more generally, any scoring rule that satisfies a
certain technical condition.  It remains open whether a PTAS exists
for all scoring rules.

\section*{Acknowledgments}

Piotr Faliszewski was supported by AGH University statutory research
grant~11.11.230.337.
Pasin Manurangsi was supported by NSF Awards CCF-1813188 and CCF-1815434.
Krzysztof Sornat was partially supported by
the Foundation for Polish Science (FNP) within the START programme,
the National Science Centre, Poland,
grant numbers 2015/17/N/ST6/03684, 2018/28/T/ST6/00366 and
the SNSF Grant 200021\_159697/1.
We would like to thank the anonymous reviewers for their helpful comments.

\bibliography{main}
\bibliographystyle{alpha}

\appendix

\section{NP-hardness of Uniform-All-or-Nothing
  Borda-\textsc{Shift-Bribery}} \label{app:np-hardness-uni}

Elkind, Faliszewski and Slinko~\cite{EFS09} showed $\NP$-hardness of
Borda-\textsc{Shift-Bribery} using $(1,\infty)$-all-or-nothing prices,
whereas Bredereck et al.~\cite{BCFNN16} adapted their proof to the
case of unit prices. In this section we show that $\NP$-hardness holds
already for uniform-all-or-nothing prices.

Our reduction is similar in spirit to the one used by Elkind et
al.~\cite{EFS09}, but to avoid using $\infty$-prices, we have to
provide some more structure.  To this end, we reduce from the \vccubic
problem, that is, from the \textsc{Vertex Cover} problem on
$3$-regular graphs; it is well known that \vccubic remains
$\NP$-hard~\cite{GJS76,FHJ98}.  Formally, in \vccubic we are given a
graph $G=(V_G,E_G)$, where each vertex has degree exactly three, and a
positive integer $k$. We ask if there is a subset $A \subseteq V_G$ of
at most $k$ vertices such that each edge is incident to at least one
vertex from $A$.

The idea of the reduction is to introduce $n_G = |V_G|$ pairs of
voters that represent the vertices of $G$, and $m_G = |E_G|$
candidates that represent the edges of $G$. We interpret bribing a
voter (and shifting the preferred candidate to the top position in his
or her preference order) as including a given vertex in the cover.
Consequently, shifting a given candidate from $E_G$ back (by one
position) is interpreted as covering the corresponding edge.
The constructed instance of uniform-all-or-nothing
Borda-\textsc{Shift-Bribery} is a \emph{yes}-instance if and only if
all the candidates $E_G$ can be shifted down by at least one position
within the given budget. There are also additional voters that are
formed in such a way that bribing them is never beneficial ($p$ is
high in their rankings), but they increase the scores of the
candidates from $E_G$ so that they have more points than $p$. There
are also additional dummy candidates that are responsible for fixing
the relative scores of our main candidates.

\begin{theorem}
  Borda-\textsc{Shift}-\textsc{Bribery} with uniform-all-or-nothing prices
  in $\NP$-hard.
\end{theorem}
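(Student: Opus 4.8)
The plan is to reduce from \vccubic, which is $\NP$-hard even on $3$-regular graphs. Given such a graph $G=(V_G,E_G)$ with $n_G=|V_G|$ and $m_G=|E_G|$ (so $3n_G=2m_G$) and a target cover size $k\le n_G$, I build a Borda-\textsc{Shift-Bribery} instance with candidate set $C=\{p\}\cup E_G\cup D$, where $D$ is a polynomial-size set of dummy candidates. For each vertex $u$ I create an \emph{active} voter $v_u$ ranking the three edges of $\Gamma_G(u)$ in positions $1,2,3$, the preferred candidate $p$ in position $4$, and all remaining candidates below $p$. For each $u$ I also add a companion voter $v'_u$ that ranks $p$ in position $2$, directly below one dummy, and I add a further collection of ``filler'' voters of the same shape (a dummy, then $p$, then the rest). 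All prices are uniform-all-or-nothing, so a solution may be taken to be a set of bribed voters, each with $p$ moved to the top of its order, at a cost of one per bribed voter. The dummies and fillers serve only to set the base scores: I arrange that $\scr_E(e)=\scr_E(p)+3k+1$ for every $e\in E_G$ while $\scr_E(d)<\scr_E(p)$ for every $d\in D$. ($3$-regularity is what makes all the $v_u$ uniform, with $p$ always in position $4$.)

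The analysis rests on two elementary observations. First, bribing $v_u$ raises $\scr(p)$ by exactly $3$ and lowers $\scr(e)$ by exactly $1$ for each $e\in\Gamma_G(u)$, changing no other candidate's score. Second, bribing a non-active voter raises $\scr(p)$ by exactly $1$, lowers a single dummy by $1$, and does not touch any edge candidate; in particular $\scr(p)$ is non-decreasing under any bribes, and every dummy remains below $p$. Now take any set $A$ of at most $k$ bribed voters, let $U\subseteq V_G$ be the vertices whose active voter lies in $A$, and let $A_0$ be the non-active voters in $A$. Then $\scr(p)=\scr_E(p)+3|U|+|A_0|$ and $\scr(e)=\scr_E(p)+3k+1-|e\cap U|$ for every edge $e$, so $p$ wins if and only if $3|U|+|A_0|+|e\cap U|\ge 3k+1$ for all $e\in E_G$. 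Using $|U|+|A_0|\le k$ and $|e\cap U|\le 2$, this inequality forces $2|U|+k+2\ge 3k+1$, hence $|U|\ge k$, hence $|U|=k$ and $A_0=\emptyset$; and then the condition is exactly that $|e\cap U|\ge 1$ for every edge, i.e.\ that $U$ is a vertex cover of size $k$. Conversely, any vertex cover of size at most $k$ may be padded to size exactly $k$ (possible since $k\le n_G$), and bribing the corresponding $k$ active voters makes $p$ a winner. Hence $p$ can be made a winner within budget $k$ if and only if $G$ has a vertex cover of size at most $k$; the construction is polynomial-time, so $\NP$-hardness follows.

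The one step that needs genuine care is constructing the score-tuning gadget: one has to realize the exact profile above ($\scr_E(e)=\scr_E(p)+3k+1$ for all edges $e$, all dummies strictly below $p$) using only \emph{legal} uniform-all-or-nothing price functions, while keeping $p$ in position $4$ of every active voter (so a bribe there has the intended $+3$/$-1$ effect) and in position $2$ of every other voter (so a bribe there strictly wastes budget, which the soundness argument implicitly exploits). This is precisely the extra structure needed compared with the $(1,\infty)$-all-or-nothing reduction of Elkind et al.~\cite{EFS09}: there the ``useless'' voters carry price $\infty$ and can never be bribed, whereas here they are cheap, so one must instead prove that bribing them never helps, all the while keeping every base score pinned to its prescribed value.
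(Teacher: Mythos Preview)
Your high-level plan and the soundness inequality are clean, but the score-tuning gadget you defer to the last paragraph is not merely ``the one step that needs genuine care'' --- as specified, it is impossible. You require $\scr_E(e)=\scr_E(p)+3k+1$ for every edge candidate $e$, yet in \emph{every} voter you place $p$ in position $2$ (non-active) or position $4$ (active). For any edge $e=\{u_1,u_2\}$ we then have
\[
\scr_E(e)-\scr_E(p)=\sum_{v}\bigl(\mathrm{pos}_v(p)-\mathrm{pos}_v(e)\bigr)
\le \underbrace{3+3}_{v_{u_1},v_{u_2}}+\underbrace{(n_G-2)(-1)}_{\text{other active}}+\underbrace{N\cdot(-1)}_{\text{non-active}}=8-n_G-N,
\]
since in every voter other than $v_{u_1},v_{u_2}$ the candidate $e$ sits strictly below $p$. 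For any $3$-regular graph with more than a handful of vertices this quantity is negative, so no edge can beat $p$ at all, let alone by $3k+1$. In short, putting $p$ near the top of every vote (so that ``useless'' bribes are cheap in \emph{points}) makes it impossible to give the edge candidates the lead your analysis requires.

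This is exactly why the paper's construction looks different: there the active voters rank all of $D$ above the three incident edges, so bribing $v_u$ gains $|D|+3$ points rather than $3$, while the auxiliary voters place $p$ just below $\langle E_G\rangle$ (position $m_G+1$ or $m_G+2$). Thus edges can sit well above $p$ in the base election, yet bribing an auxiliary voter still gains strictly fewer points for $p$ than bribing a vertex voter; a direct score computation (not your inequality trick) then forces every optimal bribe set into the first group. To salvage your approach you would likewise need to let $p$ sit low in some voters --- which reintroduces the difficulty of proving those voters are never worth bribing, and that argument is the actual content of the proof.
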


\begin{proof}
  Recall that $\left< B \right>$
  denotes an arbitrary (but fixed) preference order over subset of
  candidates $B$ and $\overleftarrow{\left<B\right>}$ denotes the
  reverse of the order~$\left<B\right>$.

  Let $(G,k)$ be an instance of \vccubic, let $n_G = |V_G|$, and let
  $m_G = |E_G|$. Note that $m_G = \frac{3}{2}n_G$ (because the graph
  is $3$-regular) and $k < n_G$ (otherwise we have a trivial
  \emph{yes}-instance; we also assume that $k \geq 3$).
  We construct an instance $(E,p,k)$ of the
  decision version of the uniform-all-or-nothing
  Borda-\textsc{Shift-Bribery} as follows.  We let the candidate set
  be $C = \{p\} \cup E_G \cup D$, where $D$ is a set of $3n_G-1$ dummy
  candidates.  We write $t$ to denote the highest-ranked dummy
  candidate in the order $\left<D\right>$.  Below we describe our
  collection of voters:
  \begin{enumerate}
  \item For each $u \in V_G$, we introduce two voters, $v_u$ and $v'_u$,
    with the following preference orders ($a,b,c \in E_G$ are the edges
    incident to $u$):
    \begin{align*}
      v_u: &\; \left<D\right> \succ a \succ b \succ c \succ p \succ \left<E_G
            \setminus \{a,b,c\}\right>\\
      v'_u: &\; \overleftarrow{\left<D\right>} \succ c \succ b \succ a \succ p
             \succ \overleftarrow{\left<E_G \setminus \{a,b,c\}\right>}.
    \end{align*}
  \item Let $L = n_G+2k-5$. For each $i \in [L]$, we introduce two
    voters, $v^E_i$ and $v^{E'}_i$, with the following preference
    orders:
    \begin{align*}
      v^E_i: &\; \left<E_G\right> \succ p \succ \left<D\right>, \phantom{123456789}\\
      v^{E'}_i:&\; \overleftarrow{\left<E_G\right>} \succ p \succ \overleftarrow{\left<D\right>}.\phantom{123456789}
    \end{align*}
  \item Two voters, $v_{-2}$ and $v_{-1}$, with the following preference orders:
    \begin{align*}
      v_{-2}: &\; \left<E_G\right> \succ t \succ p \succ \left<D \setminus \{t\}\right>,\\
      v_{-1}: &\; \overleftarrow{\left<E_G\right>} \succ p \succ \overleftarrow{\left<D\right>}.
    \end{align*}
  \end{enumerate}
  We note that $|C| = \frac{9}{2}n_G$ and
  $|V| = 2n_G+n_G+2k-5+2 \leqs 5n_G-3$. We see that the reduction can
  be computed in polynomial time.

  Let us now calculate the scores of candidate $p$, some arbitrary
  candidate $e \in E_G$, some arbitrary candidate $d \in D \setminus \{t\}$, and
  candidate $t$ (recall that $t \in D$ is ranked first
  in the order $\left<D\right>$):
  \begin{align*}
    \scr_E(p) &= \underbrace{(2n_G) \cdot (m_G-3)}_{\substack{\text{$m_G-3$ points from each}\\ \text{pair of voters $v_u, v'_u$}}} + \underbrace{2L \cdot |D|}_{\substack{\text{$|D|$ points from each}\\ \text{pair of voters $v^E_i, v^{E'}_i$}}} + \underbrace{(3n_G-2) + (3n_G-1)}_{\text{points from voters $v_{-2}$ and $v_{-1}$}}\\
    &= 9n_G^2 - 32n_G + 12n_Gk - 4k + 7, \\ \\
    \scr_E(e) &= \underbrace{2 \cdot (2m_G-2)}_{\substack{\text{points from two pairs of}\\ \text{voters, $v_{u_1}, v'_{u_1}$ and $v_{u_2}, v'_{u_2}$.} \\ \text{such that $e = \{u_1,u_2\}$}}} + \underbrace{(n_G-2) \cdot (m_G-4)}_{\substack{\text{points from remaining}\\ \text{pairs of voters $v_u$, $v'_u$}}} + \underbrace{(L+1)\cdot (2|D|+m_G+1)}_{\substack{\text{points from each pair of voters} \\ \text{$v^E_i, v^{E'}_i$, and from the pair $v_{-2}, v_{-1}$}}}\\
    &= 9n_G^2 - 32n_G + 15n_Gk - 2k + 8,\\  \\
    \scr_E(d) &= \underbrace{n_G \cdot (|D|+2m_G+1)}_{\substack{\text{points from each}\\ \text{pair of voters, $v_{u}, v'_{u}$}}} + \underbrace{(L+1)\cdot (|D|-1)}_{\substack{\text{points from each pair of voters} \\ \text{$v^E_i, v^{E'}_i$, and from the pair $v_{-2}, v_{-1}$}}}\\
              &= 9n_G^2 - 14n_G + 6n_Gk - 4k + 8, \\ \\
    \scr_E(t) &= \underbrace{\scr_E(d) + 1}_{\text{due to $v_{-2}$}}.
  \end{align*}

  One can readily verify that each edge candidate $e \in E_G$ has
  higher score than each candidate from~$D$. It is also clear that,
  prior to bribing voters, $p$ is not a winner of our election. We
  claim that $p$ can become a winner by bribing at most $k$ voters if
  and only if there is a cover for $G$ of size a most $k$.

  $(\Rightarrow)$ First, we show that if $(G,k)$ is a
  \emph{yes}-instance then $(E,p,k)$ also is a \emph{yes}-instance.
  Let $A \subseteq V_G$ be a size-$k$ set of vertices such that every
  edge is incident to at least one member of $A$; let $V_A$ be the set
  of corresponding voters, i.e., $V_A = \{ v_u \mid u \in A \}$.  Let
  $E'$ be an election obtained by shifting $p$ to the top position in
  all the votes in $V_A$ (i.e., we bribe the voters in the set
  $V_A$). Then we have:
  \begin{enumerate}
  \item
    $\scr_{E'}(p) = \scr_{E}(p) + k(|D|+3) = 9n_G^2 -32n_G +15n_G k
    -2k+7$, because $p$ gains $|D|+3$ points for each bribed voter.
  \item For each $e \in E_G$, we have
    $\scr_{E'}(e) \leqs \scr_{E}(e)-1=\scr_{E'}(p)$,
    because, as $A$ is a vertex cover,
    each edge candidate loses at least one point.
  \item For all $d \in D \setminus \{t\}$,
    $\scr_{E'}(d) = \scr_{E}(d)-k \leqs \scr_{E'}(p)$, because $d$
    loses $1$ point for each bribed voter.
    \item Lastly, $\scr_{E'}(t) = \scr_{E'}(d)+1 \leqs \scr_{E'}(p)$.
  \end{enumerate}
  The inequalities in the last two items hold because we assumed that
  $n_G > k \geqs 3$.
  It shows that $p$ has at least as many points as the other
  candidates hence $p$ is a winner.

  $(\Leftarrow)$ Next, we show that if $(E,p,k)$ is a
  \emph{yes}-instance of uniform-all-or-nothing Borda-\textsc{Shift-Bribery},
  then $(G,k)$ is a \emph{yes}-instance of \vccubic.  Let
  $V_A$ be a set of $k$ voters such that if we push $p$ to be ranked
  first by each of them, then $p$ becomes a winner of the election.
  After bribing $k$ voters, the score of $p$ increases at most by
  $k(|D|+3)$ (this happens if we bribe voters of the form $v_u$,
  $v'_u$, for $u \in V_G$; all the other voters rank $p$ higher and,
  thus, bribing them gives lower score increase for $p$). Thus, after
  bribing $k$ voters the score of $p$ is at most (by $e$ we mean an
  arbitrary edge candidate; all these candidates have identical
  scores):
  \[9n_G^2 -32n +7 +15nk -2k = \scr_{E}(e)-1. \]
  If $V_A$ contained at least one voter from the second or the third group,
  the score of p would be too low.
  Hence, $V_A$ contains voters from the first group only,
  and for each $e \in E_G$,
  $V_A$ must contain a voter that ranks $p$ below $e$. However, this
  means that the voters in $V_A$ must correspond to a vertex
  cover for $G$ of size at most $k$
  ($V_A$ may contain a pair of voters $v_u$, $v'_u$).
\end{proof}

The above proof gives some flexibility in fixing the scores of $p$ and
the edge candidates. Using this flexibility, one might be able to
obtain $\NP$-hardness results for scoring rules similar to Borda
(e.g., scoring rules that give some additional advantage for being
ranked first, such as those used in Formula~1 racing).

\end{document}